\renewcommand{\@algocf@capt@plain}{above}
\newcommand{\tens}[1]{\boldsymbol{\mathcal{#1}}}
\newcommand{\tenselem}[1]{\mathcal{#1}}
\newcommand{\tY}{\tens{Y}}
\newcommand{\tG}{\tens{G}}
\newcommand{\matr}[1]{\boldsymbol{#1}}
\newcommand{\mA}{\matr{A}}
\newcommand{\mB}{\matr{B}}
\newcommand{\mC}{\matr{C}}
\newcommand{\mI}{\matr{I}}
\newcommand{\mP}{\matr{P}}
\newcommand{\mU}{\matr{U}}
\newcommand{\mV}{\matr{V}}
\newcommand{\mW}{\matr{W}}
\newcommand{\mZ}{\matr{Z}}
\newcommand{\bmx}{\begin{bmatrix}}
\newcommand{\emx}{\end{bmatrix}}
\newcommand{\bsm}{\left[\begin{smallmatrix}}
\newcommand{\esm}{\end{smallmatrix}\right]}
\newcommand{\vect}[1]{\boldsymbol{#1}}
\newcommand{\mlprod}[4]{[\![#1;\, #2,#3,#4]\!]}
\newcommand{\cpd}[3]{[\![#1, #2,#3]\!]}
\newcommand{\kr}{\odot}     
\newcommand{\kron}{\mathop{\boxtimes}}   
\newcommand{\con}{\mathop{\bullet}}   
\newcommand{\T}{{\sf T}}        
\newcommand{\vecl}[1]{\mathop{\operator@font vec}\{#1\}}
\newcommand{\rank}[1]{\mathop{\operator@font rank}\{#1\}}
\newcommand{\colrank}[1]{\mathop{\operator@font colrank}\{#1\}}
\newcommand{\krank}[1]{\mathop{\operator@font krank}\{#1\}}
\newcommand{\trace}[1]{\mathop{\operator@font trace}\{#1\}}
\newcommand{\Diag}[1]{\mathop{\operator@font Diag}\{#1\}}    
\newcommand{\diag}[1]{\mathop{\operator@font diag}\{#1\}}    
\newcommand{\Span}[1]{\mathop{\operator@font Span}\{#1\}}    
\newcommand{\argmin}{\mathop{\operator@font argmin}}
\newcommand{\cond}[1]{\mathop{\operator@font cond}\{#1\}}
\newcommand{\floor}[1]{\lfloor #1 \rfloor}
\newcommand{\RR}{\mathbb{R}}
\newcommand{\unfold}[2]{\matr{#1}^{(#2)}}
\newcommand{\contr}[1]{\con_{#1}}
\newtheorem{definition}{Definition}[section]
\newtheorem{theorem}[definition]{Theorem}
\newtheorem{proposition}[definition]{Proposition}
\newtheorem{corollary}[definition]{Corollary}
\newtheorem{remark}[definition]{Remark}
\newcommand{\OurAlgo}{{SCOTT} }
\begin{document}
\title{Hyperspectral Super-Resolution\\ with Coupled Tucker Approximation: \\ Recoverability and SVD-based algorithms}

\author{Cl{\'e}mence~Pr{\'e}vost,~\IEEEmembership{Student Member,~IEEE,}
Konstantin~Usevich$^*$, ~\IEEEmembership{Member,~IEEE,}\\
Pierre~Comon,~\IEEEmembership{Fellow,~IEEE,} 
and~David~Brie~\IEEEmembership{Member,~IEEE,}%
\thanks{This  work  was  partially  supported  by  the  ANR (Agence Nationale de Recherche) under  grants OPTIFIN (ANR-15-CE10-0007) and LeaFleT (ANR-19-CE23-0021).}%
\thanks{C. Pr{\'e}vost, K. Usevich and D. Brie are with Centre de Recherche en Automatique de Nancy (CRAN), Universit{\'e} de Lorraine, CNRS, Boulevard des Aiguillettes, BP 70239, F-54506 Vandoeuvre-l{\`e}s-Nancy, France. 
P. Comon is with  CNRS, GIPSA-Lab, Univ.  Grenoble Alpes, F-38000 Grenoble, France.}
\thanks{Email: clemence.prevost@univ-lorraine.fr, konstantin.usevich@univ-lorraine.fr, pierre.comon@gipsa-lab.fr, david.brie@univ-lorraine.fr. Fax: +33 383684437. Tel.:~+33 372745313 (K. Usevich).}%
\thanks{*Corresponding author.}}%
\date{ }

\maketitle

\maketitle

\begin{abstract}
We propose a novel  approach for hyperspectral super-resolution, that is based on low-rank tensor approximation for a coupled low-rank multilinear (Tucker) model.
We show that the  correct recovery holds  for a wide range of multilinear ranks. 
For coupled tensor approximation, we propose  two SVD-based algorithms  that are  simple and fast, but with a performance comparable to the state-of-the-art methods. 
The approach is applicable to the case of unknown spatial degradation and  to the pansharpening problem.
\end{abstract}
\begin{IEEEkeywords}
hyperspectral super-resolution, low-rank tensor approximation, data fusion,  recovery, identifiability\end{IEEEkeywords}

\section{Introduction}
\label{sec:intro}

The problem of \emph{hyperspectral super-resolution} (HSR) \cite{YokoyaGC17:hsr} has recenlty attracted much interest from the signal processing community.
It consists in fusing a multispectral image (MSI), which has a good spatial resolution but few spectral bands, and a hyperspectral image (HSI), whose spatial resolution is lower than that of  MSI.
The aim  is to recover a \emph{super-resolution image} (SRI), which possesses both good spatial and spectral resolutions.
This problem is closely related to hyperspectral pansharpening \cite{LoABB15:pansharpening,Aiazzi11:hsr}, where  the  HSI is fused with a panchromatic image (i.e. an ``MSI'' with  one spectral band).

Many methods were developed for the HSR problem, including coupled nonnegative matrix factorization \cite{YokoYI12:cnmf} (CNMF), methods based on solving Sylvester equations \cite{WeiBT15:hsr}, Bayesian approaches (HySure \cite{SimoesBAC15:hsr}), FUMI \cite{WeiBT16:hsr}, to name a few.  Motivated by the widely used linear mixing model,  most of these methods are based  on  a coupled low-rank factorization of the matricized hyperspectral and multispectral images. 
In \cite{li18:hsr}, a matrix factorization approach under sparsity conditions was proposed, together with a proof of correct recovery of the estimated SRI in the noiseless case.

Recently, a promising tensor-based method was proposed that makes use of the inherent 3D nature of HSI \cite{KanatsoulisFSM:hsr}.
Assuming that the super-resolution image  itself  admits a low-rank canonical polyadic (CP) decomposition (CPD), the  HSR is reformulated as a coupled CP  approximation.
 An alternating least squares (ALS) algorithm called 
Super-resolution TEnsor REconstruction (STEREO)  is proposed, achieving reconstruction performance that is competitive with the state of the art.
A proof of the correct recovery of the SRI by the approach of \cite{KanatsoulisFSM:hsr} is given provided the CPD of the MSI is unique. 
 This approach was also successfully used  for a super-resolution problem in medical imaging \cite{HatvaniBTGK18:cpd}.

In some cases, the spatial degradation operator is unknown, therefore blind algorithms are needed.
A blind version of STEREO  was proposed in \cite{KanatsoulisFSM:hsr} that also uses an ALS algorithm for a coupled CP model.
In \cite{KanaFSM18:scuba}, a simple Super-resolution CUBe Algorithm  (SCUBA) based on a single CPD of the MSI tensor and a truncated SVD of the unfolding of the HSI is introduced.
A key idea proposed in \cite{KanaFSM18:scuba} is to use local approximations by splitting the data cubes into separate blocks.
This algorithm outperforms blind STEREO and other state-of-the-art algorithms.
It also does not require separability of the spatial degradation operator.

In this paper, we propose  to use another type of low-rank tensor factorization: multilinear (also known as Tucker) factorization. 
By assuming that the super-resolution image  has approximately low multilinear rank, we reformulate the HSR problem as a coupled Tucker approximation.
 First, we propose   two  closed-form SVD-based algorithms: the first, named Super-resolution based on COupled Tucker Tensor approximation (SCOTT),  is inspired by the higher-order SVD \cite{DeLaDMV00:hosvd} and the second (blind) is inspired by \cite{KanaFSM18:scuba}.
Second, we prove that,  although the Tucker decomposition is not identifiable
, the SRI can be uniquely recovered for a wide range of multilinear ranks.
While the proposed exact recovery conditions are in general more restrictive than those of \cite{KanatsoulisFSM:hsr}, they can be specialized in situations for which nothing can be concluded from \cite{KanatsoulisFSM:hsr}. 
Our experiments on a number of simulated and semi-real examples, show that the proposed  algorithms have  a performance approaching  those of \cite{KanatsoulisFSM:hsr} and \cite{KanaFSM18:scuba}, but the computational cost is much lower.
Also, the proposed approach is applicable to hyperspectral pansharpening \cite{He14:pansharpening}  (unlike \cite{KanatsoulisFSM:hsr}, which requires the MSI to have at least two spectral bands).
Finally, the algorithms can accurately reconstruct spectral signatures,  which is of prime importance for  further processing of the HSR image.

A  short version of this work \cite{PrevostUCB:icassp} appears in ICASSP 2019, presenting the  SCOTT algorithm and part of the simulations. The current  paper additionally includes new blind algorithms, detailed analysis of the model and the algorithms, proof of the theorem for recoverability,  new simulations  for  synthetic and semi-real data, examples on recovery of spectral signatures.

This paper is organized as follows.
In  \Cref{sec:notation}, we introduce our notation, define basic tensor decomposition operations and recall the HSR problem.
In \Cref{sec:cpd}, we recall the  CP-based model and the STEREO algorithm proposed in \cite{KanatsoulisFSM:hsr}.
\Cref{sec:tucker} contains our proposed coupled Tucker model and SVD-based algorithms (SCOTT and B-SCOTT) for tensor approximation.
In \Cref{sec:id} we prove our main recoverability result  for the coupled Tucker model.
\Cref{sec:exp} contains the numerical experiments. 

\section{Background and notation} \label{sec:notation}
\subsection{Basic notation}
In this paper we mainly follow \cite{Comon14:spmag} in what concerns the tensor notation (see also \cite{KoldaB09:sirev}).
The following fonts are used:  lowercase  ($a$) or uppercase ($A$)  plain  font for scalars,   boldface lowercase ($\vect{a}$) for vectors,  uppercase boldface ($\matr{A}$) for matrices, and calligraphic ($\tens{A}$) for $N$-D arrays (tensors). Vectors are, by convention, one-column matrices. The elements of vectors/matrices/tensors are 
accessed as $a_{i} $, ${A}_{i,j}$ and $\tenselem{A}_{i_1,\ldots,i_N}$ respectively.
$\RR$ stands for the real line.

For a matrix  ${\mA}$, we denote its transpose and Moore-Penrose pseudoinverse as ${\mA}^{\T}$ and ${\mA}^\dag$ respectively.  The notation ${\mI}_M$ is used for the $M\times M$ identity matrix and $\matr{0}_{L \times K}$ for the $L\times K$ matrix of zeroes.
We use  the symbol $\kron$ for the Kronecker product of  matrices (in order to distinguish it from the  tensor  product $\otimes$), and $\kr$ for the Khatri-Rao  product.

For a matrix $\matr{X}\in\RR^{m\times n}$, we denote by $\sigma_{max}(\matr{X})$ and $\sigma_{min}(\matr{X})$ the largest and the smallest of the $\min(m,n)$ singular values of $\matr{X}$.
We also denote by $\text{tSVD}_{R}\left(\matr{X}\right)\in\RR^{n\times R}$  a matrix containing  $R$ leading  right  singular vectors of  $\matr{X}$.

We use $\vecl{\cdot}$ for the standard column-major vectorization of a tensor or a matrix. 
Operator $\contr{p}$ denotes contraction on the $p$th index of a tensor; when contracted with a matrix, summation is always performed on the second index of the matrix, e.g., $[\tens{A}\contr{1}\matr{M}]_{ijk}=\sum_\ell \tenselem{A}_{\ell jk} M_{i\ell}$.
For a tensor $\tens{Y} \in \RR^{I \times J \times K}$, its first unfolding is denoted by $\unfold{Y}{1} \in \RR^{JK \times I}$.

\subsection{Tensor decompositions}
For a tensor $\tens{G} \in \RR^{R_1 \times R_2 \times R_3}$ and matrices ${\mU} \in \RR^{I \times R_1}$, ${\mV}\in \RR^{J \times R_2}$ and ${\mW}\in \RR^{K \times R_3}$,  the following shorthand notation is used for the multilinear product:
\begin{equation}\label{eq:}
\mlprod{\tens{G}}{{\mU}}{{\mV}}{{\mW}} = \tens{G} \con_1 {{\mU}} \con_2 {{\mV}} \con_3 {{\mW}}.
\end{equation}
which means that the $(i,j,k)$th entry of the above array is
\[
\sum_{pqr} G_{pqr} \; U_{ip} V_{jq} W_{kr}.\]
If $\tY = \mlprod{\tG}{{\mU}}{{\mV}}{{\mW}}$, the following identities hold for its vectorization and unfoldings, respectively:
\[
\begin{split}
\vecl{\tY} &= (\mW \kron \mV \kron \mU) \vecl{\tens{G}},\\
\unfold{Y}{1} &= \left(\mW\kron\mV\right)\unfold{G}{1}{\mU}^{\T}.
\end{split}
\]
If, in addition,
\[
R_1 = \rank{\unfold{Y}{1}}, R_2 = \rank{\unfold{Y}{2}}, R_3 = \rank{\unfold{Y}{3}},
\]
then the multilinear product is called Tucker decomposition of $\tY$  and $(R_1,R_2,R_3)$ are called the multilinear ranks.

For matrices $\mA \in \RR^{I \times F}$, $\mB \in \RR^{J \times F}$, $\mC \in \RR^{K \times F}$, we will use a shorthand notation for a polyadic decomposition (sometimes also called rank-decomposition)
\[
\cpd{{\mA}}{{\mB}}{{\mC}}  = \mlprod{\tens{I}_F}{\mA}{\mB}{\mC},
\]
where $\tens{I}_F \in \RR^{F\times F \times F}$ is a diagonal tensor of ones. 
 In other words, if $\tens{Y}=\cpd{{\mA}}{{\mB}}{{\mC}}$, then
\begin{eqnarray*}
\tenselem{Y}_{ijk} = \sum_r A_{ir} B_{jr} C_{kr}; 
\end{eqnarray*}
moreover, the first unfolding   can be expressed as
\begin{eqnarray*}
\unfold{Y}{1} = (\matr{C} \kr \matr{B} ) \matr{A}^\T.
\end{eqnarray*} 
Finally, if $F$ is minimal,  $\tens{Y}=\cpd{{\mA}}{{\mB}}{{\mC}}$ is called canonical polyadic (CP) decomposition and  $F$ is called the tensor rank.

\subsection{Hyperspectral super-resolution and degradation model}\label{sec:HSR}
We consider a multispectral image (MSI) cube ${\tY}_M \in \RR^{I\times J \times K_M}$ and a hyperspectral image (HSI) cube ${\tY}_H \in \RR^{I _H\times J_H \times K}$ acquired from existing sensors (for instance, LANDSAT or QuickBird). 
The spectral resolution of the  MSI is lower than that of the  HSI ($K_M \ll K$) , while its spatial resolution is higher ($I> I_H$, $J > J_H$).
The acquired MSI and HSI usually represent the same target, and ${\tY}_M$ and ${\tY}_H$ are viewed as two degraded  versions   of a single super-resolution image (SRI) data cube ${\tY} \in \RR^{I\times J \times K}$.
The hyperspectral data fusion problem \cite{YokoyaGC17:hsr}  consists in recovering  SRI ${\tY}$  from  ${\tY}_M$ and ${\tY}_H$.  
 
In this paper, as in \cite{KanatsoulisFSM:hsr}, we adopt the following degradation model, that can be compactly written as contraction of SRI:
\begin{equation}\label{eq:lmm}
\begin{cases}
{\tY}_M &= {\tY} \con_3 {\matr{P}_M} + \tens{E}_M, \\
{\tY}_H &= {\tY} \con_1 {\matr{P}_1} \con_2 {\matr{P}_2} + \tens{E}_H,
\end{cases}
\end{equation}
where $\tens{E}_M$, $\tens{E}_H$ denote the noise terms, $\matr{P}_M \in \RR^{K_M \times K}$  is the spectral degradation matrix (for example, a selection-averaging matrix),  and $\matr{P}_1\in \RR^{I_H \times I}$, $\matr{P}_2\in \RR^{J_H \times J}$   are the spatial degradation matrices, \emph{i.e.} we assume (for simplicity) that the spatial degradation is separable.
For example, the commonly accepted Wald's protocol \cite{WaldRM97:hsr}  uses separable Gaussian blurring and downsampling in both spatial dimensions.

\section{CP-based data fusion}
\label{sec:cpd}
In \cite{KanatsoulisFSM:hsr} it was proposed to model the SRI data cube as a tensor with low tensor rank, i.e. ${\tY} = \cpd{\mA}{\mB}{\mC}$, where $\matr{A} \in \RR^{I\times F}$, $\matr{B} \in \RR^{J\times F}$ and $\matr{C} \in \RR^{K\times F}$ are the factor matrices of the CPD and  $F$ is the tensor rank. 
\subsection{The case of known spatial degradation (STEREO)}
In this  subsection, we consider only the case when the degradation matrices $\matr{P}_1$, $\matr{P}_2$, $\matr{P}_M$ are known; the case of unknown degradation matrices is postponed to Section~\ref{unknown-CP:sec}.
In this case, the HSR problem can be formulated as
\begin{equation}\label{eq:coupled_cpd}
\underset{\widehat{\mA}, \widehat{\mB}, \widehat{\mC}}{\text{minimize}}\quad  f_{ CP}(\widehat{\mA},\widehat{\mB},\widehat{\mC}), 
\end{equation}
where $f_{ CP}(\widehat{\mA},\widehat{\mB},\widehat{\mC})=$
\[
\| {\tY}_H - \cpd{\matr{P}_1\widehat{\mA}}{\matr{P}_2\widehat{\mB}}{\widehat{\mC}}\|_F^2  + \lambda \| {\tY}_M - \cpd{\widehat{\mA}}{\widehat{\mB}}{\matr{P}_M\widehat{\mC}}\|_F^2,
\]
which is a coupled CP approximation problem.
As in \cite{KanatsoulisFSM:hsr} we set $\lambda = 1$ so that both degradated images have the same weight in the cost function.
Thus, we consider that the HSI and MSI share the same level of additive noise. 
For the  case when there is no noise ($\tens{E}_{H}, \tens{E}_M = \matr{0}$), the coupled CP model is (generically) identifiable if 
\begin{equation*}
F\leq \min\{ 2^{\floor{\log_2(K_MJ)}-2}, I_HJ_H \},
\end{equation*}
see \cite{KanatsoulisFSM:hsr} for a proof and details of this condition.

To solve \eqref{eq:coupled_cpd}, an alternating optimization algorithm is proposed in \cite{KanatsoulisFSM:hsr}, called 
STEREO
, as it is described in Algorithm~\ref{alg:stereo}.

\begin{algorithm}[ht!]
\SetKwInOut{Input}{input}\SetKwInOut{Output}{output}
 			\Input{${\tY}_M$, ${\tY}_H$, $\matr{P}_1$, $\matr{P}_2$, $\matr{P}_M$; $F$, $\mA_0\in \RR^{I\times F}$, $\mB_0\in \RR^{J\times F}$, $\mC_0\in \RR^{K\times F}$}
 			\Output{$\widehat{\tY} \in \RR^{I\times J\times K}$}
			\For{$k=1:n$}{
${\mA}_k \leftarrow \underset{\mA}{\argmin} \,  f_{ CP}({\mA},{\mB}_{k-1},{\mC}_{k-1})$,\\
${\mB}_k \leftarrow \underset{\mB}{\argmin} \, f_{CP}({\mA}_{k},{\mB},{\mC}_{k-1})$,\\
${\mC}_k \leftarrow \underset{\mC}{\argmin}  \, f_{CP}({\mA}_k,{\mB}_{k},{\mC})$,}
	 $\widehat{\tY} \leftarrow  \cpd{{\mA}_n}{{\mB}_n}{{\mC}_n}$.
\caption{STEREO}
\label{alg:stereo}
\end{algorithm}

The updates of the factor matrices in Algorithm~\ref{alg:stereo} can be computed by using efficient  solvers for the (generalized) Sylvester equation \cite{BartelsS72:syl}, \cite{GolubNV79:syl}.
For example, the total cost of one iteration (updating $\mA,\mB,\mC$) in Algorithm~\ref{alg:stereo}  becomes 
\begin{itemize}
\item $O(IJK_M F + I_HJ_HK F )$ flops for computing the right hand sides in the  least-squares subproblems.
\item $O(I^3\!+\!J^3\!+\!K^3\!+\!F^3)$ flops for solving Sylvester equations;
\end{itemize}
For more details on solving Sylvester equations, see\footnote{Note that in \cite[Appendix~E]{KanatsoulisFSM:hsr} the cost of solving the Sylvester equation is stated as $O(I^3)$ and not $O(I^3 +F^3)$ as in \cite{BartelsS72:syl}.} \cite[Appendix~E]{KanatsoulisFSM:hsr} and Appendix~\ref{app:1}. 
The initial  values  in Algorithm~\ref{alg:stereo} are chosen as in Algorithm~\ref{alg:tenrec}:

\begin{algorithm}[ht!]
\SetKwInOut{Input}{input}\SetKwInOut{Output}{output}
 			\Input{${\tY}_M$, ${\tY}_H$, $\matr{P}_1$, $\matr{P}_2$; $F$}
 			\Output{$\mA_0\in \RR^{I\times F}$, $\mB_0\in \RR^{J\times F}$, $\mC_0\in \RR^{K\times F}$}
			 $\cpd{\mA_0}{\mB_0}{\matr{\widetilde{C}}_0} = \text{CPD}_F \left({\tY}_M\right)$,\\
			 $\mC_0^{\T} =  \left(\matr{P}_2\matr{B}_0\kr \matr{P}_1\matr{A}_0\right)^{\dagger}{\unfold{Y}{3}_H }$.

\caption{TenRec}
\label{alg:tenrec}
\end{algorithm}

where $\text{CPD}_F(\tY_M)$ stands for a rank-$F$ CP approximation\footnote{A low tensor rank approximation does not always exist in general, but is guaranteed to exist if all terms are imposed to be entry-wise nonnegative \cite{QiCL16:semialgebraic}.} of $\tY_M$, and $\mC_0$ is obtained by solving a least-squares problem.
Algorithm~\ref{alg:tenrec} can be used as an algebraic method for solving the HSR problem.

\subsection{The case of unknown spatial degradation}\label{unknown-CP:sec}
In this subsection, we recall the CP-based methods for the HSR problem in the case when  the spatial degradation matrices $\matr{P}_1$, $\matr{P}_2$ are unknown,   proposed in  \cite{KanatsoulisFSM:hsr} and \cite{KanaFSM18:scuba}.
The first solution, called Blind-STEREO was to consider the following coupled CP approximation problem:
\[
\underset{\substack{\widehat{\mA}, \widehat{\mB}, \widehat{\mC}\\ \widetilde{\mA}, \widetilde{\mB}}}{\min} \| {\tY}_H - \cpd{\widetilde{\mA}}{\widetilde{\mB}}{\widehat{\mC}}\|_F^2  + \lambda \| {\tY}_M - \cpd{\widehat{\mA}}{\widehat{\mB}}{\matr{P_M}\widehat{\mC}}\|_F^2,
\]
where the estimated SRI is computed as $\widehat{\tY} = \cpd{\widehat{\mA}}{\widehat{\mB}}{\widehat{\mC}}$ and the matrices $\widetilde{\mA}$, $\widetilde{\mB}$ represent degraded versions of $\widehat{\mA}$, $\widehat{\mB}$ by unknown spatial degradation matrices, respectively. 
The conditions for correct recovery were established in  \cite{KanatsoulisFSM:hsr}.

In \cite{KanaFSM18:scuba}, an alternative approach was proposed, that uses standard CP approximation of $\tens{Y}_M$ together with an SVD of $\unfold{Y}{3}_H$, and a least squares problem. 
This approach, which does not necessary need separability of the spatial degradation operation,  is summarized in Algorithm~\ref{alg:hybrid}.

\begin{algorithm}[ht!]
\SetKwInOut{Input}{input}%
\SetKwInOut{Output}{output}
			\Input{$\tY_M$, $\tY_H$, $\mP_M$; $R$, $F$}
 			\Output{$\widehat{\tY} \in \RR^{I\times J\times K}$}
			Compute CP approximation: $\cpd{\widehat{\mA}}{\widehat{\mB}}{\widetilde{\mC}} = \text{CPD}_F (\tens{Y}_M)$, \\
			$\mZ \leftarrow \text{tSVD}_{R}\left(\unfold{Y}{3}_H\right)$, \\
			$\widehat{\mC} \leftarrow \mZ (\mP_M \mZ)^{\dagger} \widetilde{\mC}$, \\
		       $\widehat{\tY} \leftarrow  \cpd{\widehat{\mA}}{\widehat{\mB}}{\widehat{\mC}}$.
		      
			\caption{Hybrid algorithm of \cite{KanaFSM18:scuba}}
			 \label{alg:hybrid}
		\end{algorithm}

As noted in \cite{KanaFSM18:scuba},  ${\tY} =  \cpd{{\mA}}{{\mB}}{{\mC}}$ can be uniquely recovered only if  $\rank{\mC} = R$ does not exceed the number $K_M$ of spectral bands in the  MSI.
To overcome this limitation, in  \cite{KanaFSM18:scuba} it was proposed to apply Algorithm~\ref{alg:hybrid} to corresponding  non-overlapping subblocks of the  MSI and HSI (based on the hypothesis that only a small number of materials are active in a smaller block).
This is summarized in Algorithm~\ref{alg:scuba}, called SCUBA in \cite{KanaFSM18:scuba}.
It was shown in \cite{KanaFSM18:scuba} that such an algorithm outperforms Blind-STEREO, and other state-of-the-art algorithms for blind HSR.

\begin{algorithm}[ht!]
\SetKwInOut{Input}{input}\SetKwInOut{Output}{output}
 			\Input{$\tY_M$, $\tY_H$, $\mP_M$; $R$, $F$}
 			\Output{$\widehat{\tY} \in \RR^{I\times J\times K}$}
			Split  $\tY_M$, $\tY_H$ in $L$  blocks along  spatial dimensions. 
			
			\For{$k=1:L$}{
			Apply Algorithm~\ref{alg:hybrid} to each pair of blocks in $\tY_M$, $\tY_H$, and store the result in the corresponding block of $\widehat{\tY}$.
			}
			\caption{SCUBA}
			 \label{alg:scuba}
		\end{algorithm}

\section{Tucker-based data fusion}
\label{sec:tucker}

\subsection{Model and approximation problem}
In this paper, we propose a Tucker-based coupled model\footnote{Note that another method based on Tucker factorization and sparse approximation  was proposed in \cite{DianFL17:tuck}.
However, no recoverability condition is available for that method.} as an alternative to STEREO.
Let  $\vect{R} = (R_1,R_2,R_3)$ be the multilinear ranks of  the SRI ${\tY}$, and let $\tY = \mlprod{\tens{G}}{{\mU}}{{\mV}}{{\mW}}$ be its Tucker decomposition,  where $\matr{U} \in \RR^{I\times R_1}$, $\matr{V} \in \RR^{J\times R_2}$ and $\matr{W } \in \RR^{K\times R_3}$ are the factor matrices  and $\tens{G} \in \RR^{R_1\times R_2\times R_3}$ is  the  core tensor.

 With these notations,  Equation   \eqref{eq:lmm} becomes
\begin{equation}\label{eq:HSR_Tucker}
\begin{cases}
{\tY}_M &= \mlprod{\tens{G}}{{\mU}}{{\mV}}{\matr{P_M}{\mW}} + \tens{E}_M, \\
{\tY}_H &= \mlprod{\tens{G}}{\matr{P_1}{\mU}}{\matr{P_2}{\mV}}{{\mW}} + \tens{E}_H,
\end{cases}
\end{equation}
thus the HSR task can be performed by estimating\footnote{Note that our final goal is not parameter estimation, but rather the recovery of the SRI tensor. We use the word ``estimating'' just to underscore that it is not necessary to formulate the recovery problem as an optimization problem.} the factor matrices $\mU$, $\mV$, $\mW$ and the core tensor $\tens{G}$ in the Tucker decomposition of the SRI. 

As in \Cref{sec:cpd}, one of the possible ways is to reformulate the HSR problem as an optimization problem:
\begin{equation}\label{eq:HSR_Tucker_cost}
\underset{\widehat{\tG}, \widehat{\mU}, \widehat{\mV}, \widehat{\mW} }{\text{minimize}}\;  f_{T}(\widehat{\tG}, \widehat{\mU}, \widehat{\mV}, \widehat{\mW}),  \quad \mbox{where}
\end{equation}
\begin{equation}\label{eq:HSR_Tucker_cost_fn}
\begin{split}
f_{T}(\widehat{\mU}, \widehat{\mV}, \widehat{\mW}, \widehat{\tG})   = & \| {\tY}_H - \mlprod{\widehat{\tens{G}}}{\matr{P_1}\widehat{\mU}}{\matr{P_2}\widehat{\mV}}{\widehat{\mW}}\|_F^2   \\
  + &\lambda \| {\tY}_M - \mlprod{\widehat{\tens{G}}}{\widehat{\mU}}{\widehat{\mV}}{\matr{P_M}\widehat{\mW}}\|_F^2.
\end{split}
\end{equation}
Rather than finding a (local) minimum of \eqref{eq:HSR_Tucker_cost},
we propose two  (semi-algebraic) closed-form solutions that are suboptimal, but are fast and easy to calculate.

\subsection{An  SVD-based algorithm for known spatial degradation}
A two-stage approach  inspired by the high-order SVD (HOSVD)  \cite{DeLaDMV00:hosvd}  consists in:
\begin{itemize}
\item using the truncated SVD of MSI and HSI  to obtain the factors $\widehat{\matr{U}}$, $\widehat{\matr{V}}$, $\widehat{\matr{W}}$ in \eqref{eq:HSR_Tucker};
\item performing the data fusion by minimizing the objective \eqref{eq:HSR_Tucker_cost_fn} only with respect to the core tensor $\widehat{\tens{G}}$.
\end{itemize}
This method, called SCOTT, is given in Algorithm~\ref{alg:hosvd}.

Note that, under  conditions provided in  \Cref{sec:id},   Algorithm~\ref{alg:hosvd} gives a solution to the algebraic decomposition problem \eqref{eq:HSR_Tucker} in the noise-free case ($\tens{E}_M = \matr{0}$, $\tens{E}_H = \matr{0}$). 
 \begin{algorithm}[ht!]
\SetKwInOut{Input}{input}\SetKwInOut{Output}{output}
 			\Input{$\tY_M$, $\tY_H$, $\mP_1$, $\mP_2$, $\mP_M$; $(R_1,R_2,R_3)$,}
 			\Output{$\widehat{\tY} \in \RR^{I\times J\times K}$}
			1. $\widehat{\matr{U}}  \leftarrow  \text{tSVD}_{R_1}\left(\unfold{Y}{1}_M\right)$, $\widehat{\matr{V}}  \leftarrow  \text{tSVD}_{R_2}\left(\unfold{Y}{2}_M\right)$, $\widehat{\matr{W}}  \leftarrow  \text{tSVD}_{R_3}\left(\unfold{Y}{3}_H\right)$,\\
			2. $\widehat{\tG}  \leftarrow \underset{\tG}{\argmin} \; f_{T}\left({\tG},\widehat{\mU}, \widehat{\mV}, \widehat{\mW}\right)$, \\
		        3. $\widehat{\tY} = \mlprod{\widehat{\tens{G}}}{\widehat{{\mU}}}{\widehat{{\mV}}}{\widehat{{\mW}}}$.
				
 			\caption{SCOTT}
			 \label{alg:hosvd}
		\end{algorithm}

Step 2 of Algorithm~\ref{alg:hosvd} is the least squares problem
\[ 
\underbrace{\bmx \matr{\widehat{W}}\kron\matr{P}_2\matr{\widehat{V}}\kron\matr{P}_1\matr{\widehat{U}} \\
\sqrt{\lambda}\matr{P}_M\matr{\widehat{W}}\kron\matr{\widehat{V}}\kron\matr{\widehat{U}} \emx}_{\matr{X}} \vecl{\tens{\widehat{G}}} \approx \underbrace{\bmx \vecl{{\tY}_H} \\\sqrt{\lambda} \vecl{{\tY}_M} \emx}_{\vect{z}}
\]
that can  be  solved through  normal equations of the form  
\begin{equation}\label{eq:ls_core}
\left(\matr{X}^{\T}\matr{X}\right)\vecl{\tens{\widehat{G}}} = \matr{X}^{\T}\vect{z}.
\end{equation}
The matrix on the left-hand side of (\ref{eq:ls_core}) can be written as
\begin{equation}\label{eq:left}
 \begin{split}
 \matr{X}^{\T}\matr{X} = \mI_{R_3}\kron \left(\widehat{\mV}^{\T}\mP_2^{\T}\mP_2\widehat{\mV}\right)\kron\left(\widehat{\mU}^{\T}\mP_1^{\T}\mP_1\widehat{\mU}\right) \\
 + {\lambda}\left(\widehat{\mW}^{\T}\mP_M^{\T}\mP_M\widehat{\mW}\right)\kron\mI_{R_1R_2},
 \end{split}
\end{equation}
and the vector on the right-hand side is  
\begin{equation}\label{eq:right}
 \begin{split}
 \matr{X}^{\T}\vect{z} = \vecl{\mlprod{\tY_H}{\widehat{\matr{U}}^{\T}\!\matr{P}_1^{\T}}{\widehat{\matr{V}}^{\T}\!\matr{P}_2^{\T}}{\widehat{\mW}^{\T}}} \\ + \lambda\vecl{\mlprod{\tY_M}{\widehat{\mU}^{\T}}{\widehat{\mV}^{\T}}{\widehat{\mW}^{\T}\!\matr{P}_M^{\T}}}.
  \end{split}
\end{equation}
The normal equations can be  viewed as a (generalized) Sylvester equation and  (as in the case of STEREO) efficient solvers can be used (see Appendix~\ref{app:1} for more details). 
Thus the total cost of SCOTT algorithm  becomes
\begin{itemize}
\item $O(\min(R_1,R_2) I J K_M + R_3 I_H J_H K)$ flops for computing the truncated SVDs  and computing $\matr{X}^{\T}\vect{z}$;
\item $O(\min(R_3^3 +(R_1R_2)^3, R_1^3 +(R_2R_3)^3))$ flops for solving the Sylvester equation.
\end{itemize}
It is easy to see that the computational complexity of SCOTT is comparable to that of one iteration of STEREO  and can be  smaller if the multilinear ranks are small.

\subsection{An  algorithm for unknown spatial degradation}\label{unknown-Tucker:sec}
In this subsection, we show that is also possible to develop a blind SVD-based algorithm, in the same spirit as Algorithm~\ref{alg:hybrid}.
The algorithm does not need knowledge of $\mP_1$, $\mP_2$ and is based on the HOSVD of the MSI tensor.
 
\begin{algorithm}[ht!]
\SetKwInOut{Input}{input}\SetKwInOut{Output}{output}
 			\Input{$\tY_M$, $\tY_H$, $\mP_M$; $(R_1,R_2,R_3)$}
 			\Output{$\widehat{\tY} \in \RR^{I\times J\times K}$}
1. Compute the $(R_1,R_2,R_3)$ HOSVD   of $\tY_M $
\[
\mlprod{\widehat{\tens{G}}}{\widehat{{\mU}}}{\widehat{{\mV}}}{\widetilde{{\mW}}} \stackrel{\textrm{HOSVD}}{\approx} \tY_M, 
\]\\
2. $\mZ \leftarrow  \text{tSVD}_{R_3}\left(\unfold{Y}{3}_H\right)$,\\%
3. $\widehat{\matr{W}} \leftarrow \mZ (\mP_M \mZ)^\dagger \widetilde{{\mW}}$, \\%
4. $\widehat{\tY} = \mlprod{\widehat{\tens{G}}}{\widehat{{\mU}}}{\widehat{{\mV}}}{\widehat{{\mW}}}$.
 			\caption{Blind version of SCOTT}
			 \label{alg:hosvd_blind}
		\end{algorithm}

The total computational complexity of Algorithm~\ref{alg:hosvd_blind} is
\[
O\left(\min(R_1,R_2) I J K_M + R_3 I_H J_H K\right) \text{ flops}
\]
and is dominated by the cost of the truncated SVD, because step 3 is very cheap.
However, a specific drawback of Algorithm~\ref{alg:hosvd_scuba}, similarly to Algorithm~\ref{alg:hybrid}, is that $R_3$ should not exceed $K_M$, since the multilinear rank is employed in the HOSVD of subblocks of $\tY_M$.

Finally, similarly to SCUBA, we can use a block version of Algorithm~\ref{alg:hosvd_blind}, which we call B-SCOTT (which stands for ``Blind SCOTT'').
There is no confusion, as Algorithm~\ref{alg:hosvd_blind} is a special case of Algorithm~\ref{alg:hosvd_scuba} where the degraded image cubes are not split into blocks.
\begin{algorithm}[ht!]
\SetKwInOut{Input}{input}\SetKwInOut{Output}{output}
 			\Input{$\tY_M$, $\tY_H$, $\mP_M$; $(R_1,R_2,R_3)$}
 			\Output{$\widehat{\tY} \in \RR^{I\times J\times K}$}
					Split  $\tY_M$, $\tY_H$ in $L$  blocks along  spatial dimensions. 
	
			\For{$k=1:L$}{
			Apply Algorithm~\ref{alg:hosvd_blind} to each pair of blocks in $\tY_M$, $\tY_H$, and store the result in the corresponding block of $\widehat{\tY}$.
			}
		      
			\caption{B-SCOTT (block version of Algorithm~\ref{alg:hosvd_blind})}
			 \label{alg:hosvd_scuba}
		\end{algorithm}

\section{Recoverability of the Tucker model}
\label{sec:id}
In this section, we establish  conditions for correct SRI tensor  recovery in the coupled Tucker model. 
The proof of such conditions for the CP model in \cite{KanatsoulisFSM:hsr} relied on the uniqueness (identifiability) property of the CPD of the MSI. 
We show that, although the Tucker decomposition is not unique, the correct recovery is still possible.
Moreover, we prove that in some  cases where the CPD in \cite{KanatsoulisFSM:hsr} is not unique, the SRI tensor is still uniquely recovered using the CP model.

\subsection{Deterministic exact recovery conditions}
We begin with a  deterministic result on recoverability\footnote{In this paper, we prefer to use the term ``recoverability of the SRI'' rather than ``identifiability of the SRI'' used in \cite{KanatsoulisFSM:hsr}, in order to avoid confusion with identifiability of the low-rank model.}.
\begin{theorem}\label{thm:TuckerIdentifiabilityDeterministic}
Let a Tucker decomposition of $\tY$ be
\begin{equation}\label{eq:Tucker_model}
\tY = \mlprod{\tG}{{\mU}}{{\mV}}{{\mW}},
\end{equation}
where  $\tens{G} \in \RR^{R_1 \times R_2 \times R_3}$, and $\mU \in \RR^{I \times R_1}$, $\mV \in \RR^{J \times R_2}$,  $\mW \in \RR^{K \times R_3}$   have full column rank. 
We also assume that $ \tens{E}_M, \tens{E}_H = \matr{0}$ in \eqref{eq:lmm}.
\begin{enumerate}
\item If 
\begin{equation}\label{eq:unfolding_rank_persistence}
\begin{split}
&\rank{\unfold{Y}{1}_M} = R_1,  \rank{\unfold{Y}{2}_M} = R_2,  \\
&\rank{\unfold{Y}{3}_H} = R_3,
\end{split}
\end{equation}
and one of the following conditions holds true:
\begin{enumerate}
\item[a)] either $\rank{\matr{P}_1 \matr{U}} =R_1$ and $\rank{\matr{P}_2 \matr{V}}=R_2$; 
\item[b)] or $\rank{\matr{P}_M \matr{W}}  = R_3$.
\end{enumerate}
Then there exists only one $\widehat{\tY}$  with multilinear rank at most $(R_1, R_2, R_3)$   such that $ \widehat{\tY} \con_3 {\matr{P}_M}  = {\tY}_M$ and $\widehat{\tY} \con_1 {\matr{P}_1} \con_2 {\matr{P}_2} = {\tY}_H$.
\item If  $\mU$, $\mV$, $\mW$ none of the conditions a) and b) are  satisfied, then 
 there exist infinitely many $\widehat{\tY}$  of the form  
 \[
\begin{split}
& \widehat{\tY} =  \mlprod{\widehat{\tens{G}}}{\widehat{\matr{U}}}{\widehat{\matr{V}}}{\widehat{\matr{W}}}, \\
 &\widehat{\matr{U}} \in \RR^{I \times R_1}, \widehat{\matr{V}} \in \RR^{J \times R_2}, \widehat{\matr{W}} \in \RR^{K \times R_3},
\end{split}
 \]
such that $ \widehat{\tY} \con_3 {\matr{P}_M} = {\tY}_M$ and $\widehat{\tY} \con_1 {\matr{P}_1} \con_2 {\matr{P}_2} = {\tY}_H$; 
in fact,  $\|\widehat{\tY} - \tY\|$ can be 
arbitrary large  for such $\widehat{\tY}$.  
\end{enumerate}
\end{theorem}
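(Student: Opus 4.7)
My plan for assertion (1) is to exploit rank persistence to recover the three factor subspaces from the observations, and then use (a) or (b) to pin down the core. First I would observe that $\tens{Y}_M = \mlprod{\tens{G}}{\mU}{\mV}{\mP_M\mW}$, so every row of $\unfold{Y}{1}_M$ lies in $\sspan(\mU)$; together with $\rank{\unfold{Y}{1}_M}=R_1$ this identifies $\sspan(\mU)$ from the MSI. Symmetrically, $\sspan(\mV)$ is identified from $\unfold{Y}{2}_M$ and $\sspan(\mW)$ from $\unfold{Y}{3}_H$. Any competitor $\widehat{\tens{Y}} = \mlprod{\widehat{\tens{G}}}{\widehat{\mU}}{\widehat{\mV}}{\widehat{\mW}}$ of multilinear rank at most $(R_1,R_2,R_3)$ matching the observations must then have its factor column spaces contained in, and by a dimension count equal to, $\sspan(\mU), \sspan(\mV), \sspan(\mW)$ respectively. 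Writing $\widehat{\mU} = \mU\matr{T}_1$, $\widehat{\mV} = \mV\matr{T}_2$, $\widehat{\mW} = \mW\matr{T}_3$ for invertible $\matr{T}_1,\matr{T}_2,\matr{T}_3$ and absorbing them into the core, I may assume $\widehat{\mU}=\mU$, $\widehat{\mV}=\mV$, $\widehat{\mW}=\mW$ without loss of generality.

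Under this normalization, the two data equations reduce, via the full column rank of $\mU,\mV,\mW$, to
\[
\widehat{\tens{G}}\con_3(\mP_M\mW) = \tens{G}\con_3(\mP_M\mW)
\]
and
\[
\widehat{\tens{G}}\con_1(\mP_1\mU)\con_2(\mP_2\mV) = \tens{G}\con_1(\mP_1\mU)\con_2(\mP_2\mV).
\]
If (b) holds, the first equation forces $\widehat{\tens{G}}=\tens{G}$ because $\mP_M\mW$ has full column rank; if (a) holds, the second equation does so for the same reason applied to $\mP_1\mU$ and $\mP_2\mV$. Either way $\widehat{\tens{Y}}=\tens{Y}$.

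For assertion (2), the negation of ``(a) or (b)'' supplies a nonzero $\vect{\nu}\in\RR^{R_3}$ with $\mP_M\mW\vect{\nu}=\vect{0}$ together with either a nonzero $\vect{\mu}\in\RR^{R_1}$ satisfying $\mP_1\mU\vect{\mu}=\vect{0}$ or a nonzero $\vect{\xi}\in\RR^{R_2}$ satisfying $\mP_2\mV\vect{\xi}=\vect{0}$. Treating the first sub-case (the second is symmetric), I would pick any nonzero $\vect{\beta}\in\RR^{R_2}$ and, for $\alpha\in\RR$, set
\[
\widehat{\tens{Y}}_\alpha = \tens{Y} + \alpha\,(\mU\vect{\mu})\pout(\mV\vect{\beta})\pout(\mW\vect{\nu}).
\]
Direct substitution shows that the added rank-one tensor is annihilated by both observation operators, so $\widehat{\tens{Y}}_\alpha$ matches $\tens{Y}_M$ and $\tens{Y}_H$; absorbing the perturbation as an additive correction to the core shows $\widehat{\tens{Y}}_\alpha$ still admits a Tucker representation with factor matrices of the claimed sizes and multilinear rank at most $(R_1,R_2,R_3)$; and $\|\widehat{\tens{Y}}_\alpha-\tens{Y}\|_F$ grows linearly in $|\alpha|$ since $\mU\vect{\mu}, \mV\vect{\beta}, \mW\vect{\nu}$ are all nonzero by full column rank.

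The main technical nuisance is the basis-change absorption in part (1): it requires justifying carefully that, once the factor subspaces are identified by rank persistence, the competitor's representation can be renormalized to share the factor matrices of $\tens{Y}$ exactly, which hinges on $\widehat{\mU},\widehat{\mV},\widehat{\mW}$ actually achieving full column rank $R_1,R_2,R_3$ under the matching constraint (rather than merely having at most that many columns). Once that reduction is in place, both halves of the theorem become short linear-algebra manipulations.
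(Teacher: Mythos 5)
Your proof is correct, and it reaches the conclusion by a more elementary route than the paper's. The paper organizes the whole argument around the normal equations \eqref{eq:ls_core} of the least-squares step of Algorithm~\ref{alg:hosvd}: after fixing orthonormal bases via rank persistence, it invokes the Kronecker-sum eigenvalue theorem \cite[Theorem 13.16]{Laub04:siam} to show that $\matr{X}^{\T}\matr{X}=\matr{I}_{R_3}\kron\mA+\matr{D}\kron\matr{I}_{R_1R_2}$ is nonsingular precisely when (a) or (b) holds (unique core) and rank-deficient otherwise (underdetermined system, hence infinitely many admissible cores with $\|\widehat{\tY}-\tY\|_F=\|\widehat{\tG}-\tG\|_F$ unbounded). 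You instead work directly with the exact interpolation constraints: the same rank-persistence step pins down the factor subspaces, but then you cancel the full-column-rank factors to reduce to $\widehat{\tG}\con_3(\mP_M\mW)=\tG\con_3(\mP_M\mW)$ and $\widehat{\tG}\con_1(\mP_1\mU)\con_2(\mP_2\mV)=\tG\con_1(\mP_1\mU)\con_2(\mP_2\mV)$, and for part (2) you exhibit the explicit one-parameter family $\tY+\alpha\,(\mU\vect{\mu})\pout(\mV\vect{\beta})\pout(\mW\vect{\nu})$ lying in the joint kernel of both degradation operators. Your version avoids the eigenvalue-sum lemma entirely and makes the non-uniqueness constructive; what the paper's route buys in exchange is the Corollary that SCOTT itself (not just an abstract minimizer) recovers $\tY$, since its proof literally analyzes the linear system the algorithm solves. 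The ``technical nuisance'' you flag --- that the competitor's factors must actually attain full column rank $R_1,R_2,R_3$ --- is resolved exactly by the dimension count you sketch (each $\widehat{\mU},\widehat{\mV},\widehat{\mW}$ has column space containing an $R_1$-, $R_2$-, $R_3$-dimensional row space of an observed unfolding), so no gap remains.
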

\begin{proof}
First of all, we note that by \cite[Theorem 13.16]{Laub04:siam}, the singular values of the matrix $\matr{X}^{\T} \matr{X} = \matr{I}_{R_3}\kron\mA + \matr{D}\kron\matr{I}_{R_1R_2}$ in \eqref{eq:left}  are all  sums of the pairs of eigenvalues of 
\begin{equation}\label{eq:krprod}
\underbrace{\left(\widehat{\mV}^{\T}\mP_2^{\T}\mP_2\widehat{\mV}\right)\kron\left(\widehat{\mU}^{\T}\mP_1^{\T}\mP_1\widehat{\mU}\right)}_{\mA}, \underbrace{\lambda\widehat{\mW}^{\T}\mP_M^{\T}\mP_M\widehat{\mW}}_{\matr{D}}.
\end{equation}
We also assume without loss of generality that  ${\matr{U}},{\matr{V}}, {\matr{W}}$ have orthonormal columns.

\begin{itemize}
\item\noindent\emph{Proof of 2)} Assume that $\rank{\matr{P}_1 \matr{U}}\rank{\matr{P}_2 \matr{U}} < R_1R_2$ and $\rank{\matr{P}_M \matr{W}}  < R_3$.
If we set $\widehat{\matr{U}} = {\matr{U}}$, $\widehat{\matr{V}} = {\matr{V}}$, $\widehat{\matr{W}} = {\matr{W}}$, then $\rank{\mA} < R_1R_2$, $\rank{\matr{D}}<R_3$ and $\rank{\matr{X}^{\T} \matr{X}} < R_1 R_2 R_3$.
Therefore the system \eqref{eq:ls_core} is underdetermined, and there is an infinite number of solutions $\widehat{\tens{G}} \in \RR^{R_1 \times R_2 \times R_3}$.
Note that if we define $\widehat{\tens{Y}} = \mlprod{\widehat{\tens{G}}}{\matr{U}}{\matr{V}}{\matr{W}}$, then it is an admissible solution, i.e.,
$ \widehat{\tY} \con_3 {\matr{P}_M} = {\tY}_M$ and $\widehat{\tY} \con_1 {\matr{P}_1} \con_2 {\matr{P}_2} = {\tY}_H$. On the other hand, due to orthogonality of the bases,  $\|\widehat{\tens{Y}} - {\tens{Y}}\|_F = \|\widehat{\tens{G}} - {\tens{G}}\|_F$, which can be made arbitrary large due to nonuniqueness of the solution to \eqref{eq:ls_core}.

\item\noindent\emph{Proof of 1)} Let us choose $\widehat{\matr{U}} \in \RR^{I \times R_1}$, $\widehat{\matr{V}} \in \RR^{J \times R_2}$, and $\widehat{\matr{W}} \in  \RR^{K \times R_3}$ to be orthogonal bases of the row spaces of ${\unfold{Y}{1}_M}$, ${\unfold{Y}{2}_M}$ and ${\unfold{Y}{3}_H}$ respectively.
First, by \eqref{eq:unfolding_rank_persistence}, the rank of unfoldings does not drop after degradation, hence 
\[
\widehat{\matr{U}}  = \matr{U} \matr{Q}_U,  \widehat{\matr{V}}  = \matr{V} \matr{Q}_V, \widehat{\matr{W}}  = \matr{W} \matr{Q}_W,
\]
where $\matr{Q}_U$, $\matr{Q}_V$, $\matr{Q}_W$ are some rotation matrices.
Next,  due to conditions on the ranks of ${\matr{P}_1 \matr{U}}$, ${\matr{P}_2 \matr{U}}$ and ${\matr{P}_M \matr{W}}$, we get that $\rank{\matr{X}^{\T} \matr{X}} = R_1 R_2 R_3$ because of (\ref{eq:krprod}). Hence the solution $\widehat{\tens{G}}$ of \eqref{eq:ls_core} is unique.
Finally, we note that the reconstructed tensor can be expressed as
\[
\vecl{\widehat{\tY}} = 
(\matr{\widehat{W}}\kron\matr{\widehat{V}}\kron\matr{\widehat{U}}) (\matr{X}^{\T} \matr{X})^{-1} \matr{X}^{\T}\vect{z},
\]
where the right-hand side does not depend on the rotation matrices $\matr{Q}_U$, $\matr{Q}_V$, and $\matr{Q}_W$ due to the definition of $\matr{X}$. Hence, the reconstructed tensor $\widehat{\tY}$ is unique. \qedhere
\end{itemize}
\end{proof}

\begin{corollary}
If  the conditions of  \Cref{thm:TuckerIdentifiabilityDeterministic} (part 1) hold, 
 then  any  minimizer  of  \eqref{eq:HSR_Tucker_cost} recovers $\tY$, i.e.
\[
{\tY} = \mlprod{\widehat{\tG}}{\widehat{\mU}}{\widehat{\mV}}{\widehat{\mW}}.
\]
In addition,  Algorithm~\ref{alg:hosvd}  recovers $\tY$  for all cases of recoverability in \Cref{thm:TuckerIdentifiabilityDeterministic}.
\end{corollary}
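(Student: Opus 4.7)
The plan is to deduce both claims from part 1 of \Cref{thm:TuckerIdentifiabilityDeterministic}, using that under its hypotheses the SRI is the only tensor of multilinear rank at most $(R_1,R_2,R_3)$ consistent with $\tY_M$ and $\tY_H$ in the noise-free case.

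For the statement about minimizers of \eqref{eq:HSR_Tucker_cost}, I would first observe that $f_T \geq 0$ and that the true parameters $(\tens{G},\mU,\mV,\mW)$ attain $f_T = 0$ because $\tens{E}_M = \tens{E}_H = \matr{0}$; hence the global minimum is zero. Any minimizer $(\widehat{\tens{G}},\widehat{\mU},\widehat{\mV},\widehat{\mW})$ must also satisfy $f_T = 0$, which is equivalent to the algebraic equalities $\widehat{\tY} \con_3 \mP_M = \tY_M$ and $\widehat{\tY} \con_1 \mP_1 \con_2 \mP_2 = \tY_H$, where $\widehat{\tY} = \mlprod{\widehat{\tens{G}}}{\widehat{\mU}}{\widehat{\mV}}{\widehat{\mW}}$. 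By construction $\widehat{\tY}$ has multilinear rank at most $(R_1,R_2,R_3)$, so part 1 of the theorem immediately forces $\widehat{\tY} = \tY$.

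For SCOTT, I would show that Algorithm~\ref{alg:hosvd} realizes precisely the construction used in the proof of part 1 of \Cref{thm:TuckerIdentifiabilityDeterministic}. Step 1 extracts orthonormal bases $\widehat{\mU},\widehat{\mV},\widehat{\mW}$ of the column spaces of $\unfold{Y}{1}_M$, $\unfold{Y}{2}_M$ and $\unfold{Y}{3}_H$. Under the rank-persistence assumption \eqref{eq:unfolding_rank_persistence}, these spaces coincide with the column spaces of $\mU,\mV,\mW$, so $\widehat{\mU} = \mU \matr{Q}_U$, $\widehat{\mV} = \mV \matr{Q}_V$, $\widehat{\mW} = \mW \matr{Q}_W$ for some orthogonal $\matr{Q}_U,\matr{Q}_V,\matr{Q}_W$. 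Step 2 is the normal equation \eqref{eq:ls_core}, whose system matrix $\matr{X}^{\T}\matr{X}$ from \eqref{eq:left} has rank $R_1 R_2 R_3$ in both cases (a) and (b): by the Kronecker-sum spectrum identity used in the theorem's proof, the eigenvalues of $\matr{X}^{\T}\matr{X}$ are sums $\alpha_i + \delta_j$ of eigenvalues of $\mA$ and $\matr{D}$ in \eqref{eq:krprod}, where condition (a) makes $\mA$ positive definite and condition (b) makes $\matr{D}$ positive definite; in either situation every $\alpha_i + \delta_j > 0$. Hence $\widehat{\tens{G}}$ is the unique solution of \eqref{eq:ls_core}, and the final contraction $\widehat{\tY} = \mlprod{\widehat{\tens{G}}}{\widehat{\mU}}{\widehat{\mV}}{\widehat{\mW}}$ coincides with the tensor constructed in the theorem's proof, which is invariant under the rotations $\matr{Q}_U,\matr{Q}_V,\matr{Q}_W$ and equals $\tY$.

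The only point genuinely requiring care is the SCOTT argument under condition (b) alone, where $\mP_1 \mU$ or $\mP_2 \mV$ may be rank-deficient and hence $\mA$ in \eqref{eq:krprod} is only positive semidefinite. Here I would explicitly note that both $\mA$ and $\matr{D}$ are Gram matrices, hence positive semidefinite, and that strict positivity of either summand in the Kronecker sum $\matr{I}_{R_3}\kron\mA + \matr{D}\kron\matr{I}_{R_1R_2}$ is enough to exclude zero eigenvalues. Once this is spelled out, the rest of the argument is a direct transcription of the proof of \Cref{thm:TuckerIdentifiabilityDeterministic}.
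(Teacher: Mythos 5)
Your argument is correct and is exactly the route the paper intends: the corollary is stated without proof as an immediate consequence of \Cref{thm:TuckerIdentifiabilityDeterministic}, whose proof of part~1 is literally built around the construction performed by Algorithm~\ref{alg:hosvd} (truncated-SVD bases plus the normal equations \eqref{eq:ls_core}), and your zero-residual argument for global minimizers together with the positive definiteness of $\mI_{R_3}\kron\mA + \matr{D}\kron\mI_{R_1R_2}$ under either condition (a) or (b) supplies precisely the omitted details. One small terminological slip: the $R$ leading \emph{right} singular vectors returned by $\text{tSVD}_R$ span the \emph{row} spaces of the unfoldings (which coincide with the column spaces of $\mU$, $\mV$, $\mW$ under \eqref{eq:unfolding_rank_persistence}), not their column spaces --- which is in fact how you use them.
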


The recoverability results derived in \Cref{thm:TuckerIdentifiabilityDeterministic} are valid if a Tucker decomposition is used, and if its core tensor is dense. 
But they still remain valid if the core tensor is diagonal or block diagonal. 
For this reason, they also apply to CPD or BTD decompositions if the tensor rank is smaller than dimensions. 
In particular, recoverability can be ensured under mild conditions when the CPD is not unique, e.g. in the presence of collinear factors, as shown in the following corollary.

\begin{corollary}[Recoverability  for CPD model  with partial uniqueness]\label{cor:recoveryNonidentifiable}
Assume that the SRI  has a CPD ${\tY} = \cpd{\mA}{\mB}{\mC}$ of rank $F \le \min(I_H,J_H)$, such that
\[
\rank{\mA} = \rank{\mP_1 \mA} = \rank{\mB} = \rank{\mP_2 \mB} = F
\]
and $\mP_M \mC$ does not have zero columns.
We also assume that $ \tens{E}_M, \tens{E}_H = \matr{0}$ in \eqref{eq:lmm}.
Then any minimizer of \eqref{eq:coupled_cpd} recovers ${\tY}$.
\end{corollary}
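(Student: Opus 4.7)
The plan is to reduce the corollary to Part~1 of \Cref{thm:TuckerIdentifiabilityDeterministic} applied with multilinear ranks $(R_1, R_2, R_3) = (F, F, \rank{\mC})$. First I would view the given CPD as a Tucker decomposition $\tY = \mlprod{\tens{G}}{\mA}{\mB}{\mW}$, where $\mW \in \RR^{K \times R_3}$ is an orthonormal basis of $\Span{\mC}$ and $\tens{G}$ is the induced core; under the corollary's hypotheses, all three factors $\mA, \mB, \mW$ have full column rank.

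Next I would verify the remaining hypotheses. From $\unfold{Y}{1}_M = (\mP_M \mC \kr \mB) \mA^\T$, the Khatri--Rao factor has full column rank $F$ because $\mB$ is full column rank and $\mP_M \mC$ has no zero columns, so $\rank{\unfold{Y}{1}_M} = F = R_1$; the symmetric argument gives $\rank{\unfold{Y}{2}_M} = R_2$. Likewise $\unfold{Y}{3}_H = (\mP_2 \mB \kr \mP_1 \mA) \mC^\T$ has rank $R_3$, since $\mP_1 \mA$ and $\mP_2 \mB$ are full column rank. Condition (a) of the theorem is exactly $\rank{\mP_1 \mA} = \rank{\mP_2 \mB} = F$. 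Hence $\tY$ is the unique tensor of multilinear rank at most $(F, F, R_3)$ compatible with $\tY_M$ and $\tY_H$.

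The main obstacle is that a minimizer $\widehat{\tY} = \cpd{\widehat{\mA}}{\widehat{\mB}}{\widehat{\mC}}$ of \eqref{eq:coupled_cpd} only has an a priori multilinear-rank bound of $(F, F, F)$, not $(F, F, R_3)$. To tighten the third-mode bound I would first show that $\widehat{\mA}$ and $\widehat{\mB}$ are full column rank. Since the optimal value of \eqref{eq:coupled_cpd} is zero and both fidelity terms are nonnegative, $\widehat{\tY} \con_1 \mP_1 \con_2 \mP_2 = \tY_H$ holds exactly; comparing row spans in the mode-1 unfolding of this identity, and using that $\mC \kr \mP_2 \mB$ is full column rank (which follows from $\mP_2 \mB$ being full column rank and $\mC$ having no zero columns, the latter by minimality of the tensor rank $F$), yields $\Span{\mP_1 \mA} \subseteq \Span{\mP_1 \widehat{\mA}}$. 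As the left side already has dimension $F$, the inclusion is an equality and $\mP_1 \widehat{\mA}$ is full column rank; hence so is $\widehat{\mA}$. The analogous mode-2 argument handles $\widehat{\mB}$.

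With $\mP_1 \widehat{\mA}$ and $\mP_2 \widehat{\mB}$ both full column rank, the Khatri--Rao product $\mP_2 \widehat{\mB} \kr \mP_1 \widehat{\mA}$ is also full column rank and hence left-invertible; left-multiplying the mode-3 unfolding identity $(\mP_2 \widehat{\mB} \kr \mP_1 \widehat{\mA}) \widehat{\mC}^\T = (\mP_2 \mB \kr \mP_1 \mA) \mC^\T$ gives $\widehat{\mC} = \mC \matr{S}$ for some $F \times F$ matrix $\matr{S}$. Therefore $\Span{\widehat{\mC}} \subseteq \Span{\mC}$, so $\widehat{\tY}$ has multilinear rank at most $(F, F, R_3)$, and the uniqueness conclusion of \Cref{thm:TuckerIdentifiabilityDeterministic} forces $\widehat{\tY} = \tY$. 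The delicate step is precisely this column-span bound on $\widehat{\mC}$: it cannot be obtained from CPD uniqueness (which may fail when $\rank{\mC} < F$) and must instead be extracted from the Khatri--Rao left inversion above.
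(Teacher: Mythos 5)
Your proof is correct, but it takes a genuinely different route from the paper at the one step that matters. The paper also reduces the corollary to \Cref{thm:TuckerIdentifiabilityDeterministic}, and it establishes $\rank{\unfold{Y}{1}_M}=\rank{\unfold{Y}{2}_M}=F$ exactly as you do (via $\mP_M\mC$ having no zero columns and $\mA,\mB$ being full column rank). Where it differs is in controlling the third mode of the minimizer: the paper invokes the partial (uni-mode) uniqueness theorem of Guo et al.\ \cite[Theorem 2.2]{GuoMBS12:tens} applied to $\tY_H=\cpd{\mP_1\mA}{\mP_2\mB}{\mC}$ to conclude that $\widehat{\mC}=\mC$ up to permutation and scaling, hence $\rank{\widehat{\mC}}=\rank{\unfold{Y}{3}_H}=R_3$, and then argues that both $(\mA,\mB,\mC)$ and $(\widehat{\mA},\widehat{\mB},\widehat{\mC})$ are feasible points of the Tucker problem with core of bounded tensor rank. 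You instead prove full column rank of $\mP_1\widehat{\mA}$ and $\mP_2\widehat{\mB}$ by a rank/span comparison of the mode-1 and mode-2 unfoldings of the exact identity $\widehat{\tY}\con_1\mP_1\con_2\mP_2=\tY_H$, and then left-invert $\mP_2\widehat{\mB}\kr\mP_1\widehat{\mA}$ in the mode-3 unfolding to get $\widehat{\mC}=\mC\matr{S}$, hence $\Span{\widehat{\mC}}\subseteq\Span{\mC}$. This is weaker than essential uniqueness of $\mC$ but is all that is needed, and it makes the proof self-contained (elementary Khatri--Rao linear algebra rather than an external CPD partial-uniqueness result); the paper's version is shorter given the cited theorem and additionally yields the stronger conclusion that the spectral factor itself is essentially unique. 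Your explicit identification of the a priori multilinear-rank gap --- a CP minimizer only guarantees mode-3 rank at most $F$, not $R_3=\rank{\mC}$ --- is precisely the point the paper's appeal to partial uniqueness is designed to close, and your verification of conditions \eqref{eq:unfolding_rank_persistence} and (a) of the theorem is complete and matches the paper's.
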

\begin{proof}
Since the original factors $\mA,\mB,\mC$ yield zero error in \eqref{eq:coupled_cpd}, hence any global minimizer $(\widehat{\mA},\widehat{\mB},\widehat{\mC})$ of  \eqref{eq:coupled_cpd}, satisfies 
\[
\cpd{\mP_1\widehat{\mA}}{\mP_2\widehat{\mB}}{\widehat{\mC}} = \tY_H \mbox{ and } \cpd{\widehat{\mA}}{\widehat{\mB}}{\mP_M \widehat{\mC}} = \tY_M.
\]
Due to the conditions of the corollary, $\cpd{\mP_1{\mA}}{\mP_2{\mB}}{{\mC}}$ and $\cpd{{\mA}}{{\mB}}{\mP_M {\mC}}$ satisfy partial uniqueness conditions in \cite[Theorem 2.2]{GuoMBS12:tens}. 
Hence (after permutations and rescaling of factors), we have $\widehat{\mC} = \mC$ and
\[
\rank{\mC} = \rank{\widehat{\mC}} = \rank{\unfold{Y}{3}_H}= R_3.
\]
Moreover, since 
\[
\unfold{Y}{1}_M = (\mP_M \matr{C} \kr \matr{B}) \matr{A}^\T,\quad \unfold{Y}{2}_M = (\mP_M \matr{C} \kr \matr{A}) \matr{B}^\T,
\]
and $\mP_M \matr{C}$ does not have zero columns, we have that
\[
\begin{split}
&\rank{\mA} = \rank{\widehat{\mA}} = \rank{\unfold{Y}{1}_M}= R_1=  F,  \\
&\rank{\mB} = \rank{\widehat{\mB}} = \rank{\unfold{Y}{2}_M} = R_2 = F.  \\
\end{split}
\]
Therefore, both $(\mA,\mB,\mC)$ and $(\widehat{\mA},\widehat{\mB},\widehat{\mC})$ are particular solutions of Problem \eqref{eq:HSR_Tucker_cost} with an additional constraint that the tensor rank of  ${\tG}$ is at most $F$.
Since, by Theorem~\ref{thm:TuckerIdentifiabilityDeterministic}, any solution of \eqref{eq:HSR_Tucker_cost} recovers $\tY$ uniquely, the proof is complete.
\end{proof}
Note that the conditions of Corollary~\ref{cor:recoveryNonidentifiable} { are quite restrictive for real applications. But, they probably} can be relaxed by using Kruskal ranks and  a   more general formulation in \cite[Theorem 2.1]{GuoMBS12:tens} (see also \cite{DL08:tens}).

\subsection{Exact recoverability for generic tensors}
From the deterministic recovery conditions, we can establish the generic recoverability results.
\begin{theorem}\label{thm:TuckerIdentifiability}
Assume that $\matr{P}_1 \in \RR^{I_H \times I}$, $\matr{P}_2 \in \RR^{J_H \times J}$, and  $\matr{P}_M \in \RR^{K_M \times K}$ are fixed full  row-rank  matrices.
Let $\tY$ have decomposition \eqref{eq:Tucker_model},
where $R_1 \le I$, $R_2 \le J$, $R_3 \le K$, and $\tens{G} \in \RR^{R_1 \times R_2 \times R_3}$,  $\mU \in \RR^{I \times R_1}$, $\mV \in \RR^{J \times R_2}$,  $\mW\in \RR^{K \times R_3}$ are random tensors and matrices, distributed according to an absolutely  continuous  probability distribution.
We also assume that $ \tens{E}_M, \tens{E}_H = \matr{0}$ in \eqref{eq:lmm}.
 
\begin{enumerate}
\item If $R_3 \le K_M$ or $(R_1, R_2) \le (I_H,J_H)$ and
\begin{equation}\label{eq:additional_conditions}
\begin{cases}
R_1 \le \min(R_3,K_M)R_2, \\
 R_2 \le \min(R_3,K_M)R_1, \\
 R_3 \le   \min(R_1,I_H) \min(R_2, J_H ),
 \end{cases}
\end{equation}
 then  with probability $1$  there exists a unique tensor $\widehat{\tY}$   with multilinear rank at most $(R_1, R_2, R_3)$    such that $\widehat{\tY} \con_3 {\matr{P}_M}= {\tY}_M$ and $\widehat{\tY} \con_1 {\matr{P}_1} \con_2 {\matr{P}_2} = {\tY}_H$.
\item If $R_3 > K_M$ and ($R_1 \!>\! I_H$ or $R_2 \!>\! J_H$), then  with probability $1$  the reconstruction is non-unique, i.e. there exist infinitely many  $\widehat{\tY}$
 of the form  
 \[
\begin{split}
& \widehat{\tY} =  \mlprod{\widehat{\tens{G}}}{\widehat{\matr{U}}}{\widehat{\matr{V}}}{\widehat{\matr{W}}}, \\
 &\widehat{\matr{U}} \in \RR^{I \times R_1}, \widehat{\matr{V}} \in \RR^{J \times R_2}, \widehat{\matr{W}} \in \RR^{K \times R_3},
\end{split}
 \]
such that $ \widehat{\tY} \con_3 {\matr{P}_M} = {\tY}_M$ and $\widehat{\tY} \con_1\! {\matr{P}_1} \con_2 \!{\matr{P}_2} = {\tY}_H$; in fact, $\|\widehat{\tY} - \tY\|$ can be arbitrary large  for such $\widehat{\tY}$.  
\end{enumerate}
\end{theorem}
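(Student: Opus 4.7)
The plan is to reduce Theorem~\ref{thm:TuckerIdentifiability} to the deterministic result Theorem~\ref{thm:TuckerIdentifiabilityDeterministic} via a standard measure-theoretic argument: any nontrivial polynomial equation in the entries of $\tG,\mU,\mV,\mW$ defines a set of Lebesgue measure zero, so to certify a generic rank condition it suffices to exhibit a single realization attaining the claimed bound. Since $R_1\le I$, $R_2\le J$ and $R_3\le K$, the factors $\mU,\mV,\mW$ generically have full column rank, which is the first hypothesis required by Theorem~\ref{thm:TuckerIdentifiabilityDeterministic}.

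Next, I would show that the inequalities \eqref{eq:additional_conditions} are exactly what makes the unfolding rank equalities \eqref{eq:unfolding_rank_persistence} generic. From the formula recalled in Section~\ref{sec:notation},
\[
\unfold{Y}{1}_M = (\mP_M\mW\kron\mV)\,\unfold{G}{1}\,\mU^{\T},
\]
together with the identity $\rank{\mA\kron\mB}=\rank{\mA}\rank{\mB}$, the generic rank of $\mP_M\mW\kron\mV$ equals $R_2\min(R_3,K_M)$ and the generic rank of $\unfold{G}{1}$ equals $\min(R_1,R_2R_3)$. Combined with the full column rank of $\mU$, this yields $\rank{\unfold{Y}{1}_M}=R_1$ iff $R_1\le R_2\min(R_3,K_M)$, which is the first line of \eqref{eq:additional_conditions}. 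The second line is obtained by the same argument applied to $\unfold{Y}{2}_M$, and the third from $\unfold{Y}{3}_H$ using the generic rank $\min(R_1,I_H)\min(R_2,J_H)$ of $\mP_1\mU\kron\mP_2\mV$.

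For the alternative conditions (a), (b) of Theorem~\ref{thm:TuckerIdentifiabilityDeterministic}, since $\mP_1,\mP_2,\mP_M$ have full row rank one has generically $\rank{\mP_1\mU}=\min(R_1,I_H)$, $\rank{\mP_2\mV}=\min(R_2,J_H)$ and $\rank{\mP_M\mW}=\min(R_3,K_M)$. Hence (a) holds with probability one iff $R_1\le I_H$ and $R_2\le J_H$, while (b) holds iff $R_3\le K_M$: the disjunction in the hypothesis of part~1 is exactly the genericity of ``(a) or (b)'', so part~1 follows from Theorem~\ref{thm:TuckerIdentifiabilityDeterministic}(1). Conversely, the hypothesis of part~2 is precisely the generic failure of both (a) and (b), and Theorem~\ref{thm:TuckerIdentifiabilityDeterministic}(2) then delivers the infinite family of reconstructions together with the unbounded error.

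The main step requiring care will be the computation of the generic rank of a Kronecker product in which one factor is fixed and possibly non-generic, such as $\mP_M\mW\kron\mV$. I would show that the maximal minor of size $R_2\min(R_3,K_M)$ is a nonvanishing polynomial in the free entries, reducing via the Kronecker rank identity to the assertion that the linear image $\mP_M\mW$ of an absolutely continuous $\mW$ attains rank $\min(R_3,K_M)$ once $\mP_M$ has full row rank — a standard Zariski-open condition that can be verified by producing one $\mW$ whose columns project onto independent vectors in the range of $\mP_M$.
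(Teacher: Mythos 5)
Your proposal is correct and follows essentially the same route as the paper: both reduce the generic statement to \Cref{thm:TuckerIdentifiabilityDeterministic}, identify the disjunction ``$R_3\le K_M$ or $(R_1,R_2)\le(I_H,J_H)$'' with the generic validity of conditions a)/b), and identify \eqref{eq:additional_conditions} with the generic validity of \eqref{eq:unfolding_rank_persistence}. The one place where the two arguments diverge is how the generic unfolding ranks are certified. The paper reduces to truncated-identity degradation matrices, then invokes lower semicontinuity of the rank and exhibits a single explicit witness $(\mU,\mV,\mW,\tG)$ attaining $\rank{\unfold{Y}{1}_M}=R_1$, which is airtight with no further analysis. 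You instead compute the generic ranks of the individual factors $\mP_M\mW\kron\mV$ and $\unfold{G}{1}$ and assert that the product attains the minimum of these ranks (``this yields $\rank{\unfold{Y}{1}_M}=R_1$ iff $R_1\le R_2\min(R_3,K_M)$''). That inference is not automatic for a product of matrices of prescribed ranks; it requires a general-position argument, namely that the column space of $\unfold{G}{1}$ generically meets the kernel of the structured (non-generic) matrix $\mP_M\mW\kron\mV$ trivially, which holds here because $\tG$ is drawn independently of $\mU,\mV,\mW$ under the absolutely continuous joint distribution. This composition step, rather than the rank of the Kronecker factor you single out at the end (which follows immediately from $\rank{\mA\kron\mB}=\rank{\mA}\rank{\mB}$ and $\rank{\mP_M\mW}=\min(R_3,K_M)$), is the real point requiring care; supplying the general-position argument, or replacing it by the paper's single-witness-plus-semicontinuity device, closes the only loose end in your write-up.
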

\begin{proof}
\begin{itemize}

\item\noindent\emph{Proof of 2)}  follows  from \Cref{thm:TuckerIdentifiabilityDeterministic} (part 2)

\item \noindent\emph{Proof of 1)} First,  without loss of generality, we can replace $\matr{P}_1$, $\matr{P}_2$, $\matr{P}_M$ with the following of same size:
\begin{equation}\label{eq:simple_degradation} 
\widetilde{\matr{P}}_1 = \bmx \mI_{I_H} \\ \matr{0}\emx^{\T}\!\!,\,
\widetilde{\matr{P}}_2 = \bmx \mI_{J_H} \\ \matr{0}\emx^{\T}\!\!,\,
\widetilde{\matr{P}}_M = \bmx \mI_{K_M} \\ \matr{0}\emx^{\T}\!\!.
\end{equation}
Indeed, let us explain why it is so, for example for $\matr{P}_1 \in \RR^{I_H \times I}$.
There exists a nonsingular matrix\footnote{For example, $\matr{T} =  [\matr{P}_1^{\dag}\, \matr{F}],\mbox{ where }\matr{F} \in  \RR^{I \times (I - I_H)}, \matr{P}_1 \matr{F} = \matr{0}$.} $\matr{T}$ such that 
\[
\matr{P}_1  \matr{T} = \begin{bmatrix} \mI_{I_H} & \matr{0}\end{bmatrix} .
\]
If we take $\widetilde{\mU} = \matr{T}^{-1}\mU$  then $\matr{P}_1  \mU = \widetilde{\matr{P}}_1 \widetilde{\mU}$.
Note that a nonsingular transformation preserves absolute continuity of the distribution; hence $\mU$ has an absolutely continuous distribution if and only if $\widetilde{\mU}$ has one.

Therefore, under the assumptions on distribution of $\mU$, $\mV$, $\mW$ the following implications hold with probability~1
\[
\begin{split}
&R_1 \le I_H  \Rightarrow \rank{\matr{U}_{1:I_H,:}} =R_1, \\
&R_2 \le J_H\Rightarrow  \rank{ \matr{V}_{1:J_H,:}}=R_2, \\
&R_3 \le K_M \Rightarrow \rank{ \matr{W}_{1:K_M,:}}  = R_3.
\end{split}
\]
Next, we are going to show how the other set of conditions imply \eqref{eq:unfolding_rank_persistence}.
We will prove it only for the first condition (the others are analogous).

Note that  the first unfolding can be written as
\[
\unfold{Y}{1}_M= (\matr{W}_{1:K_M,:} \kron \matr{V}) \unfold{G}{1} \matr{U}^{\T}.
\]
Due to the dimensions of the terms in the product, this matrix is at most rank $R_1$.
Due to semicontinuity of the rank function, $\unfold{Y}{1}_H$ will be generically of rank $R_1$ if we can provide just a single example of  $\mU$, $\mV$, $\mW$, $\tG$, achieving the condition $\rank{\unfold{Y}{1}_M} = R_1$. 
Indeed, if $R_1 \le \min(R_3,K_M)R_2$, such an example is given by
\[
{\matr{U}} = \bmx \mI_{R_1} \\ \matr{0}\emx,
{\matr{V}} = \bmx \mI_{R_2} \\ \matr{0}\emx, 
{\matr{W}} = \bmx \mI_{R_3}\\  \matr{0}\emx,
\unfold{G}{1} = \bmx \mI_{R_1} \\ \matr{0}\emx,
\]
which completes the proof.
\qedhere
\end{itemize}

\end{proof}

We illustrate the statement of \Cref{thm:TuckerIdentifiability} for the case $I = J$, $I_H = J_H$  and $R_1 = R_2$. 
In \Cref{fig:identifiability_region} we show that the space of parameters $(R_1, R_3)$ is split into two regions: recoverable and non-recoverable.  The hatched area corresponds to the parameters where  condition \eqref{eq:additional_conditions} is not satisfied.

\begin{figure}[htb!]
  \centering
  \centerline{\includegraphics[width=7cm]{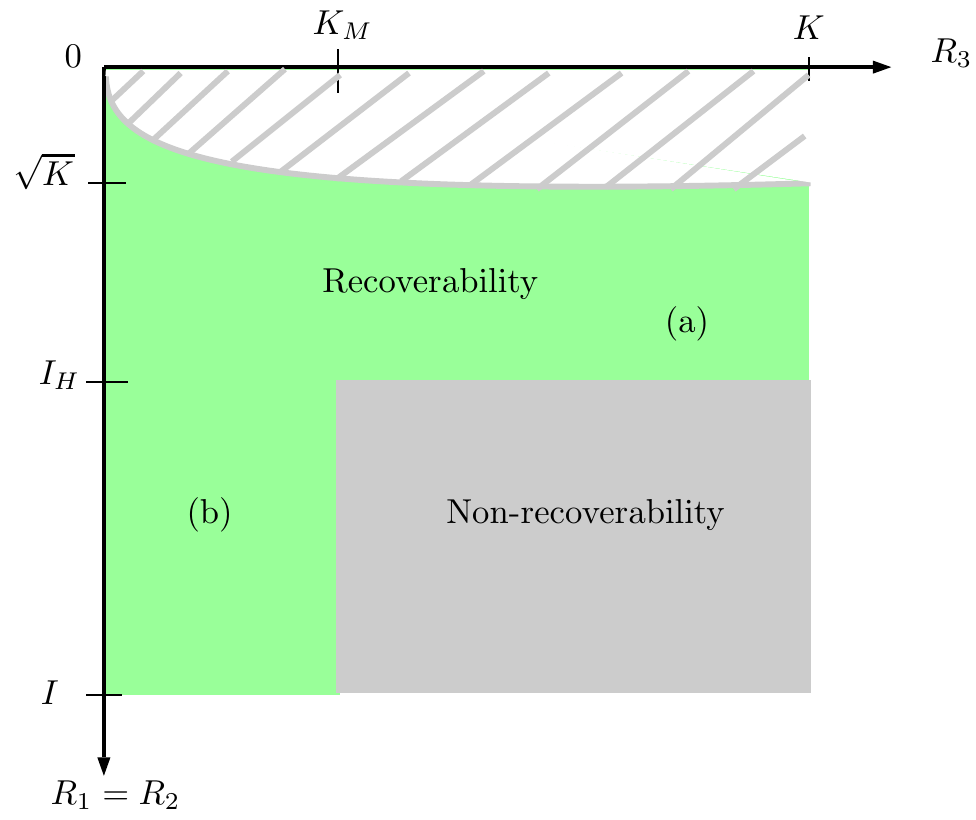}}

\caption{Identifiability region depending on $R_1$ and $R_3$}
\label{fig:identifiability_region}
\end{figure}

\begin{remark}
In the proof of  \Cref{thm:TuckerIdentifiability} it was shown that we can assume that the degradation operators are given in \eqref{eq:simple_degradation}.
In that case, the degraded tensors  ${\tY}_{M}$ and ${\tY}_{H}$ are just the subtensors (slabs)  \emph{i.e.}
\[
{\tY}_{M} = \tY_{:,:,1:K_M}, \quad {\tY}_{H} = \tY_{1:I_H,1:J_H,:}.
\]
Hence the  recoverability of Tucker super-resolution model is equivalent to uniqueness of tensor completion  \cite{Kres13:completion},   that is the recovery of $\tY$ from known subtensors $\tY_{:,:,1:K_M}$  and  $\tY_{1:I_H,1:J_H,:}$,  shown in Figure~\ref{fig:tensor_completion}.
\end{remark}

\begin{figure}[htb!]
  \centering
  \centerline{\includegraphics[width=7cm]{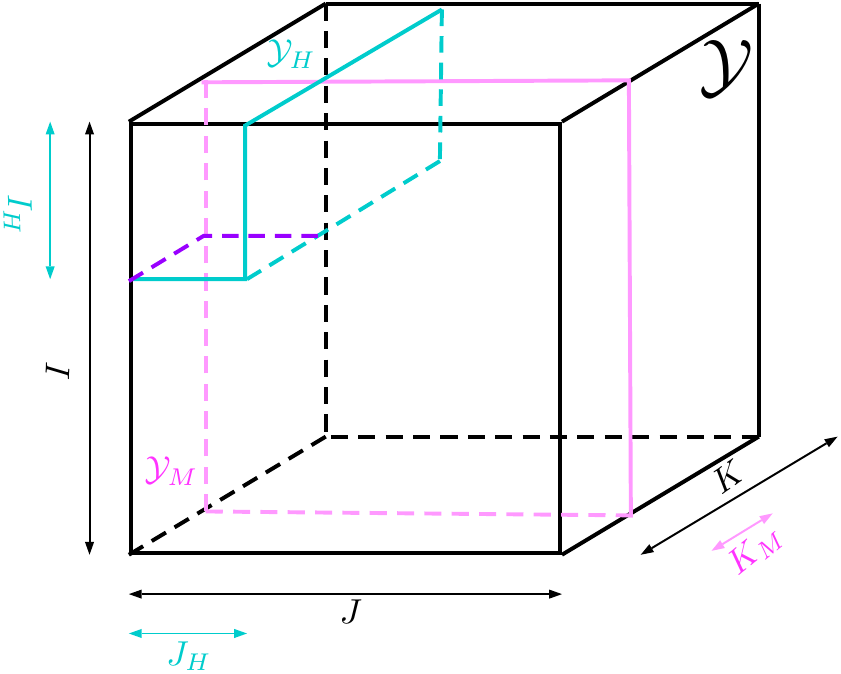}}

\caption{Recovery of $\tY$  from $\tY_{:,:,1:K_M}$ (pink), $\tY_{1:I_H,1:J_H,:}$ (blue)}
\label{fig:tensor_completion}
\end{figure}

We note that for  the coupled CP approach (STEREO), the connection with tensor completion  was  recently used in \cite{Kana19:fmri} for accelerated reconstruction of fMRI images.

\subsection{Recoverability in the blind case}
Similarly to \Cref{thm:TuckerIdentifiabilityDeterministic}, we can prove correct recovery for Algorithm~\ref{alg:hosvd_blind} under relaxed degradation model. 
We assume that the MSI is degraded as before, and HSI is degraded slicewise by an unknown linear operator $\mathcal{P}_s: \RR^{I\times J} \to \RR^{I_H\times J_H}$.
\begin{equation}\label{eq:degradation_blind}
\begin{cases}
{\tY}_M &= {\tY} \con_3 {\matr{P}_M}, \\
({\tY}_H)_{:,:,k} &= \mathcal{P}_s ({\tY}_{:,:,k}).
\end{cases}
\end{equation}
Then it is easy to prove the following analogue of \Cref{thm:TuckerIdentifiabilityDeterministic}.
\begin{proposition}
Let  $\tY$ have a Tucker decomposition 
\begin{equation*}
\tY = \mlprod{\tG}{{\mU}}{{\mV}}{{\mW}},
\end{equation*}
where  $\tens{G} \in \RR^{R_1 \times R_2 \times R_3}$, and $\mU \in \RR^{I \times R_1}$, $\mV \in \RR^{J \times R_2}$,  $\mW \in \RR^{K \times R_3}$ are full column rank.

If $\rank{\unfold{Y}{3}_H} = R_3$  and $\rank{\matr{P}_M \matr{W}}  = R_3$,
then Algorithm~\ref{alg:hosvd_blind} recovers  $\tY$ correctly. 
\end{proposition}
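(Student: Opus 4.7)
The plan is to follow Algorithm~\ref{alg:hosvd_blind} step by step and certify that, under the two stated hypotheses, each intermediate quantity is related to the true factors $(\tG,\mU,\mV,\mW)$ by a nonsingular change of basis, so that the final reconstruction is exact.

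First I would show that the HOSVD in step 1 is exact. From \eqref{eq:degradation_blind} we have $\tY_M = \mlprod{\tG}{\mU}{\mV}{\mP_M\mW}$; combining the full column rank of $\mU,\mV$, the hypothesis $\rank{\mP_M\mW}=R_3$, the identity $\rank{\matr{A}\kron\matr{B}}=\rank{\matr{A}}\rank{\matr{B}}$, and the full multilinear rank of the Tucker core $\tG$, one obtains $\rank{\unfold{Y_M}{n}}=R_n$ for $n=1,2,3$. Hence $\widehat{\mU}$, $\widehat{\mV}$, $\widetilde{\mW}$ are orthonormal bases of the column spaces of $\mU$, $\mV$ and $\mP_M\mW$, respectively, i.e.\ there exist invertible matrices $\matr{Q}_U, \matr{Q}_V, \matr{Q}_W$ with $\widehat{\mU}=\mU\matr{Q}_U$, $\widehat{\mV}=\mV\matr{Q}_V$, $\widetilde{\mW}=\mP_M\mW\matr{Q}_W$.

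Next I would track the third factor. Because the spatial operator $\mathcal{P}_s$ acts only on the first two modes, every mode-3 fiber of $\tY_H$ is a linear combination of mode-3 fibers of $\tY$ and therefore lies in $\sspan(\mW)$; the hypothesis $\rank{\unfold{Y_H}{3}}=R_3$ forces equality, so $\mZ=\mW\matr{Q}_Z$ for some invertible $\matr{Q}_Z$. Consequently $\mP_M\mZ=\mP_M\mW\matr{Q}_Z$ has full column rank $R_3$, $(\mP_M\mZ)^\dagger$ is a genuine left inverse, and $\mP_M\mZ(\mP_M\mZ)^\dagger$ is the orthogonal projector onto the column space of $\mP_M\mW$. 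Since $\widetilde{\mW}$ lies in that subspace, $\mP_M\widehat{\mW}=\widetilde{\mW}=\mP_M\mW\matr{Q}_W$; writing $\widehat{\mW}=\mW\matr{S}$ (possible because $\widehat{\mW}\in\sspan(\mZ)=\sspan(\mW)$) and using the full column rank of $\mP_M\mW$ forces $\matr{S}=\matr{Q}_W$.

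Finally, the uniqueness of the core in a Tucker decomposition with full column rank factors, applied to the exact equality $\tY_M=\mlprod{\tG}{\mU}{\mV}{\mP_M\mW}=\mlprod{\widehat{\tG}}{\widehat{\mU}}{\widehat{\mV}}{\widetilde{\mW}}$, gives $\widehat{\tG}=\mlprod{\tG}{\matr{Q}_U^{-1}}{\matr{Q}_V^{-1}}{\matr{Q}_W^{-1}}$. Substituting in step 4 of Algorithm~\ref{alg:hosvd_blind} and using multilinearity,
\[
\widehat{\tY}=\mlprod{\widehat{\tG}}{\widehat{\mU}}{\widehat{\mV}}{\widehat{\mW}}=\mlprod{\tG}{\mU}{\mV}{\mW}=\tY.
\]
The only mildly delicate point is the rank certification of $\unfold{Y_M}{1}$ and $\unfold{Y_M}{2}$ in the first step; the rest is pure change-of-basis bookkeeping propagated through the multilinear product.
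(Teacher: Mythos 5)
Your proof is correct and follows essentially the same route as the paper's: both arguments rest on the two key facts that the rank hypotheses make the truncated SVD factors agree with the true ones up to invertible changes of basis (so $\mZ=\mW\matr{Q}$ with $\matr{Q}$ invertible) and that $\rank{\matr{P}_M\mW}=R_3$ makes the mode-3 correction exact. The paper's version is merely terser: it writes $\widehat{\tY}=\tY_M\con_3\bigl(\mZ(\mP_M\mZ)^{\dagger}\bigr)$ and invokes the pseudoinverse identity $\mZ(\mP_M\mZ)^{\dagger}=\mW(\mP_M\mW)^{\dagger}$, which spares the explicit bookkeeping of $\matr{Q}_U,\matr{Q}_V,\matr{Q}_W$ through the core that you carry out.
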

\begin{proof}
Indeed,  $\tY_M = \mlprod{\tG}{{\mU}}{{\mV}}{ {\matr{P}_M}{\mW}}$.
Therefore, since $\rank{{\matr{P}_M}{\mW}} = R_3$, the multilinear rank of $\tY_M$ is equal to the one of $\tY$  and
\[
\tY =  \tY_M \con_3 (\mW ({\matr{P}_M}{\mW})^{\dagger}).
\]
Finally, due to the condition $\rank{\unfold{Y}{3}_H} = R_3$, step 2 of Algorithm~\ref{alg:hosvd_blind} recovers $\mW$ up to a change of basis, i.e., $\matr{Z} = \mW \matr{O}$, where $\matr{O} \in \RR^{R_3 \times R_3}$ is an orthogonal matrix.
Finally, due to the properties of the pseudoinverse
\[
(\mW \matr{O}({\matr{P}_M}{\mW}\matr{O})^{\dagger}) = \mW ({\matr{P}_M}{\mW})^{\dagger},
\]
which completes the proof.
\end{proof}

\section{NUMERICAL EXPERIMENTS}
\label{sec:exp}
All simulations were run on a MacBook Pro with 2.3 GHz Intel Core i5 and 16GB RAM.
The code is implemented in MATLAB.
For basic tensor operations we used TensorLab  3.0 \cite{tensorlab}.  
The results are reproducible and the codes are available online at 
\url{https://github.com/cprevost4/HSR_Software}.

\subsection{Degradation model}

Experiments are conducted on a set of semi-real and synthetic examples, in which the groundtruth SRI is artificially degraded to $\tY_H$ and $\tY_M$ by the degradation matrices $\matr{P}_1$, $\matr{P}_2$ and $\matr{P}_M$ according to model~(\ref{eq:lmm}).

For spatial degradation, we follow the commonly used Wald's protocol \cite{WaldRM97:hsr}.  The matrices $\matr{P}_1$, $\matr{P}_2$ are computed with a separable Gaussian blurring kernel of size $q=9$. Then, downsampling is performed along each spatial dimension with a ratio $d=4$ between $I,J$ and $I_H,J_H$, as in \cite{KanatsoulisFSM:hsr}.
We refer to Appendix~\ref{app:2} for more details on the construction of $\matr{P}_1$, $\matr{P}_2$.

In this paper, we consider two spectral responses used to generate the spectral degradation matrix $\matr{P}_M$. 
In all the semi-real examples,  available online at \cite{Landsat18:data}, the bands corresponding to water absorption are first removed as in \cite{KanatsoulisFSM:hsr}.
The LANDSAT sensor spans the spectrum from 400nm to 2500nm for the HSI and produces a 6-band MSI corresponding to wavelengths 450--520nm (blue), 520--600nm (green), 630--690nm (red), 760--900nm (near-IR), 1550-1750nm (shortwave-IR) and 2050--2350nm (shortwave-IR2).
The second response corresponds to a QuickBird sensor, which spans the spectrum from 430nm to 860nm for the HSI and produces a 4-band MSI which bands correspond to wavelengths 430--545nm (blue), 466--620nm (green), 590--710nm (red) and 715--918nm (near-IR).
The spectral degradation matrix $\matr{P}_M$ is a selection-averaging matrix that selects the common spectral bands of the SRI and MSI.

\subsection{Metrics}
As for the experimental setup, we follow \cite{KanatsoulisFSM:hsr}; we compare the groundtruth SRI with the recovered SRI obtained by the algorithms.
The main performance metric used in comparisons is \emph{reconstruction Signal-to-Noise ratio} (R-SNR) used in \cite{Aiazzi11:hsr}:
\begin{equation} 
\text{R-SNR} = 10{\log}_{10}\left(\frac{ \|\tY\|_F^2}{\|\widehat{\tY} - \tY\|_F^2}\right).
\end{equation}
In addition to R-SNR, we consider different metrics from \cite{Aiazzi11:hsr} described below:
\begin{equation}
\text{CC } = \frac{1}{IJK} \left( \sum\limits_{k=1}^{K} \rho \left({{\tY}}_{:,:,k}, {\tens{\widehat{Y}}}_{:,:,k}  \right)   \right),
\end{equation}
where $\rho (\cdot , \cdot)$ is the Pearson correlation coefficient between the estimated and original spectral slices;
\begin{equation}
\text{SAM } = \frac{180}{\pi}\frac{1}{IJ} \sum\limits_{n=1}^{IJ}  \arccos\left(\frac{{\unfold{Y}{3}_{n,:}}^{\T}{\unfold{\widehat{Y}}{3}_{n,:}}}{\| {\unfold{Y}{3}_{n,:}} \|_2 \| {\unfold{\widehat{Y}}{3}_{n,:}} \|_2}     \right),
\end{equation}
which computes the angle between original and estimated fibers;
\begin{equation}
\text{ERGAS } = \frac{100}{d} \sqrt{\frac{1}{IJK} \sum\limits_{k=1}^{K} \frac{\| {{\tens{\widehat{Y}}}_{:,:,k} - {{\tY}}_{:,:,k}\|_F^2}}{{\mu}_k^2}      },
\end{equation}
where ${\mu}_k^2$ is the mean value of ${\tens{\hat{Y}}}_{:,:,k}$. ERGAS represents the relative dimensionless global error between the SRI and the estimate, which is the root mean-square error averaged by the size of the SRI.
We also show the computational time for each algorithm, given by the \verb?tic? and \verb?toc? functions of MATLAB.

\subsection{Semi-real data: comparison with other methods}

In this subsection, we showcase the capabilities of SCOTT and B-SCOTT and compare them with state-of-the-art methods. 

	\subsubsection{Non-blind algorithms}	
	We compare the performance of non-blind algorithms (\emph{i.e} STEREO and its initialization algorithm TenRec, and SCOTT).
	 We test various ranks for both algorithms.
	 For STEREO and TenRec, we use the implementation\footnote{TenRec may  be also made faster, e.g.,  by using  algebraic algorithms for CP approximation \cite{SanchezK}, but we did not optimize its speed in this paper.} of \cite{KanatsoulisFSM:hsr}, available online at \cite{Kanatsoulis:code}.
	 In other subsections, we use our implementation with fast solvers for the Sylvester equations (see Appendix~\ref{app:1}).
	  For HySure \cite{SimoesBAC15:hsr}, the groundtruth number of materials $E$  is chosen as the number of endmembers as in \cite{KanatsoulisFSM:hsr}. 
	This algorithm is applied in a non-blind fashion, meaning that the spatial\footnote{In fact, HySure has a different, convolutional degradation model, that is not necessarily separable.} and spectral degradation operators are not estimated but obtained from $\matr{P}_1$, $\matr{P}_2$ and $\matr{P}_M$. 
	The same model is applied to the FUSE algorithm \cite{WeiDT15:hsr}. 
	As a comparison, we also show the performance of B-SCOTT when no splitting is performed.
	
	The first dataset we consider is Indian Pines, where ${\tY} \in \RR^{144\times 144\times 200}$ is degraded by a LANDSAT sensor for the MSI and a downsampling ratio $d=4$ for the HSI.
	The results are presented in Tables~\ref{tab:IP_nb} and~\ref{tab:IP_nb_noisy}, and Figure~\ref{fig:visual_ip}. 
	In Table~\ref{tab:IP_nb} and following, the numbers between brackets represent the multilinear rank used for the Tucker approach. 
	

	\begin{table}[ht!]	
\centering	
{\small  
\pgfplotstabletypeset[header=false,
col sep = tab,
	columns/0/.style={string type},
every head row/.style={after row=\hline},
columns/0/.style={string type, column name={Algorithm}, column type/.add={@{}@{\,}}{}},
columns/1/.style={column name={R-SNR}, column type/.add={@{\,}|@{\,}}{}},
columns/2/.style={column name={CC}, column type/.add={@{\,}|@{\,}}{}},
columns/3/.style={column name={SAM}, column type/.add={@{\,}|@{\,}}{}},
columns/4/.style={column name={ERGAS}, column type/.add={@{\,}|@{\,}}{}},
columns/5/.style={column name={time}, fixed, precision=2, column type/.add={@{\,}|@{\,}}{}},
columns ={0,1,2,3,4,5},
]{exp3_table2.txt}}
\caption{Indian Pines (non-blind algorithms), no noise}
\label{tab:IP_nb}
\end{table}

	\begin{table}[ht!]	
\centering	
{\small  
\pgfplotstabletypeset[header=false,
col sep = tab,
	columns/0/.style={string type},
every head row/.style={after row=\hline},
columns/0/.style={string type, column name={Algorithm}, column type/.add={@{}@{\,}}{}},
columns/1/.style={column name={R-SNR}, column type/.add={@{\,}|@{\,}}{}},
columns/2/.style={column name={CC}, column type/.add={@{\,}|@{\,}}{}},
columns/3/.style={column name={SAM}, column type/.add={@{\,}|@{\,}}{}},
columns/4/.style={column name={ERGAS}, column type/.add={@{\,}|@{\,}}{}},
columns/5/.style={column name={time}, fixed, precision=2, column type/.add={@{\,}|@{\,}}{}},
columns ={0,1,2,3,4,5},
]{tab_reviews_ip.txt}}
\caption{Indian Pines (non-blind algorithms) with noise}
\label{tab:IP_nb_noisy}
\end{table}


In the noiseless case (see Table~\ref{tab:IP_nb}), we can see that for multilinear ranks chosen in the recoverability region (see Figure~\ref{fig:identifiability_region}), SCOTT yields similar performance to the one of STEREO with  lower computation time. 
Moreover, contrary to \cite{KanatsoulisFSM:hsr} (where $F=50$ is taken for STEREO), we found out that  tensor rank $F=100$ yields better performance.

In Table~\ref{tab:IP_nb_noisy}, white Gaussian noise is added to $\tY_H$ and $\tY_M$ with an input SNR of 25dB.
In this case, as in \cite{KanatsoulisFSM:hsr}, tensor rank $F=50$ yields better performance.
For $F=100$, TenRec gives slightly better performance than STEREO. 
Compared with the noiseless case, the performance of STEREO and TenRec deteriorate slightly, while we observe a bigger loss of performance for other methods, including SCOTT and B-SCOTT. 

\begin{figure}[htb]
 \centering
 \centerline{\includegraphics[width=7cm]{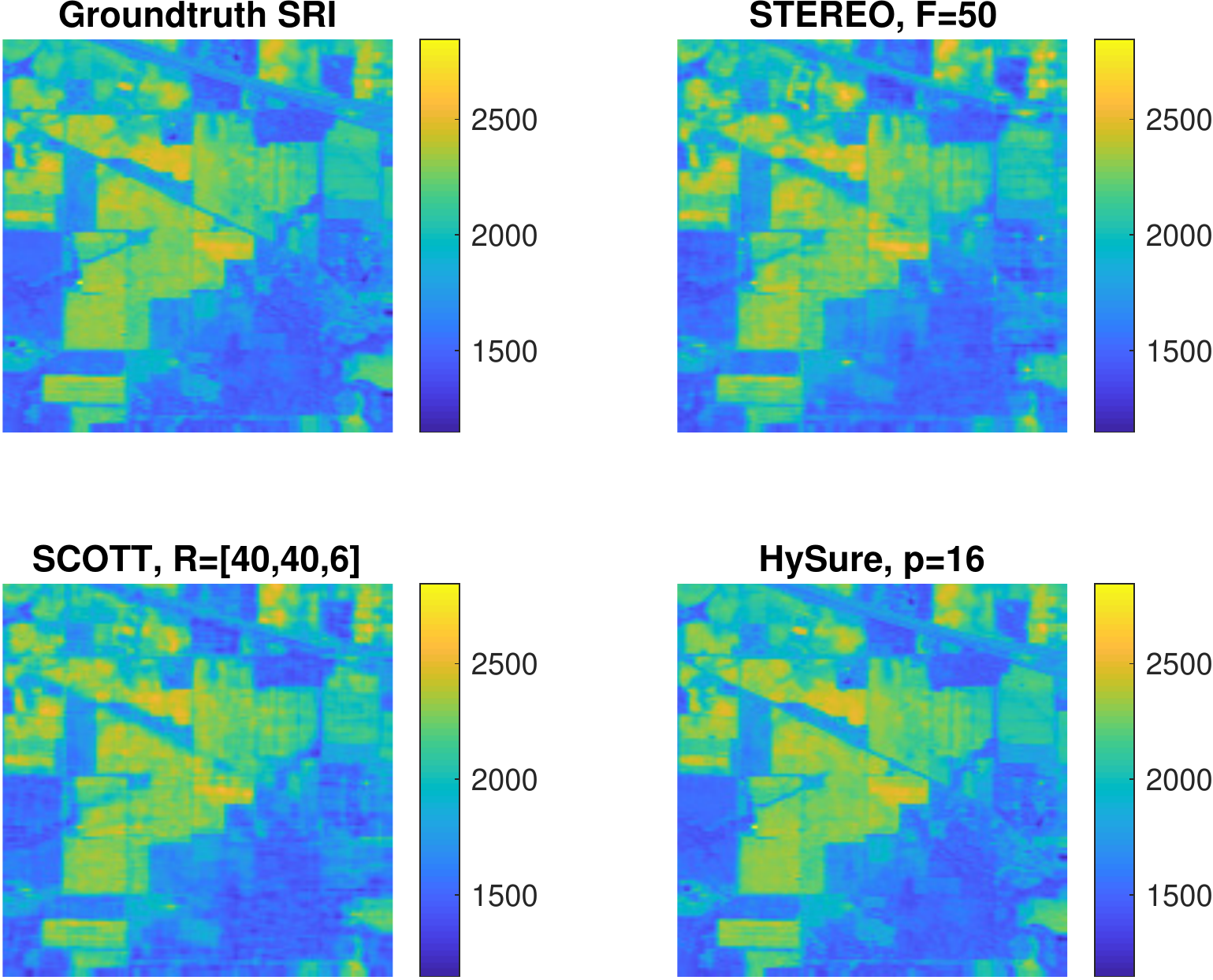}}
\caption{Spectral slice 120 of the SRI, Indian Pines}
\label{fig:visual_ip}
\end{figure}

	The other dataset is the Salinas-A scene, where  ${\tY} \in \RR^{80\times 84\times 204}$ is degraded with QuickBird specifications and $d=4$ for the HSI. 
	The results are presented in Table~\ref{tab:Sal_nb} and Figure~\ref{fig:visual_sal}.
	
		\begin{table}[ht!]
		\centering	
{\small 
\pgfplotstabletypeset[header=false,
col sep = tab,
	columns/0/.style={string type},
every head row/.style={after row=\hline},
columns/0/.style={string type, column name={Algorithm}, column type/.add={@{}@{\,}}{}},
columns/1/.style={column name={R-SNR}, column type/.add={@{\,}|@{\,}}{}},
columns/2/.style={column name={CC}, column type/.add={@{\,}|@{\,}}{}},
columns/3/.style={column name={SAM}, column type/.add={@{\,}|@{\,}}{}},
columns/4/.style={column name={ERGAS}, column type/.add={@{\,}|@{\,}}{}},
columns/5/.style={column name={time}, fixed, precision=2, column type/.add={@{\,}|@{\,}}{}},
columns ={0,1,2,3,4,5},
]{exp3_table2_sal.txt}}
\caption{Salinas A-scene (non-blind algorithms)}
\label{tab:Sal_nb}
\end{table}
\begin{figure}[ht!]
 \centering
 \centerline{\includegraphics[width=7cm]{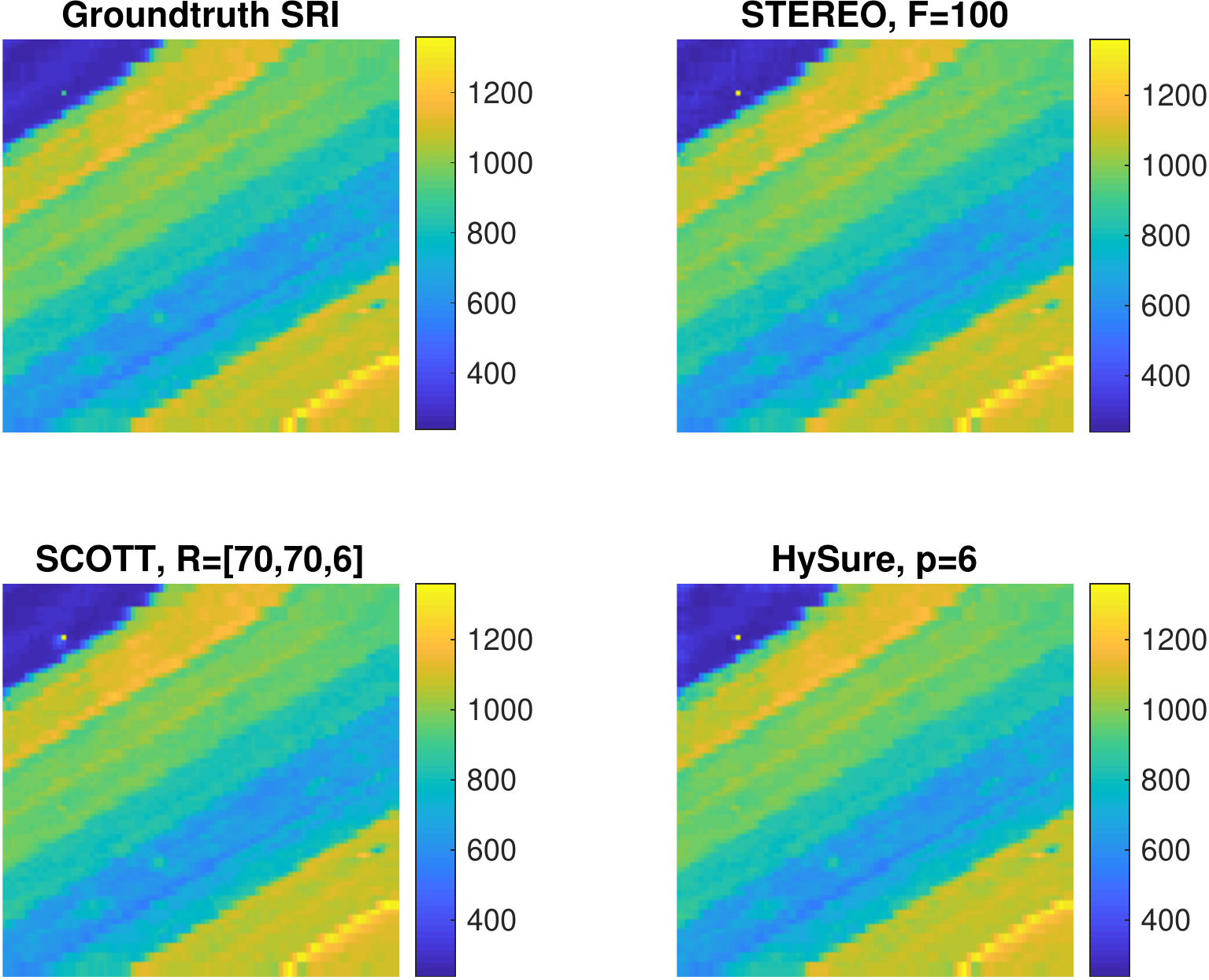}}
\caption{Spectral slice 120 of the SRI, Salinas-A scene}
 \label{fig:visual_sal}
\end{figure}

In \cite{KanatsoulisFSM:hsr}, CP-rank $F=100$ is used for STEREO.
However, we found out that for STEREO, tensor rank $F=50$ yields better reconstruction.
In Figure~\ref{fig:visual_sal}, we can see that STEREO and SCOTT can recover accurately the SRI.

	\subsubsection{Blind algorithms}
	
	We now consider the case where the spatial degradation matrices $\matr{P}_1$, $\matr{P}_2$ are unknown and compare the performance of B-SCOTT with Blind-STEREO \cite{KanatsoulisFSM:hsr}, SCUBA \cite{KanaFSM18:scuba} and HySure.
	We also compare the proposed blind algorithm to Blind-TenRec, the algebraic initialization of Blind-STEREO. 
	White gaussian noise is added to the HSI and MSI, with a SNR of 15dB and 25dB, respectively. 
	We consider two other datasets; the first one is a portion of the Pavia University, where $\tY \in \RR^{608\times 366\times 103}$ is degraded with QuickBird specifications for the MSI and $d=4$ for the HSI.
	We demonstrate the results in Table~\ref{tab:bPavia} and Figure~\ref{fig:visual_pavia} for visual reconstruction.
	For B-SCOTT, in the case where $R = [152,84,3]$, no compression is performed.
	In the following tables, the numbers between parentheses denote the number of blocks in which the HSI and MSI are split.
	For SCUBA, the numbers between brackets represent $[F, R_3]$. 
	
\begin{table}[ht!]	
\centering
{\small 
\pgfplotstabletypeset[header=false,
col sep = tab,
	columns/0/.style={string type},
every head row/.style={after row=\hline},
columns/0/.style={string type, column name={Algorithm}, column type/.add={@{}@{\,}}{}},
columns/1/.style={column name={R-SNR}, column type/.add={@{\,}|@{\,}}{}},
columns/2/.style={column name={CC}, column type/.add={@{\,}|@{\,}}{}},
columns/3/.style={column name={SAM}, column type/.add={@{\,}|@{\,}}{}},
columns/4/.style={column name={ERGAS}, column type/.add={@{\,}|@{\,}}{}},
columns/5/.style={column name={time}, fixed, precision=2, column type/.add={@{\,}|@{\,}}{}},
columns ={0,1,2,3,4,5},
]{tab4.txt}}
\caption{Pavia University (blind algorithms)}
\label{tab:bPavia}
\end{table}

\begin{figure}[htb]
 \centering
 \centerline{\includegraphics[width=7cm]{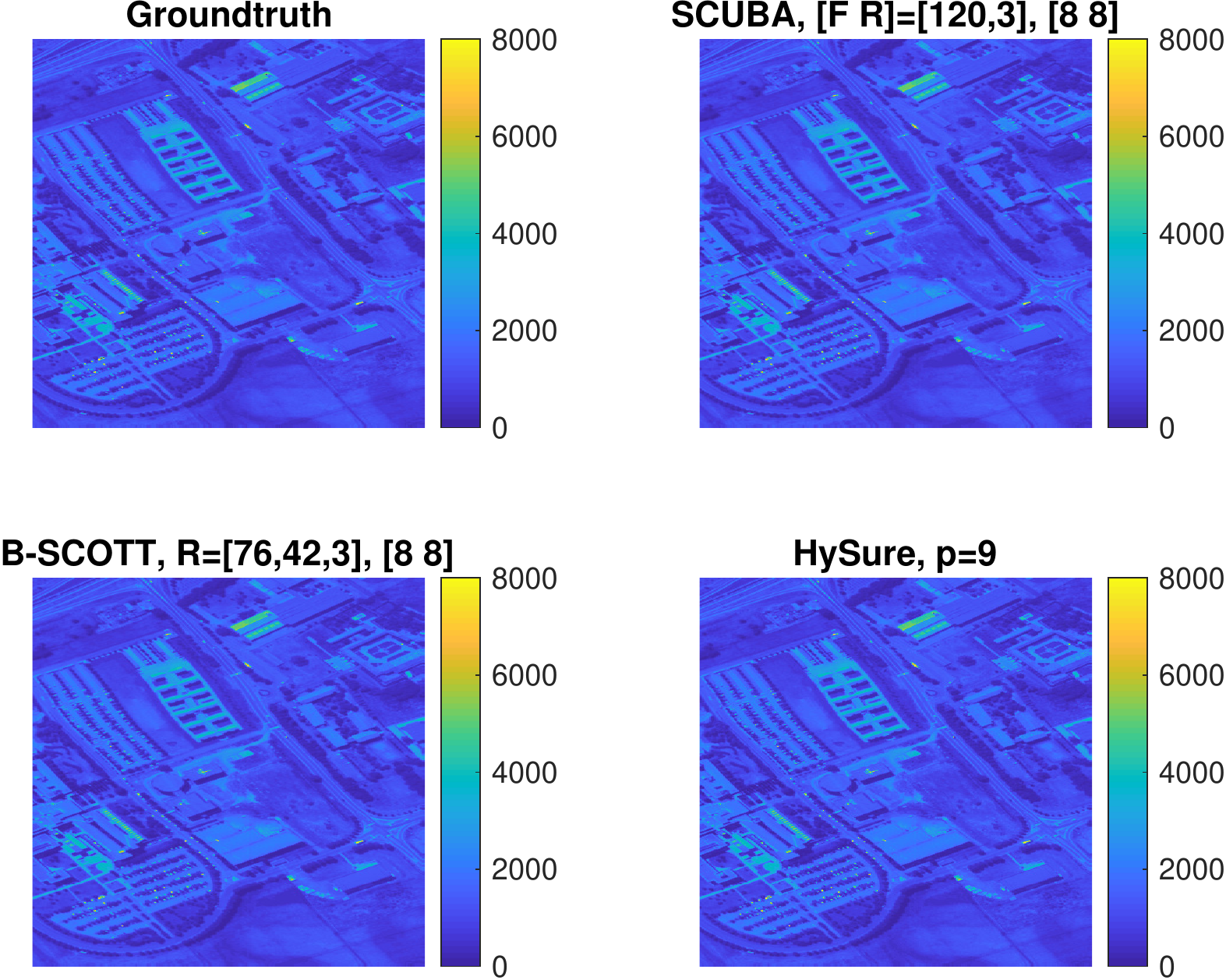}}
 \caption{Spectral slice 44 of the SRI, Pavia University}
\label{fig:visual_pavia}
\end{figure}
	
	\begin{table}[ht!]	
	\centering
{\small 
\pgfplotstabletypeset[header=false,
col sep = tab,
	columns/0/.style={string type},
every head row/.style={after row=\hline},
columns/0/.style={string type, column name={Algorithm}, column type/.add={@{}@{\,}}{}},
columns/1/.style={column name={R-SNR}, column type/.add={@{\,}|@{\,}}{}},
columns/2/.style={column name={CC}, column type/.add={@{\,}|@{\,}}{}},
columns/3/.style={column name={SAM}, column type/.add={@{\,}|@{\,}}{}},
columns/4/.style={column name={ERGAS}, column type/.add={@{\,}|@{\,}}{}},
columns/5/.style={column name={time}, fixed, precision=2, column type/.add={@{\,}|@{\,}}{}},
columns ={0,1,2,3,4,5},
]{tab5.txt}}
\caption{ Cuprite (blind algorithms)}
\label{tab:bCuprite}
\end{table}

	In the second case, we consider the Cuprite dataset, where $\tY \in \RR^{512\times 614\times 224}$ is degraded with LANDSAT specifications and $d=4$.
	The results are presented in Table~\ref{tab:bCuprite} and Figure~\ref{fig:visual_cuprite}.

These two previous examples show that, for different splittings, and ranks taken from \cite{KanaFSM18:scuba}, B-SCOTT yields the best performance. 
For certain multilinear ranks, it even outperforms SCUBA with lower computation time.
Moreover, it outperforms Blind-STEREO and Blind-TenRec.
In terms of visual reconstruction, our algorithm can recover accurately the details of the groundtruth SRI, even though the spatial degradation matrices are unknown.

	\subsubsection{Hyperspectral pansharpening}
	
		\begin{figure}[htb]
 \centering
 \centerline{\includegraphics[width=9cm]{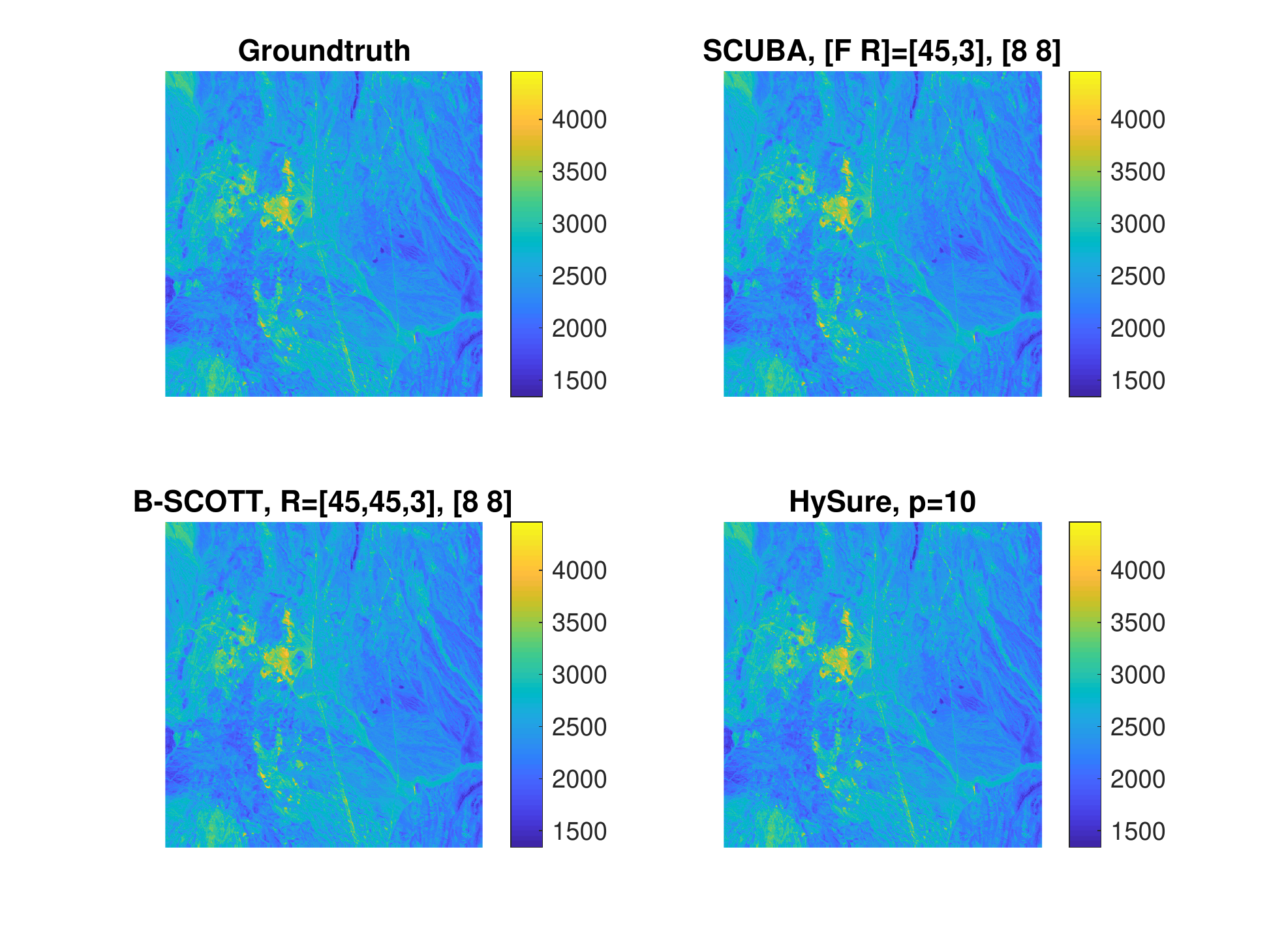}}
 \vspace{-0.9cm}
\caption{Spectral slice 44 of the SRI, Cuprite}
\label{fig:visual_cuprite}
\end{figure}

	Next, we address the pansharpening problem, which consists in fusion of a hyperspectral image and a panchromatic image (PAN) $\tY_P \in \RR^{I\times J\times 1}$. 
In this case, the spectral degradation matrix is obtained by averaging over the full spectral range of the groundtruth SRI, so that $\matr{P_M} \in \RR^{1\times K}$.
CP-based algorithms are not applicable, since their initialization is based on the CPD of the MSI (which is a matrix in the case of PAN images). 
In Table~\ref{tab:tab4res_ip}, the metrics are shown for different multilinear ranks for the Indian Pines dataset.
We also compare our results to those of HySure.
 We can see that even though the only possible value of $R_3$ is 1 for B-SCOTT, the algorithm still manages to yield a good recovery of the SRI.
On the other hand, SCOTT can also recover the SRI accurately, but is more sensitive to the choice of the multilinear rank.

	\begin{table}[ht!]	
	\centering
{\small 
\pgfplotstabletypeset[header=false,
col sep = tab,
	columns/0/.style={string type},
every head row/.style={after row=\hline},
columns/0/.style={string type, column name={Algorithm}, column type/.add={@{}@{\,}}{}},
columns/1/.style={column name={R-SNR}, column type/.add={@{\,}|@{\,}}{}},
columns/2/.style={column name={CC}, column type/.add={@{\,}|@{\,}}{}},
columns/3/.style={column name={SAM}, column type/.add={@{\,}|@{\,}}{}},
columns/4/.style={column name={ERGAS}, column type/.add={@{\,}|@{\,}}{}},
columns/5/.style={column name={time}, fixed, precision=2, column type/.add={@{\,}|@{\,}}{}},
columns ={0,1,2,3,4,5},
]{exp4_table2_ip.txt}}
\caption{Indian Pines (pansharpening)}
\label{tab:tab4res_ip}
\vspace{-0.9cm}
\end{table}
	
\subsection{Synthetic examples}

In most cases, generic recoverability conditions proposed in \cite{KanatsoulisFSM:hsr} are less restrictive than that of the Tucker approach.
The tensor rank $F$ can be larger than the dimensions of the SRI, while the multilinear ranks are bounded by its dimensions.
This gives the CP-based model better modelling power than the Tucker-based model, as shown for real data: regardless of the computation time, STEREO gives better performance than SCOTT.
However, there may exist deterministic cases in which the Tucker recoverability conditions are satisfied while nothing can be concluded from the results of \cite{KanatsoulisFSM:hsr}.
The goal of this subsection is to provide synthetic examples for such situations in the noiseless and noisy cases.
While these examples do not necesserily look like realistic hyperspectral images, they do help to better understand the recoverability conditions of the SRI and to evaluate their impact on the estimation performance.

	\subsubsection{Generating synthetic SRI}
	
	First, we explain how the synthetic SRI $\tY \in \RR^{I\times J\times K}$ are generated. 
	We consider $N$ spectral signatures $\vect{s}_1,\ldots, \vect{s}_N$ obtained from the Indian Pines groundtruth data \cite{Landsat18:data}.
	The SRI is split into $M^2$ equal blocks along the spatial dimensions.
	In each $\frac{I}{M}\!\times\!\frac{J}{M}$ block, at most one material is active, indicated by a number in the corresponding cell of a parcel map (see Table~\ref{tab:synth1} for an example).

    Formally, the SRI is computed as
     \begin{equation}\label{eq:synth}
     \tY = \sum\limits_{n=1}^{N} \matr{A}_n\otimes\vect{s}_n, 
     \end{equation}
        	\begin{table}[htb!]
	\centering
	\begin{tabular}{|c|c|}
      \hline
     1 & 2\\
      \hline
      2 & \\
      \hline
    \end{tabular}
    \vspace{0.2cm}
    \caption{Parcel map for $N=2$}
\label{tab:synth1}
    \end{table}
    
\noindent  where the abundance map $\matr{A}_n$ is a block matrix with Gaussians of fixed size present on the blocks corresponding to material $n$ in the parcel map.

	For instance, we consider the case presented in Table~\ref{tab:synth1}; the two abundance maps are
	\begin{equation*}
	\matr{A}_1 = \begin{bmatrix} \matr{H} & 0 \\ 0 & 0\end{bmatrix},\quad\matr{A}_2 = \begin{bmatrix} 0 & \matr{H} \\ \matr{H} & 0\end{bmatrix},
	\end{equation*}
	where $\matr{H}$ is a $60\times 60$ Gaussian with $\sigma = 20$.
	To illustrate this example, we show in Figure~\ref{fig:synth} two spectral bands of $\tY$.
	
	\begin{figure}[htb]
 \centering
 \vspace{-0.3cm}
 \centerline{\includegraphics[width=9cm]{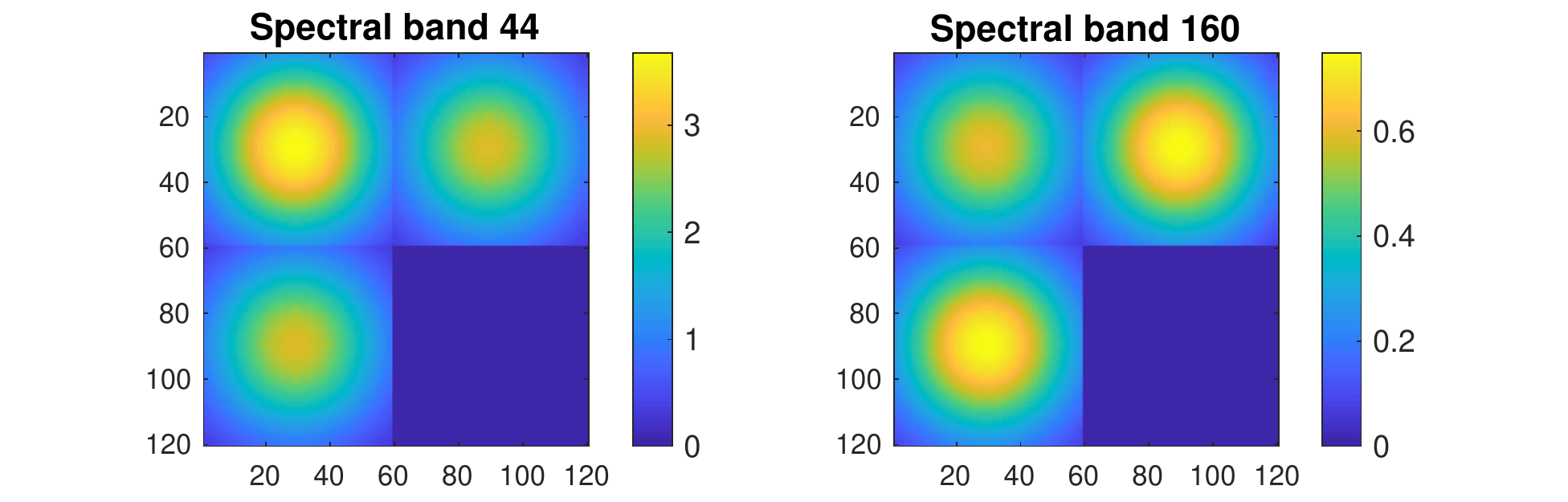}}
\caption{Spectral bands of the synthetic SRI with $N=2$}
\label{fig:synth}
\end{figure}

\subsubsection{Non-existing low-rank approximations}\label{sec:struct_examples}
	
	Let us consider the example introduced in Table~\ref{tab:synth1}.
	Due to separability of the Gaussians, $\tY$ has the following multilinear decomposition:
			\begin{align*}
	\tY = \mlprod{\tG}{{\mU}}{{\mV}}{{\matr{S}}},&\\
	\text{where } \tG_{:,:,1} =\begin{bmatrix} 1 & 0 \\ 0 & 0\end{bmatrix},\tG_{:,:,2} = \begin{bmatrix} 0 & 1 \\ 1 & 0\end{bmatrix},&\\ \mU = \mV = \begin{bmatrix} H & 0 \\ 0 & H \end{bmatrix}\text{ and } \matr{S} = [\vect{s}_1\quad \vect{s}_2]&.
	\end{align*}
	The multilinear rank of $\tY$ is equal to $(2,2,2)$, while the tensor rank of $\tY$ is equal to the tensor rank of $\tG$, which is known to be  equal to $F=3$ \cite[Ex.~2]{Comon09:tens}, \cite[Ex.~6.6]{ComonGLM08:tens}. 
	This is a well-known case where the best  rank-2 CP approximation does not exist \cite{BrachatCMT09:tens, Como02:oxford}, thus we can expect problems with the CP-based approach.

		We generate $\tY_H$ and $\tY_M$ with a downsampling ratio of $d=4$ for the HSI and LANDSAT specifications for the MSI (see Appendix~\ref{app:2}). 
		No noise is added to the MSI and HSI. 
		We run STEREO and TenRec  for $F$ in $[1:40]$ and SCOTT for $R_1=R_2$ in $[1:40]$  and $R_3$ in $[1:10]$ under recoverability conditions.
		For each algorithm, we compute the R-SNR as a function of the rank; the results are provided on Figure~\ref{fig:synth1}.
		As a comparison, on the same plot as STEREO and TenRec, we plot the results of SCOTT for $R_3 = N$ and $R_1 = R_2 = F$.
		
\begin{figure}[htb!]
\begin{minipage}[b]{0.49\linewidth}
 \centering
 {\includegraphics[height = 3.6cm]{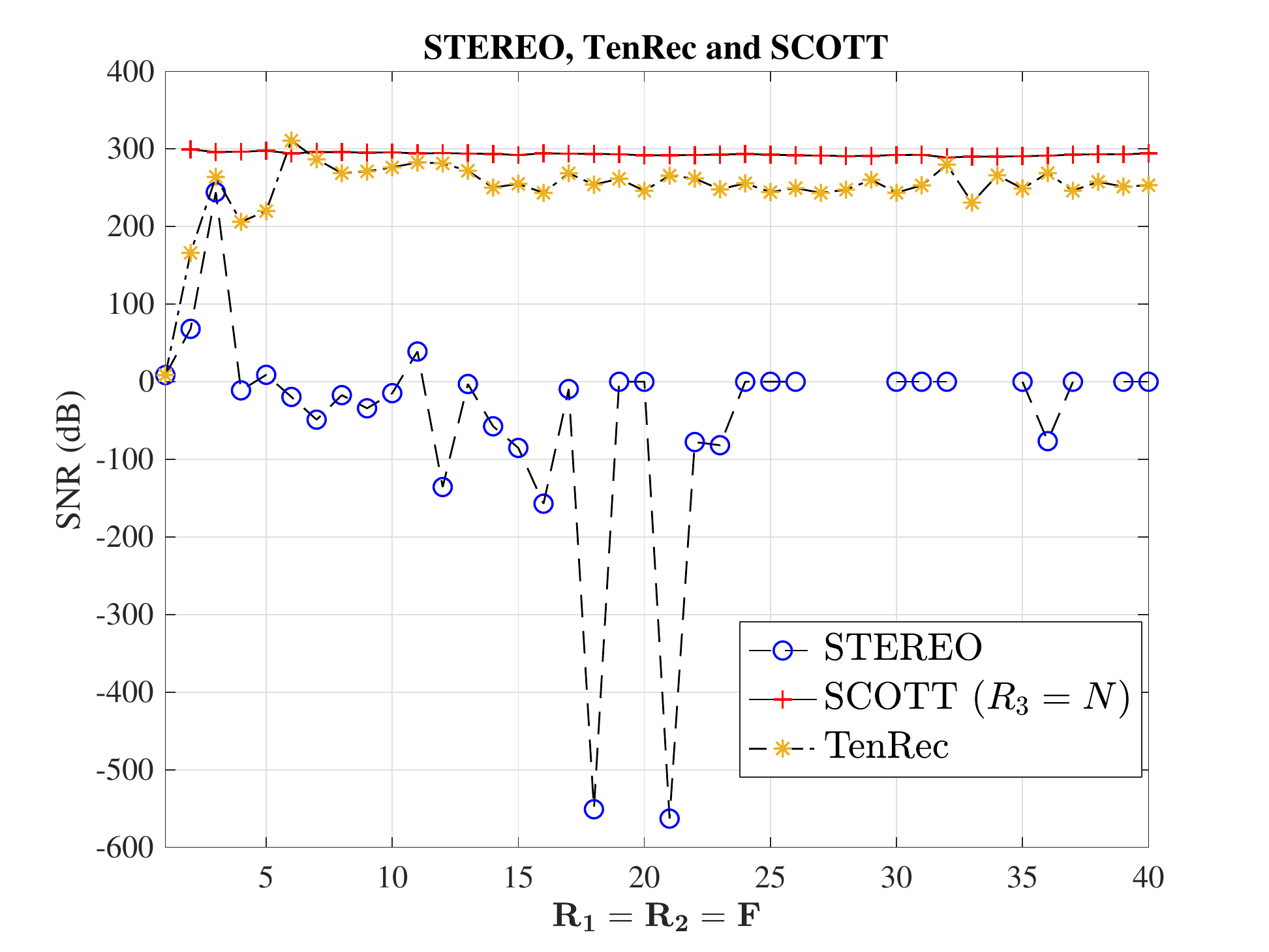}}
\end{minipage}
\hfill
\begin{minipage}[b]{0.49\linewidth}
 \centering
 {\includegraphics[height = 3.6cm]{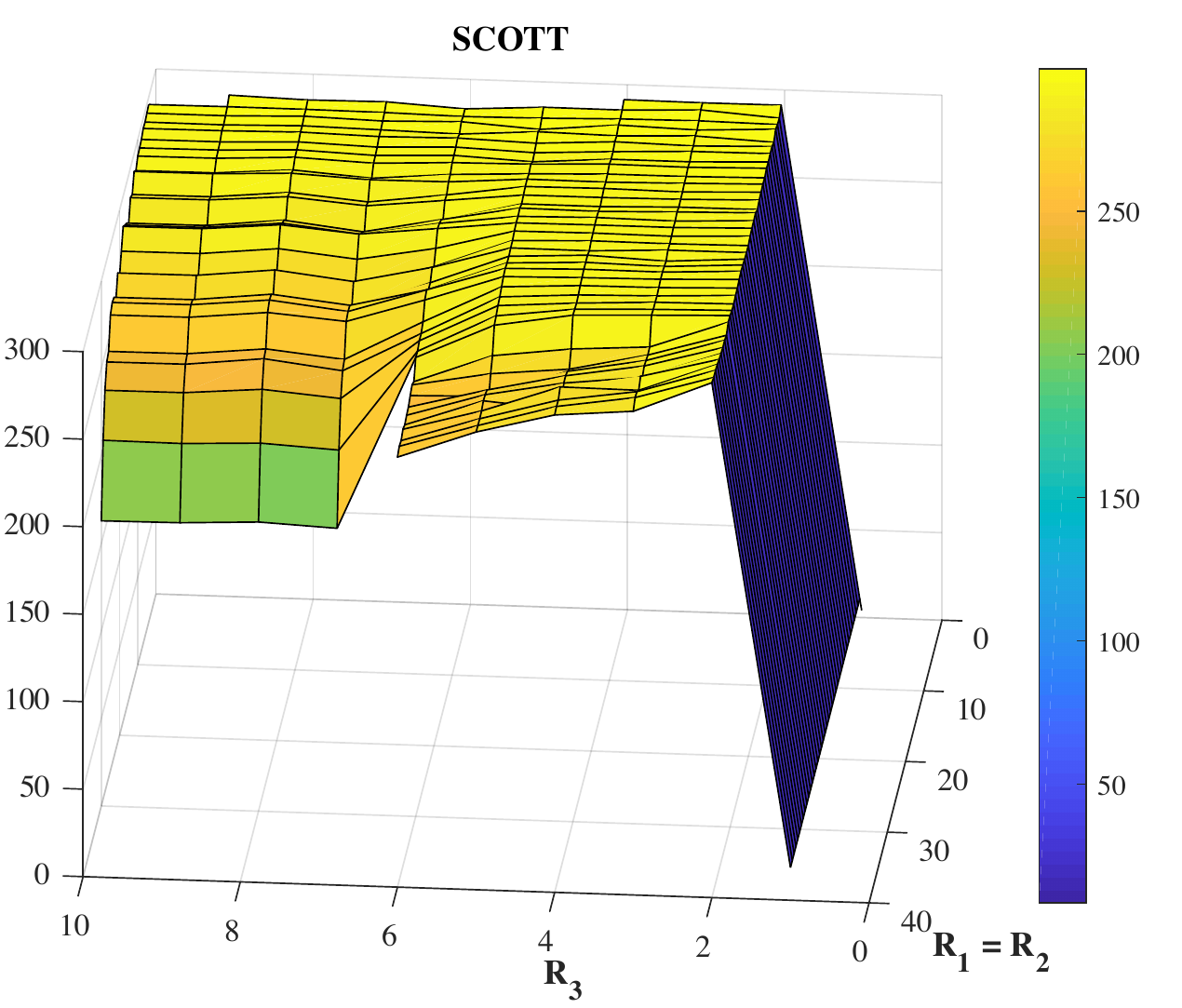}} 
\end{minipage}
\caption{R-SNR as a function of the rank}
\label{fig:synth1}
\end{figure}

For SCOTT, the best reconstruction error (given by R-SNR) is obtained for $R_3 = N$ and is rather insensitive to the choice of $R_1=R_2$.
$R_3$ can also be chosen larger than $N$ without significant loss of performance. 
For STEREO, only rank $F=3$ allows for an accurate reconstruction of the SRI.
For other tensor ranks, either the algorithm breaks (when no point is plotted, e.g. $F = 32$) or leads to inaccurate recovery.
TenRec however achieves the correct recovery for a wide range of tensor ranks. 
We can see that in this case, performing STEREO iterations after TenRec leads to a loss of performance.
We believe that this is due to the presence of colinear factors in the approximation causing ill-conditioning of ALS iterations. 
However, for noisy or real examples, this phenomenon is not likely to occur.

\subsubsection{Higher rank and noisy example}\label{sec:noisyex}

We also consider a slightly more realistic scenario.
The following example is made of $N=7$ materials, generated similarly to the previous example, as illustrated in Table~\ref{tab:ex2} and Figure~\ref{fig:synth5}.
The abundance maps are arranged along an anti-diagonal pattern, in a similar fashion as for the Salinas-A dataset.
    
    \begin{table}[htb!]
\centering
\begin{tabular}{|c|c|c|c|}
\hline
1 & 2 & 3 & 4 \\ \hline
2 & 3 & 4 & 5 \\ \hline
3 & 4 & 5 & 6 \\ \hline
4 & 5 & 6 & 7 \\ \hline
\end{tabular}
\vspace{0.2em}
        \caption{Parcel map for $N=7$}
        \label{tab:ex2}
\end{table}

    \begin{figure}[htb]
 \centering
 \vspace{-0.3cm}
 \centerline{\includegraphics[width=9cm]{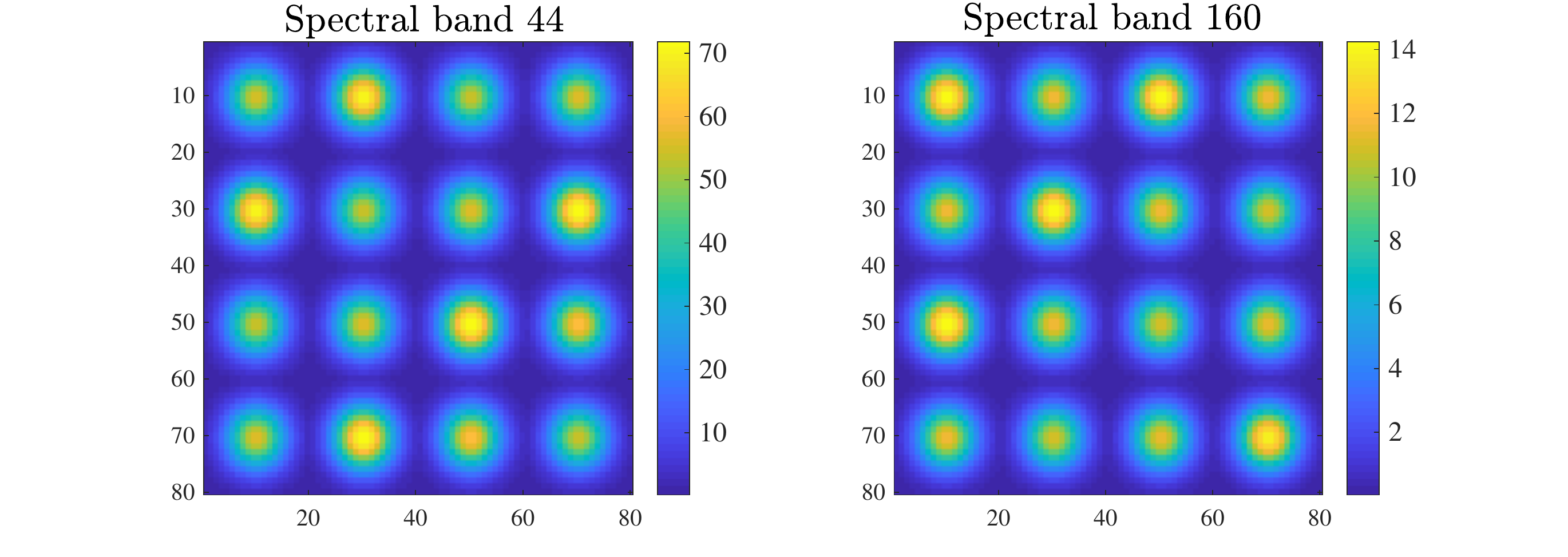}}
\caption{Spectral bands of the synthetic SRI with $N=7$}
\label{fig:synth5}
\end{figure}
    
In this example, $\tY \in \RR^{80\times 80\times 200}$ is degraded with QuickBird specifications for the MSI and $d=4$ for the HSI.
White Gaussian noise is added to the degraded images with an input SNR of 35dB.
The multilinear rank of the SRI is $\vect{R}=(4,4,7)$ while we do not know the tensor rank.
Similarly, we run both algorithms with the same setup as in the previous example, including a comparison of STEREO and TenRec, and SCOTT for $R_3 = N$ and an overestimated $R_3=15$. 
Results are presented in Figure~\ref{fig:synth2}.

For SCOTT, the best R-SNR is obtained for $\vect{R} = (4,4,7)$, which is the multilinear rank of the noiseless tensor.
Moreover, the best reconstruction error is obtained for $R_3 = N$: in this case, the performance of SCOTT is better than that of CP-based approaches.
SCOTT is also robust to an overestimation of $R_3$ or $R_1=R_2$.
TenRec and STEREO have almost the same performance, which is lower than that of SCOTT in this example.

 
 \begin{figure}[htb!]
\begin{minipage}[b]{0.49\linewidth}
 \centering
 {\includegraphics[height=3.5cm]{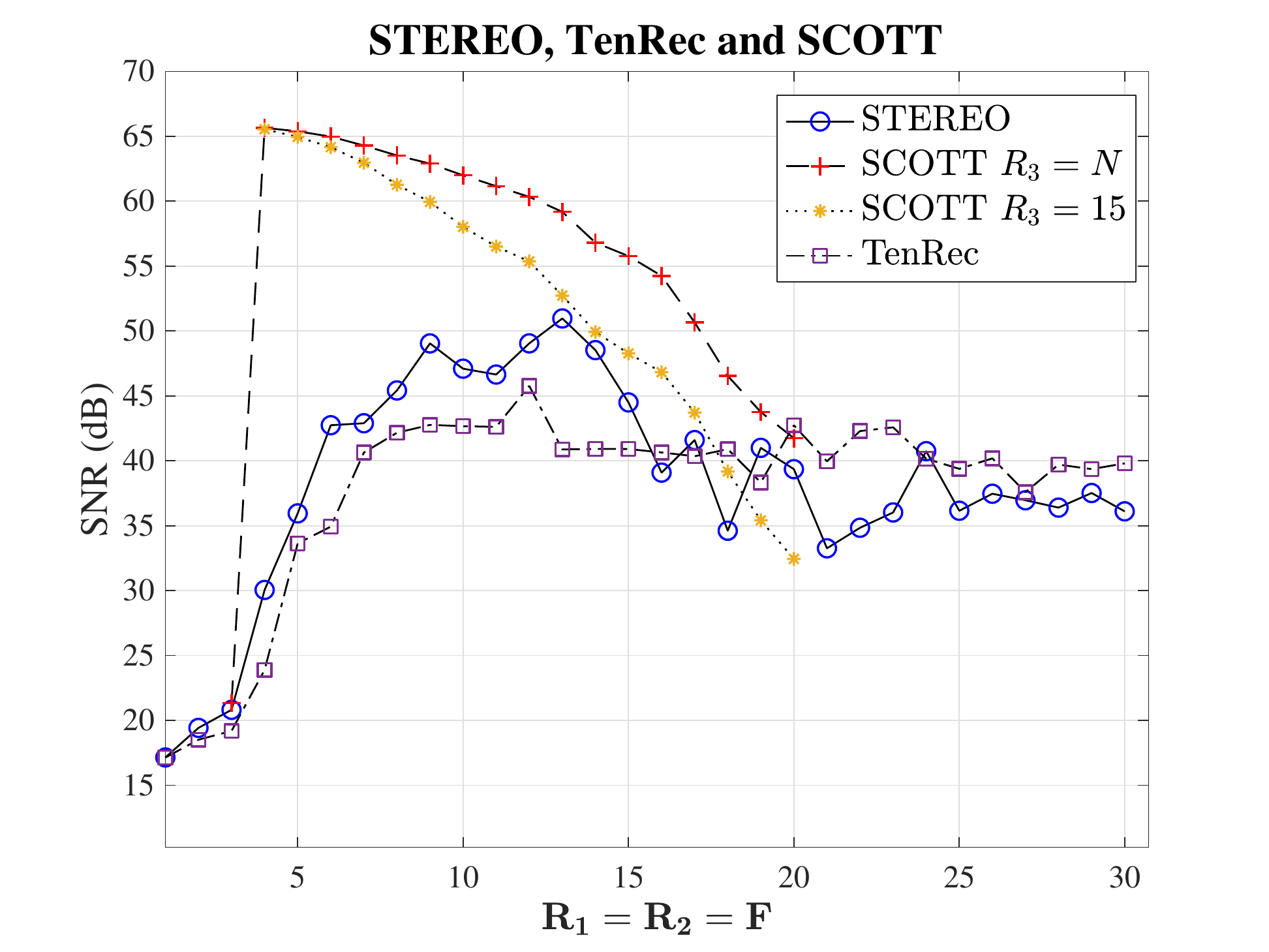}}
\end{minipage}
\hfill
\begin{minipage}[b]{0.49\linewidth}
 \centering
 {\includegraphics[height=3.5cm]{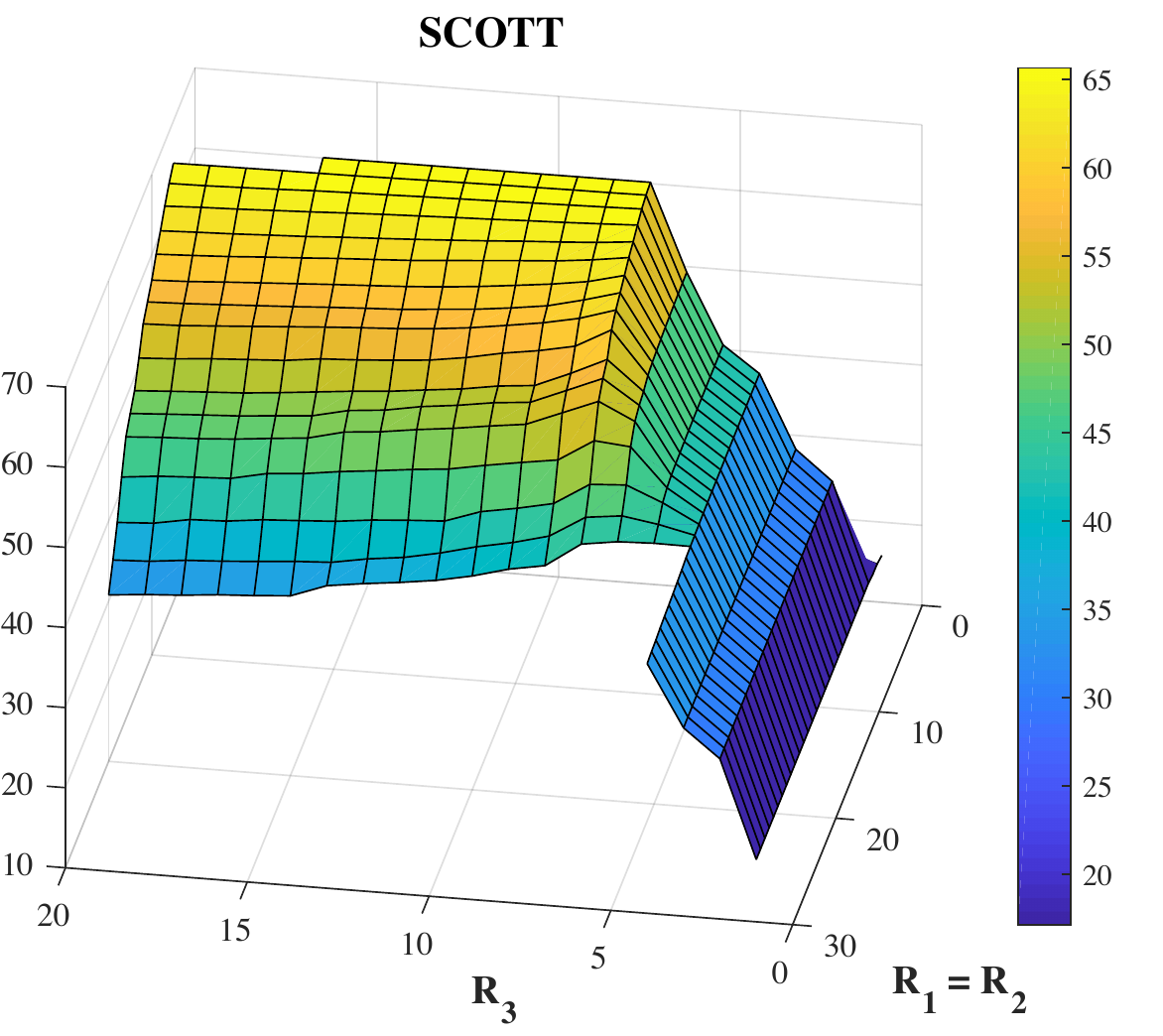}}
\end{minipage}
\caption{R-SNR as a function of the rank}
\label{fig:synth2}
\end{figure}

\subsubsection{Block tensor}\label{sec:btensor}
	
	Here, we provide an example in which the CPD of the MSI is not unique but the CP approach still achieves the correct recovery of the SRI. 
	This dataset is made of $N=6$ materials with spatial degradation ratio of $d=4$ for the HSI and Quickbird specifications for the MSI so that $K_M < N$.
Each abundance map is made of two $10\times 10$ Gaussians of width $\sigma = 4$, as in Table~\ref{tab:block}.

	\begin{table}[htb!]
\begin{center} {\footnotesize
    \begin{tabular}{|l|l|l|}
      \hline
     1\quad  & &\\
     \quad\quad 1 & & \\
      \hline
              & $\ddots$\quad &  \\
               \hline
               & & $N$ \quad \\
               & & \quad\quad $N$ \\
      \hline
    \end{tabular}}
        \end{center}
        \caption{Parcel map for blocked tensor, $N=6$}
        \label{tab:block}
    \end{table}

In this example, the tensor rank of $\tY$ is $F=12$ while the multilinear rank is $\vect{R}=(12,12,6)$. 
The CP decompositions of both MSI and HSI  are not unique, but the recoverability conditions given in Corollary~\ref{cor:recoveryNonidentifiable} are satisfied. 
This is an example of a tensor admitting a block-term decomposition, where the abundance maps in (\ref{eq:synth}) corresponding to different materials are not rank-one. 
While this is not a realistic example due to small ranks of abundance maps, it is inspired by the standard linear mixing model \cite{YokoyaGC17:hsr} with few materials.  

\begin{figure}[htb!]
\begin{minipage}[b]{0.49\linewidth}
 \centering
 {\includegraphics[height=3.6cm]{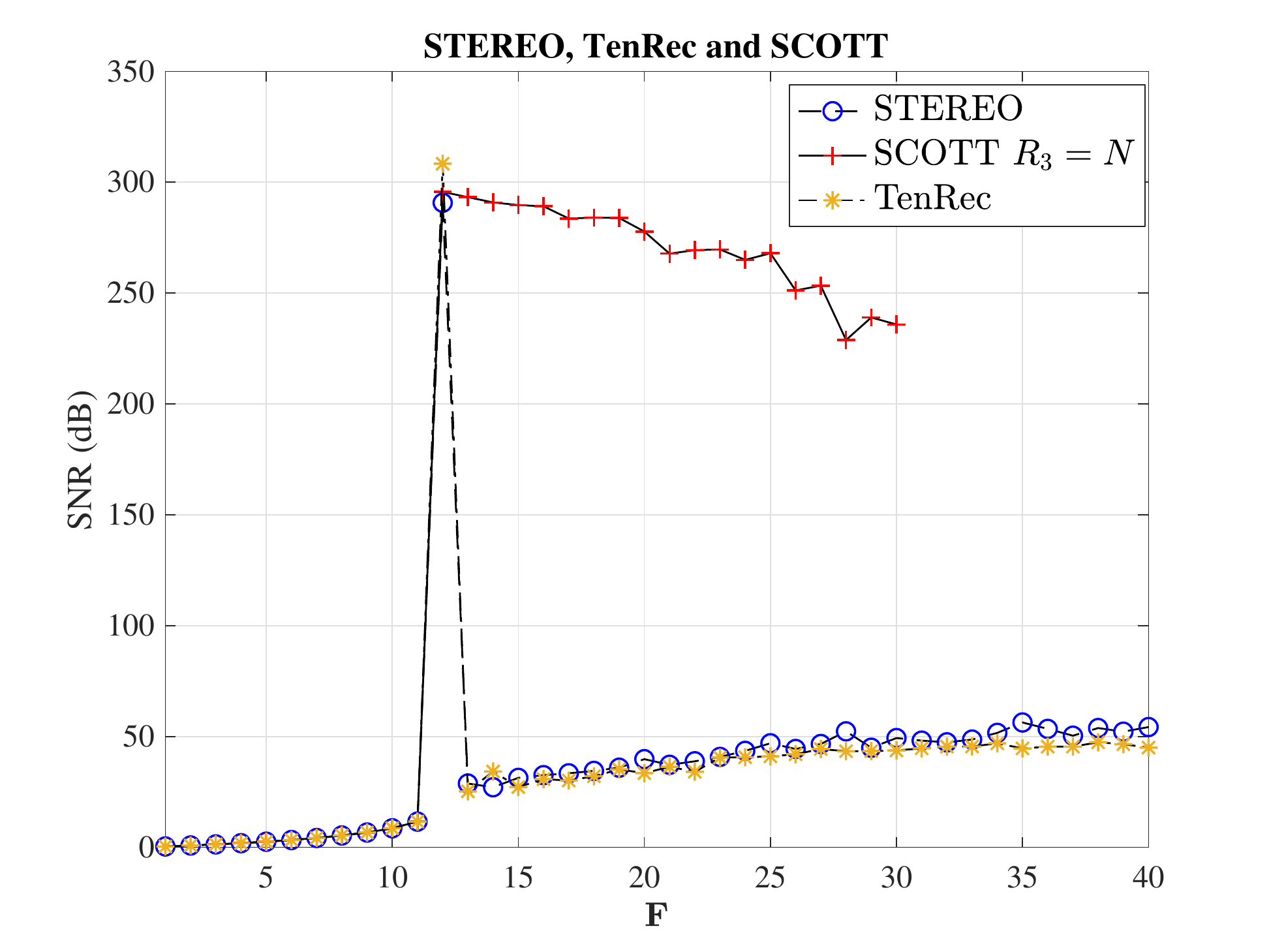}}
\end{minipage}
\hfill
\begin{minipage}[b]{0.49\linewidth}
 \centering
 {\includegraphics[height=3.6cm]{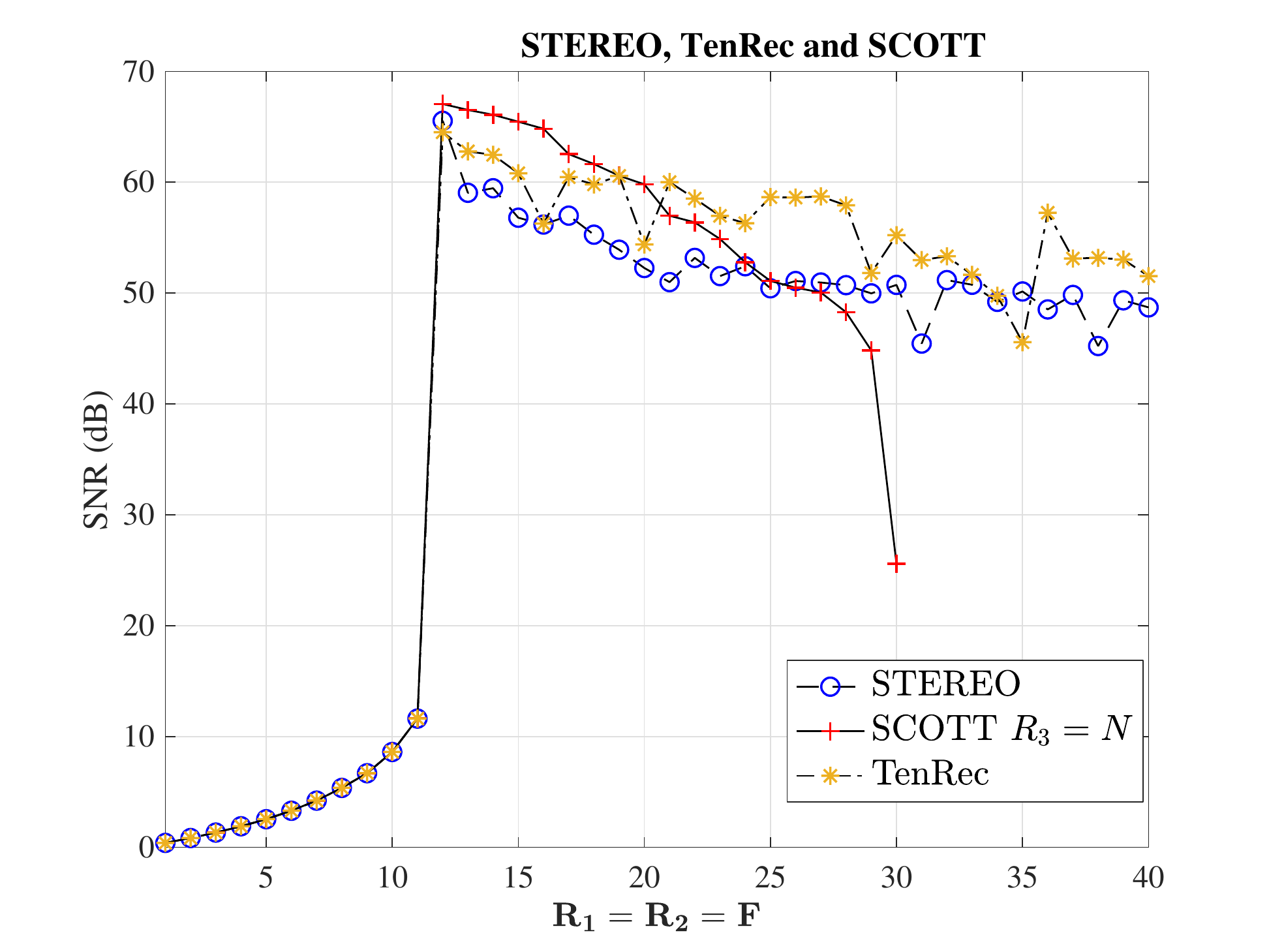}}
\end{minipage}
\caption{R-SNR as a function of the rank in the noiseless case (left) and with 35dB input SNR (right) }
\label{fig:bktens}
\end{figure}
	
In Figure~\ref{fig:bktens},  we show the R-SNR as a function of the rank for STEREO and SCOTT for the noiseless and noisy cases. 
In the noiseless case, under recoverability conditions, our Tucker-based approach provides good reconstruction for a variety of ranks and $R_3 \geq N$.
For STEREO and TenRec, we can see that even though the CP model is not identifiable, $F=12$ allows correct reconstruction of the SRI with almost the same performance as that of SCOTT for $\vect{R}=(12,12,6)$ (up to machine precision). 

This example corroborates  Corollary~\ref{cor:recoveryNonidentifiable} and shows that identifiability of the CP model (as it is formulated in \cite{KanatsoulisFSM:hsr}) is not necessary to reconstruct $\tY$ accurately,  and partial uniqueness may be sufficient.

In the noisy case, the three algorithms have almost the same performance for $R_1=R_2=F=12$.
However, for $F\geq 21$, TenRec gives better performance than SCOTT, and for $F\geq 26$, STEREO overcomes our approach.

\subsection{Choice of multilinear ranks in the presence of noise}
In Section~\ref{sec:id}, we provided a theorem for recoverability of the SRI. 
In this subsection, we show that the conditions of Theorem~\ref{thm:TuckerIdentifiabilityDeterministic} also give hints on choosing the multilinear ranks for HSR in ``signal+noise'' and semireal scenarios.

\subsubsection{Singular values of the unfoldings}
Motivated by step 1 of Algorithm~\ref{alg:hosvd}, where the factor matrices $\mU,\mV,\mW$ are computed by HOSVD of the HSI and MSI,  and by the first set of conditions in Theorem~\ref{thm:TuckerIdentifiabilityDeterministic}, we look at the singular values of $\unfold{Y}{1}_M$, $\unfold{Y}{2}_M$ and $\unfold{Y}{3}_H$.

We first consider the synthetic data from Figure~\ref{fig:synth} with $N=2$ materials, and  add white Gaussian noise to $\tY_H$ and $\tY_M$ with different SNR: 20dB, 35dB, 60dB and no noise.
In Figure~\ref{fig:svd1}, we plot the 15 first singular values of the unfoldings on a semi-log scale.

\begin{figure}[htb!]
\begin{minipage}[b]{.98\linewidth}
  \centering
  \centerline{\includegraphics[width=6.7cm]{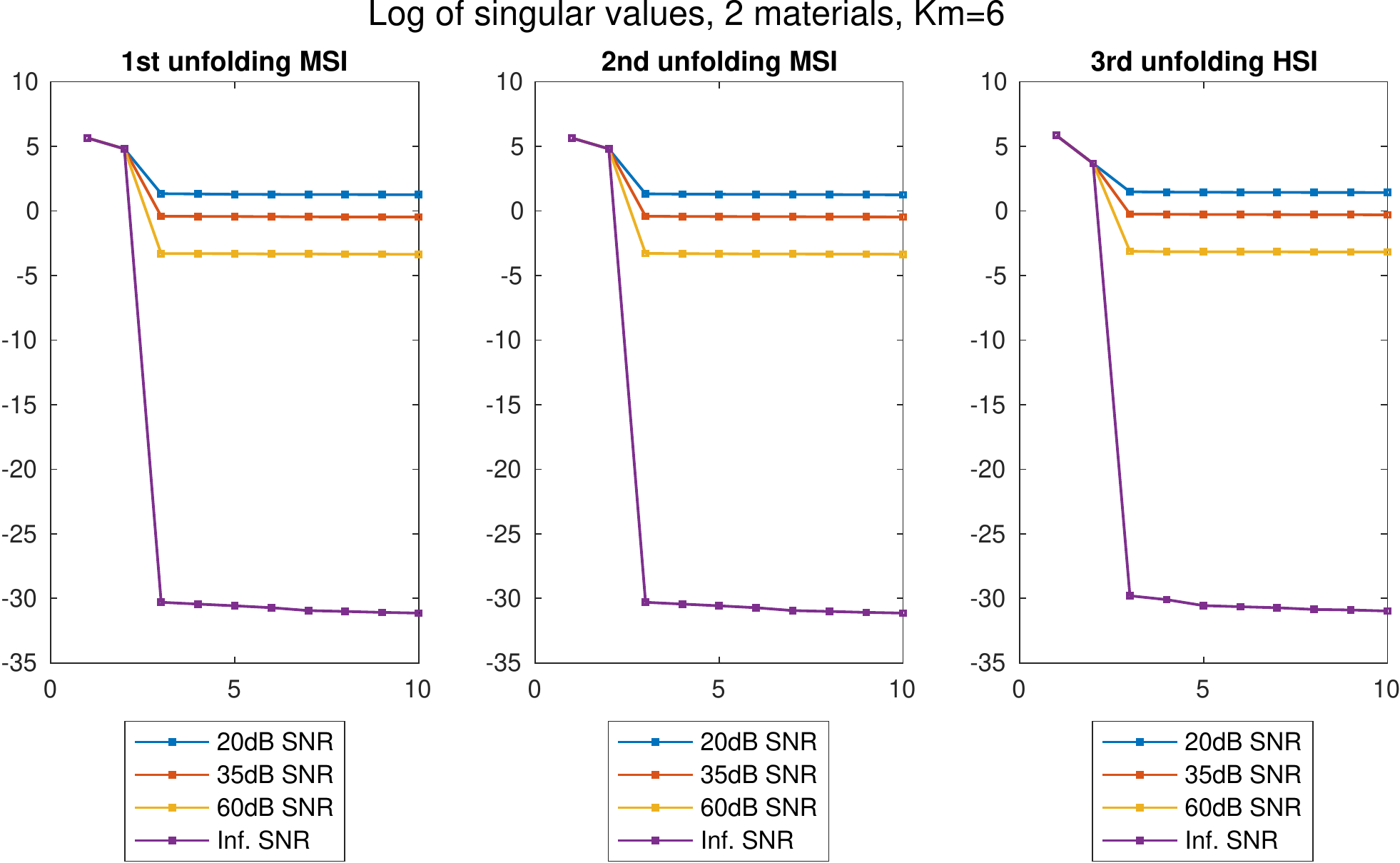}}
  \caption{Logarithm of the first 15 singular values for the three unfoldings}\medskip
  \label{fig:svd1}
\end{minipage}
\end{figure}

We can see that for all the considered noise levels, the singular values are well separable. 
The corners of the curves at singular values $(2,2,2)$ are coherent with the theoretical multilinear rank of the synthetic SRI.

We now consider the semi-real datasets Indian Pines and Salinas-A and plot the singular values of the unfoldings on a semi-log scale on Figures~\ref{fig:svd_ip} and~\ref{fig:svd_salA}.

\begin{figure}[htb!]
\begin{minipage}[b]{.98\linewidth}
  \centering
  \centerline{\includegraphics[width=8cm]{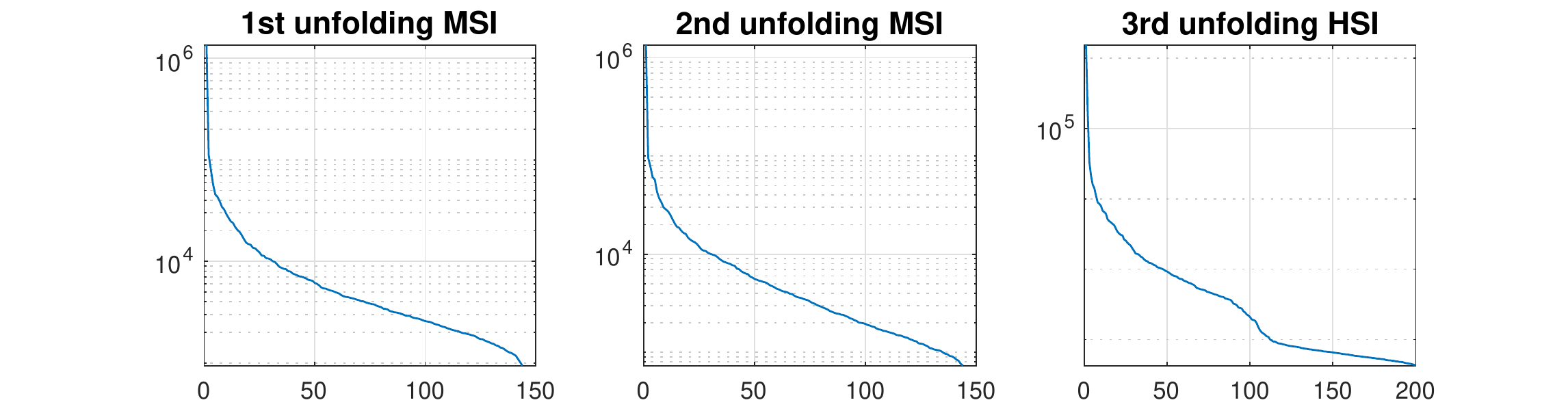}}
  \caption{Singular values for the three unfoldings, Indian Pines}\medskip
  \label{fig:svd_ip}
\end{minipage}
\end{figure}

\begin{figure}[htb!]
\begin{minipage}[b]{.98\linewidth}
  \centering
  \centerline{\includegraphics[width=8cm]{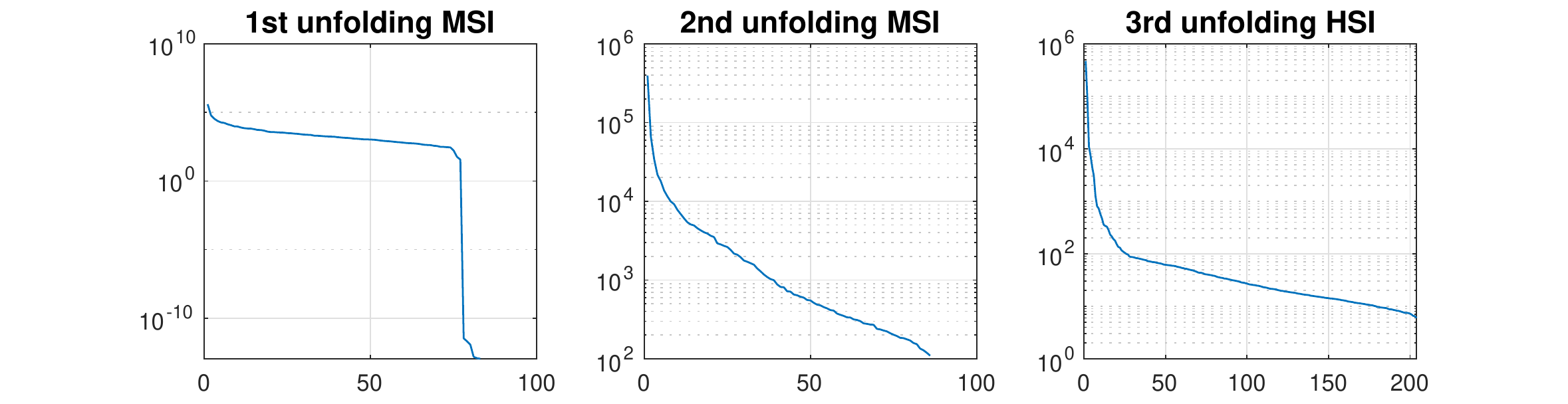}}
  \caption{Singular values for the three unfoldings, Salinas-A scene}\medskip
  \label{fig:svd_salA}
\end{minipage}
\end{figure}

In the semi-real cases,  a clear corner in the singular value curves  cannot be found, 
 because these examples do not correspond to  a ``low-rank signal+noise'' scenario, contrary to the case of synthetic data.
Moreover, the HSI and MSI are not necessarily low-rank: hence, the Tucker approach only performs a \emph{low-rank approximation} of the data.
Hence, the SVD of the unfolding does not provide as much information as for the synthetic case, in which the groundtruth data are explicitly designed to be low-rank.

\subsubsection{Influence on the reconstruction error}
Next,  we consider the R-SNR and cost function $f_T$ as functions of the multilinear rank.
We run SCOTT for the ranks $R_1 = R_2$ in $[10:50]$ and $R_3$ in $[2:25]$ for which the recoverability condition holds (see Section~\ref{sec:id}), and two semi-real datasets: Indian Pines and Salinas-A scene.
The results are shown in Figures~\ref{fig:R2f_IP} and~\ref{fig:R2f_Sal}, respectively.
	
\begin{figure}[htb!]
\begin{minipage}[b]{.49\linewidth}
 \centering
 \centerline{\includegraphics[width=3.9cm]{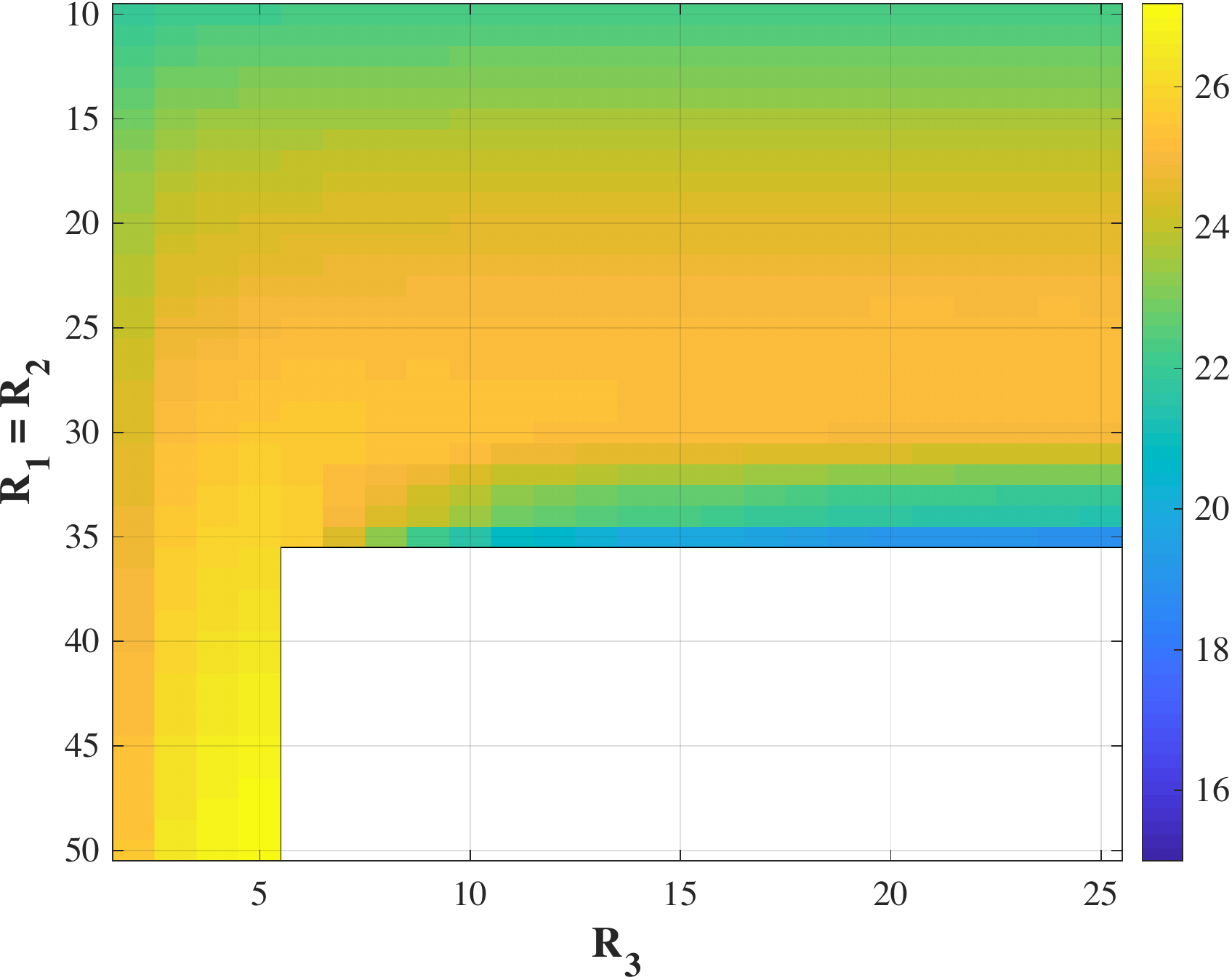}}
\end{minipage}
\hfill
\begin{minipage}[b]{0.49\linewidth}
 \centering
 \centerline{\includegraphics[width=3.9cm]{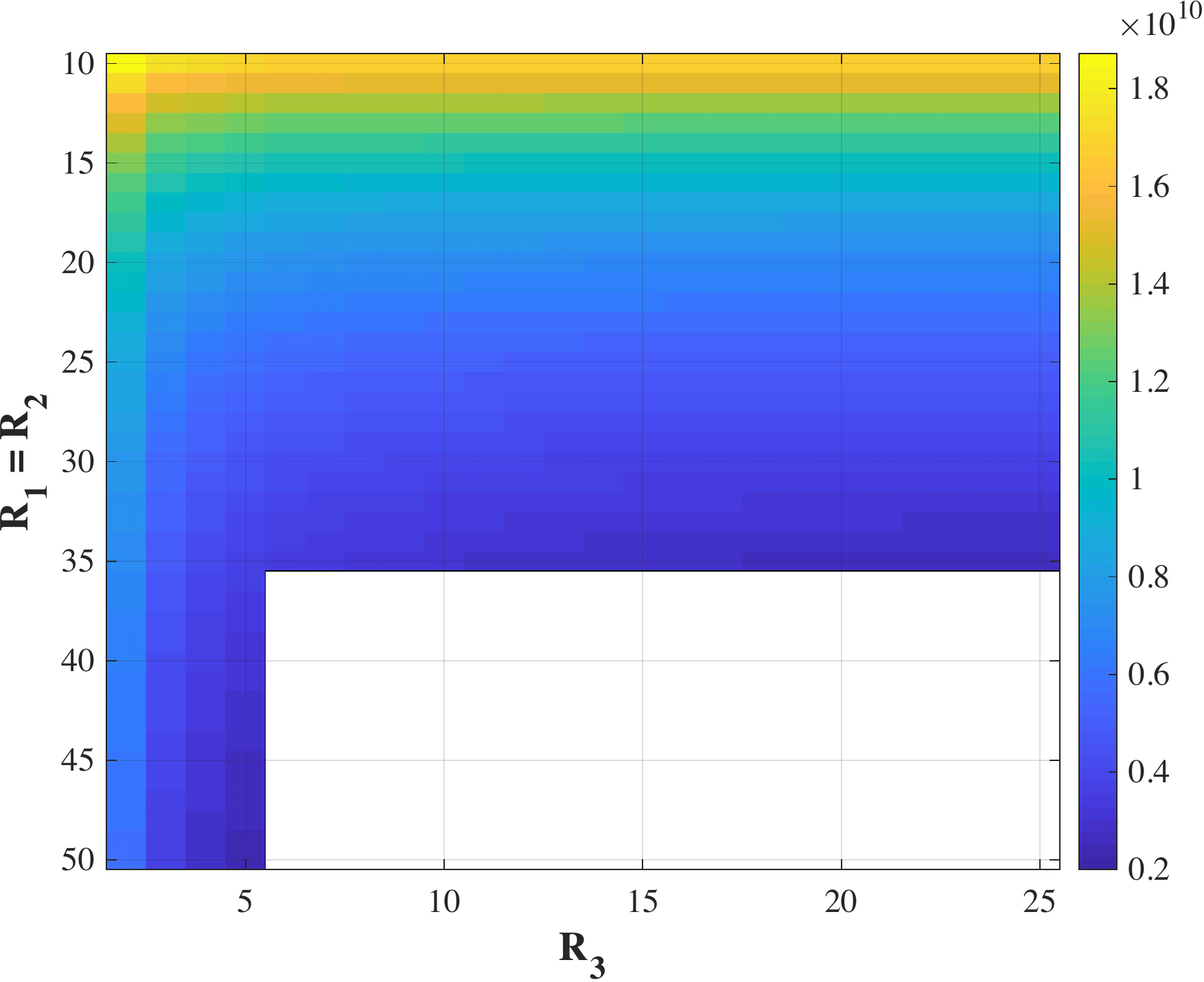}}
\end{minipage}
 \vspace{-0.4cm}
\centering
\caption{\,\!\!\!R-SNR (left) and $f_T$\! (right) as functions of $R_1$ and $R_3$, Indian~Pines}
\label{fig:R2f_IP}
\end{figure}

\begin{figure}[htb!]
\begin{minipage}[b]{.49\linewidth}
 \centering
 \centerline{\includegraphics[width=3.85cm]{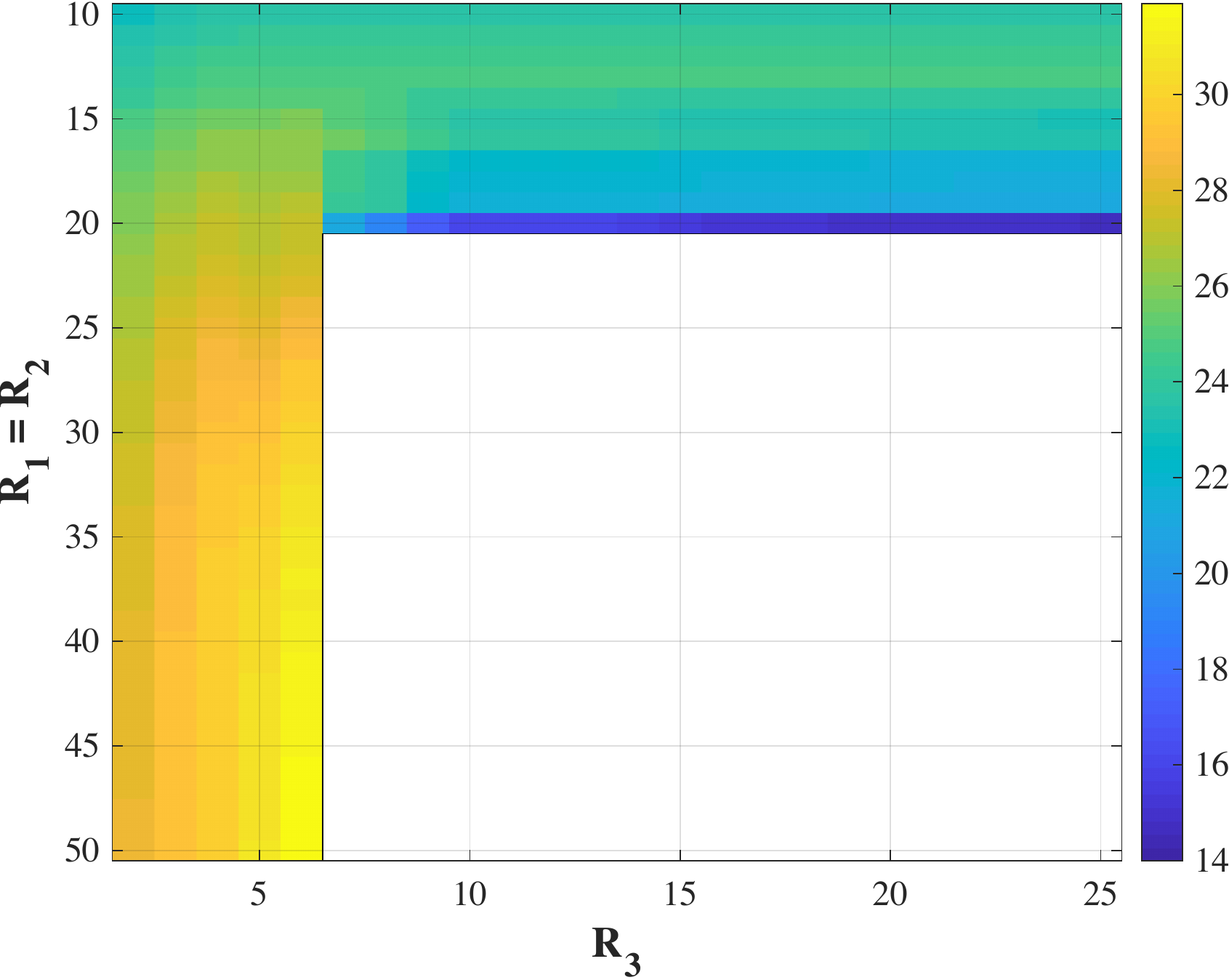}}
\end{minipage}
\hfill
\begin{minipage}[b]{0.49\linewidth}
 \centering
 \centerline{\includegraphics[width=4cm]{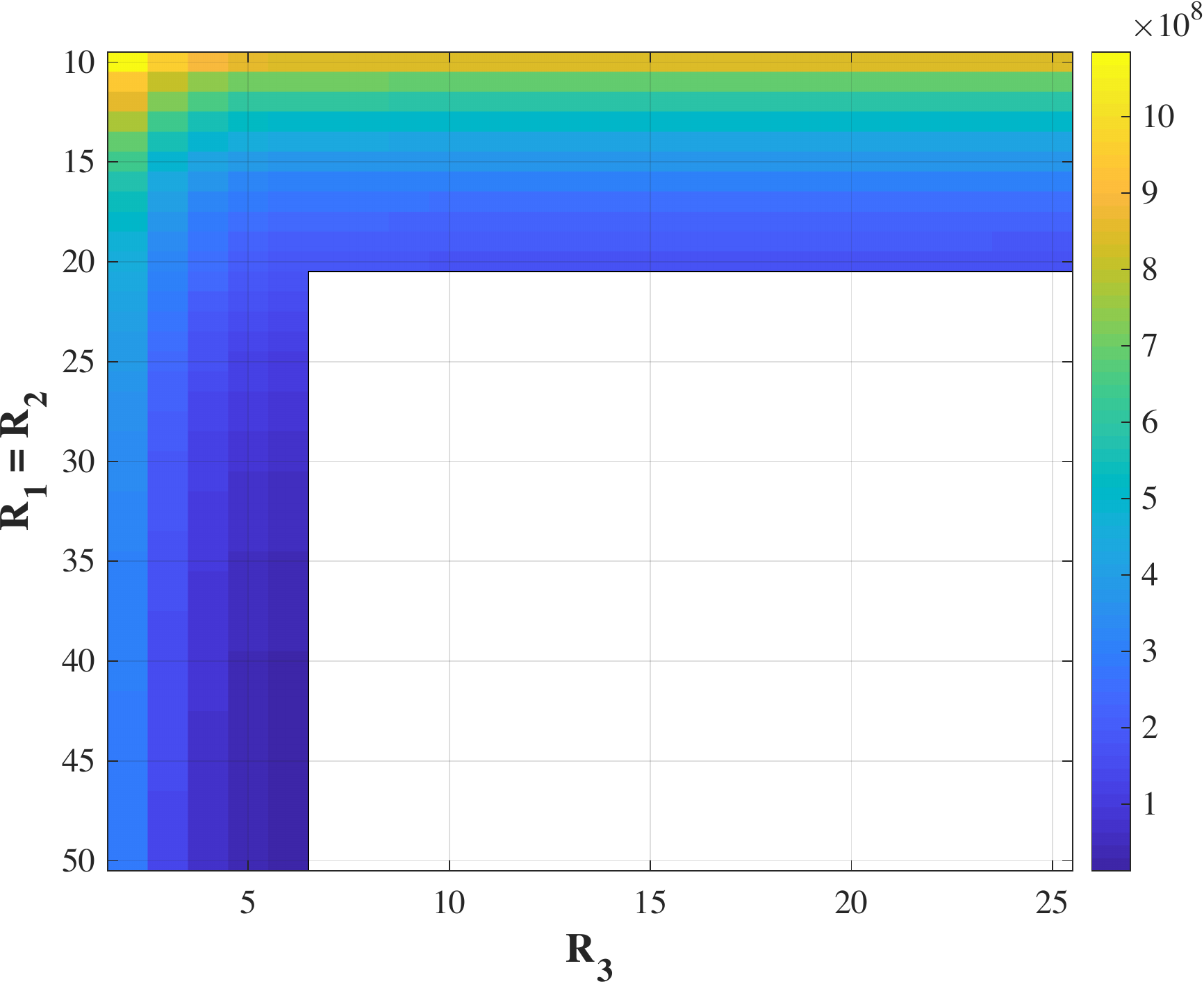}}
\end{minipage}
 \vspace{-0.4cm}
\centering
\caption{\,\!\!\!R-SNR (left) and $f_T$ (right) as functions of $R_1$ and $R_3$, Salinas-A}
\label{fig:R2f_Sal}
\end{figure}

While the cost function decreases as $R_1$ and $R_3$ increase, the best reconstruction error (given by R-SNR)  is achieved in one of the two recoverability subregions in Fig.~\ref{fig:identifiability_region}: ($a$) ($R_3\geq K_M$ and $R_1 \leq I_H$) and ($b$) ($R_3\leq K_M$ and $R_1 \geq I_H$). 
For subregion ($b$), the best performance is achieved when $R_3 = K_M$ and $R_1$ as large as possible, while for subregion ($a$), we notice a sharp drop of the R-SNR near $R_1=I_H$.

The drop of the performance in subregion (a) can be explained by looking at the condition number of the matrix $\matr{X}^{\T}\matr{X}$ that is used to compute the core tensor $\widehat{\tens{G}}$.
For the subregion (a), due to properties of Kronecker products \cite[Theorems 13.12 and 13.16]{Laub04:siam}, we have that
\[
\begin{split}
&\cond{\matr{X}^{\T}\matr{X}}  := \frac{\sigma_{max}(\matr{X}^{\T}\matr{X})}{\sigma_{min}(\matr{X}^{\T}\matr{X})} \\
&=
 \frac{\lambda\sigma^{2}_{max}(\matr{P}_M \widehat{\matr{W}}) + \sigma^2_{max}(\matr{P}_1 \widehat{\matr{U}})  \sigma^2_{max} (\matr{P}_2 \widehat{\matr{V}})}{\sigma^2_{min}(\matr{P}_1 \widehat{\matr{U}})  \sigma^2_{min} (\matr{P}_2 \widehat{\matr{V}})}.
\end{split}
\]
Note that $\sigma_{max}(\matr{P}_M \widehat{\matr{W}})$ does not decrease when we increase $R_3$ and $R_3 \le K_M$.
Hence we can get a lower bound on $\cond{\matr{X}^{\T}\matr{X}}$ by setting $R_3 = K_M$.

\begin{figure}[htb!]
\begin{minipage}[b]{0.49\linewidth}
 \centering
 {\includegraphics[height=2.5cm]{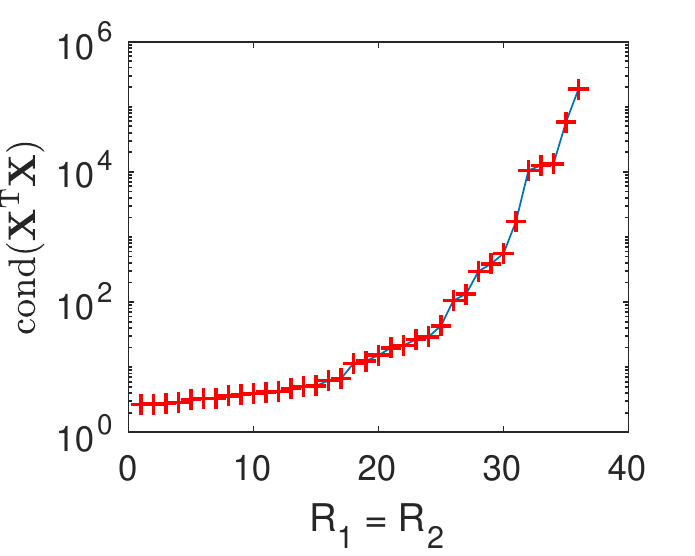}}
\end{minipage}
\hfill
\begin{minipage}[b]{0.49\linewidth}
 \centering
 {\includegraphics[height=2.5cm]{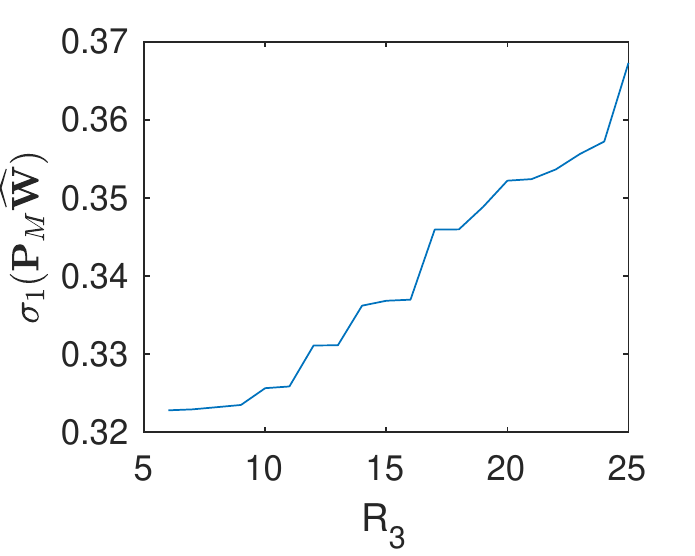}}
\end{minipage}
\caption{Left: $\log(\cond{\matr{X}^{\T}\matr{X}})$ depending on $R_1=R_2$ for $R_3 = K_M$; right:  $\sigma_1(\matr{P}_M \widehat{\matr{W}})$, Indian~Pines}
\label{fig:cond_IP}
\end{figure}

In Figure~\ref{fig:cond_IP}, for the Indian Pines dataset we plot on a semi-log scale the lower bound $\cond{\matr{X}^{\T}\matr{X}}$ as functions of $R_1 = R_2$,
for $R_3 = 6$ as well as $\sigma_{max}(\matr{P}_M \widehat{\matr{W}})$; since the latter almost does not change, the lower bound is tight.
In Figure~\ref{fig:cond_IP}, we see that there is a highest relative increase of the condition number around $R_1=R_2 = 32$, 
which coincides with the point of the performance drop in Figure~\ref{fig:R2f_IP}.
Similar behaviour can be observed for the Salinas dataset on Figure~\ref{fig:cond_sal}. 

\begin{figure}[htb!]
\begin{minipage}[b]{0.49\linewidth}
 \centering
 {\includegraphics[height=2.5cm]{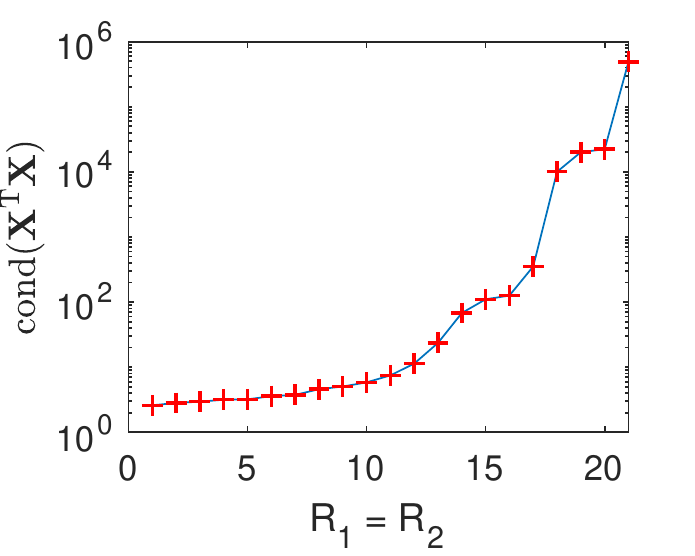}}
\end{minipage}
\hfill
\begin{minipage}[b]{0.49\linewidth}
 \centering
 {\includegraphics[height=2.5cm]{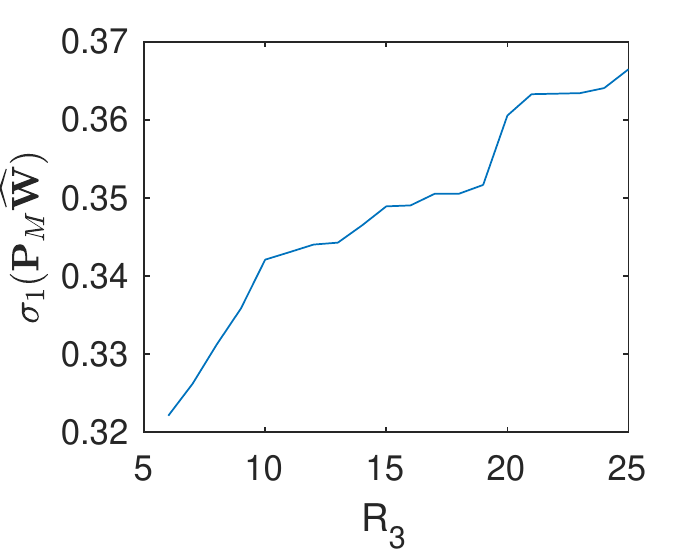}}
\end{minipage}
\caption{Left: $\log(\cond{\matr{X}^{\T}\matr{X}})$ depending on $R_1=R_2$ for $R_3 = K_M$; right:  $\sigma_1(\matr{P}_M \widehat{\matr{W}})$, Salinas-A scene}
\label{fig:cond_sal}
\end{figure}

All in all, based on the above examples, we can conclude  that if we are in subregion (b), the $R_3$ should be taken as large as possible ($R_3 = K_M$), while in the subregion (a) $R_1$, $R_2$ should be taken as large as possible while maintaining the condition number to a reasonable value.


\subsection{Recovery of spectral signatures}

Since correct recovery of spectral signatures is quite important for further processing of hyperspectral images, we would like see whether  \OurAlgo is able to do that.
We consider the Indian Pines dataset, where  groundtruth data  (see Fig. \ref{fig:gt_ip})  is available, splitting the image into 16 regions.
We will consider three representative ranks: $[40,40,6]$, $[30,30,16]$,  and $[24,24,25]$, and compare them to STEREO ($F=100$).

\begin{figure}[htb]
\begin{minipage}[b]{1.0\linewidth}
  \centering
  \centerline{\includegraphics[trim = 1.5cm 1cm 1cm 1cm, clip, width=5.6cm]{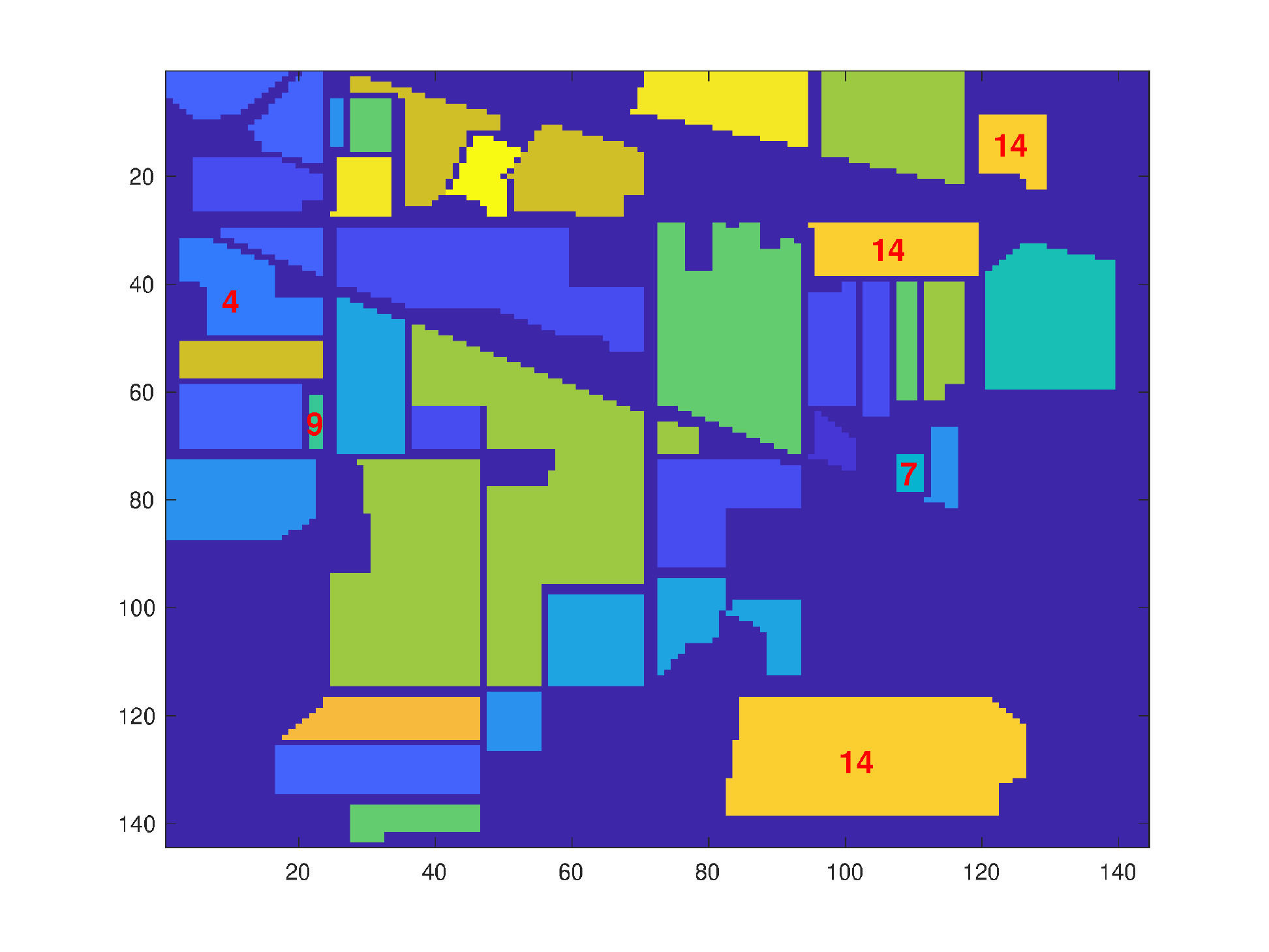}}
\end{minipage}
\vspace{-0.7cm}
\caption{Groundtruth image for Indian Pines dataset. Materials 4,7,9,14 are marked in red.}
\label{fig:gt_ip}
\end{figure}

We do not perform a proper hyperspectral unmixing, and compute the spectral signatures by averaging across the regions. 
We selected four representative signatures corresponding to materials 4,7,9 and 14, which are plotted in Figure~\ref{fig:spec1}. 
Note that materials 7 and 9 are scarce in the groundtruth SRI (resp. 28 and 20 pixels), whereas materials 4 and 14 are more abundant (resp. 237 and 1265 pixels).

\begin{figure}[htb!]
\begin{minipage}[b]{1.0\linewidth}
  \centering
  \centerline{\includegraphics[width=5.2cm]{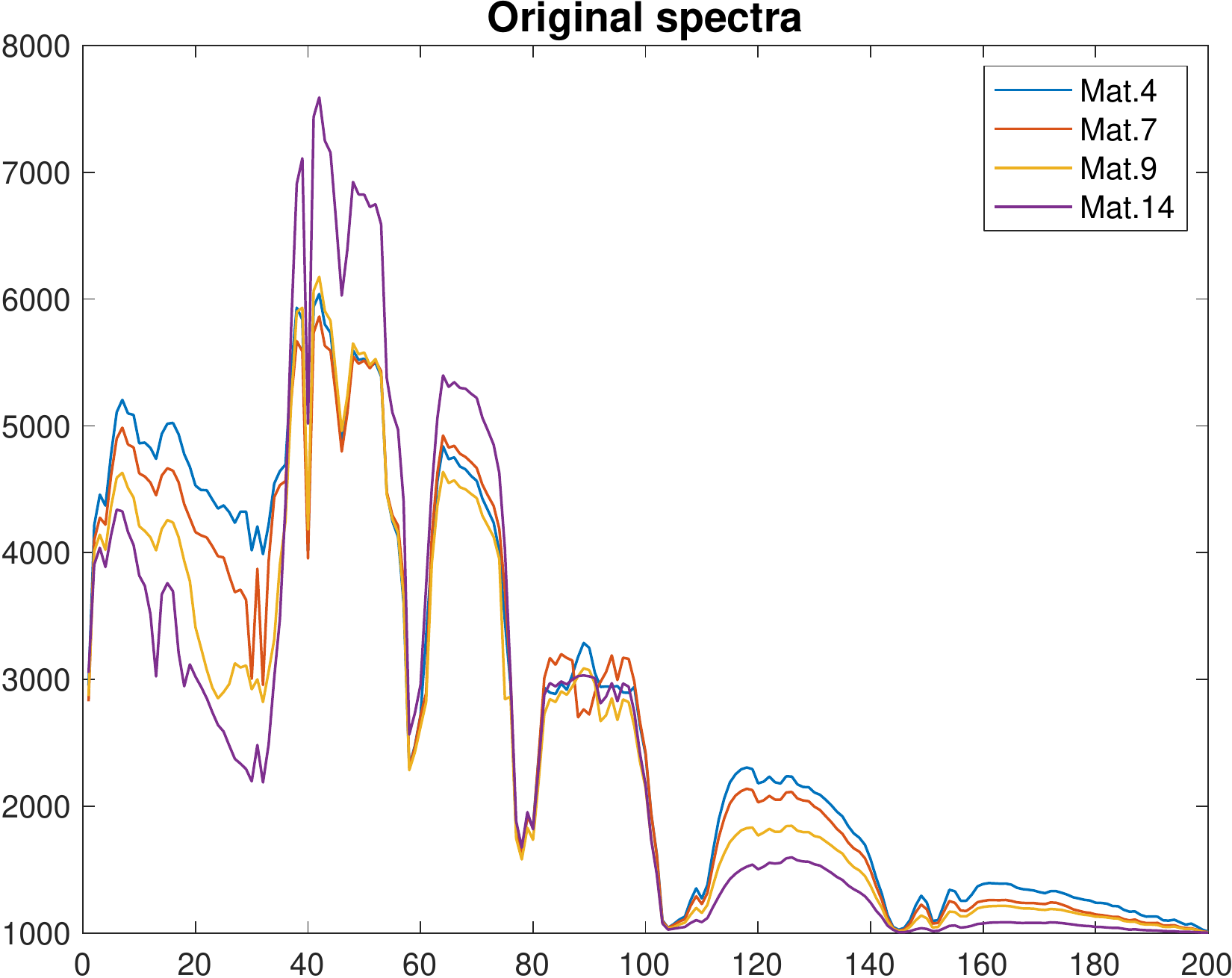}}
\end{minipage}
\label{fig:spec1}
\vspace{-0.5cm}
\caption{Original spectral signature for materials 4,7,9 and 14}
\end{figure}

In Figure~\ref{fig:exp2_allres} we plot relative errors of the reconstruction of spectra by different methods.
As expected, for materials 7 and 9, the discrepancy between the original spectra and the spectra obtained from estimated SRI is bigger than for materials 4 and 14.
This can be explained by the scarcity of sources 7 and 9  compared to sources 4 and 14.

\begin{figure}[htb]
\begin{minipage}[b]{1.0\linewidth}
  \centering
  \includegraphics[width=5.5cm]{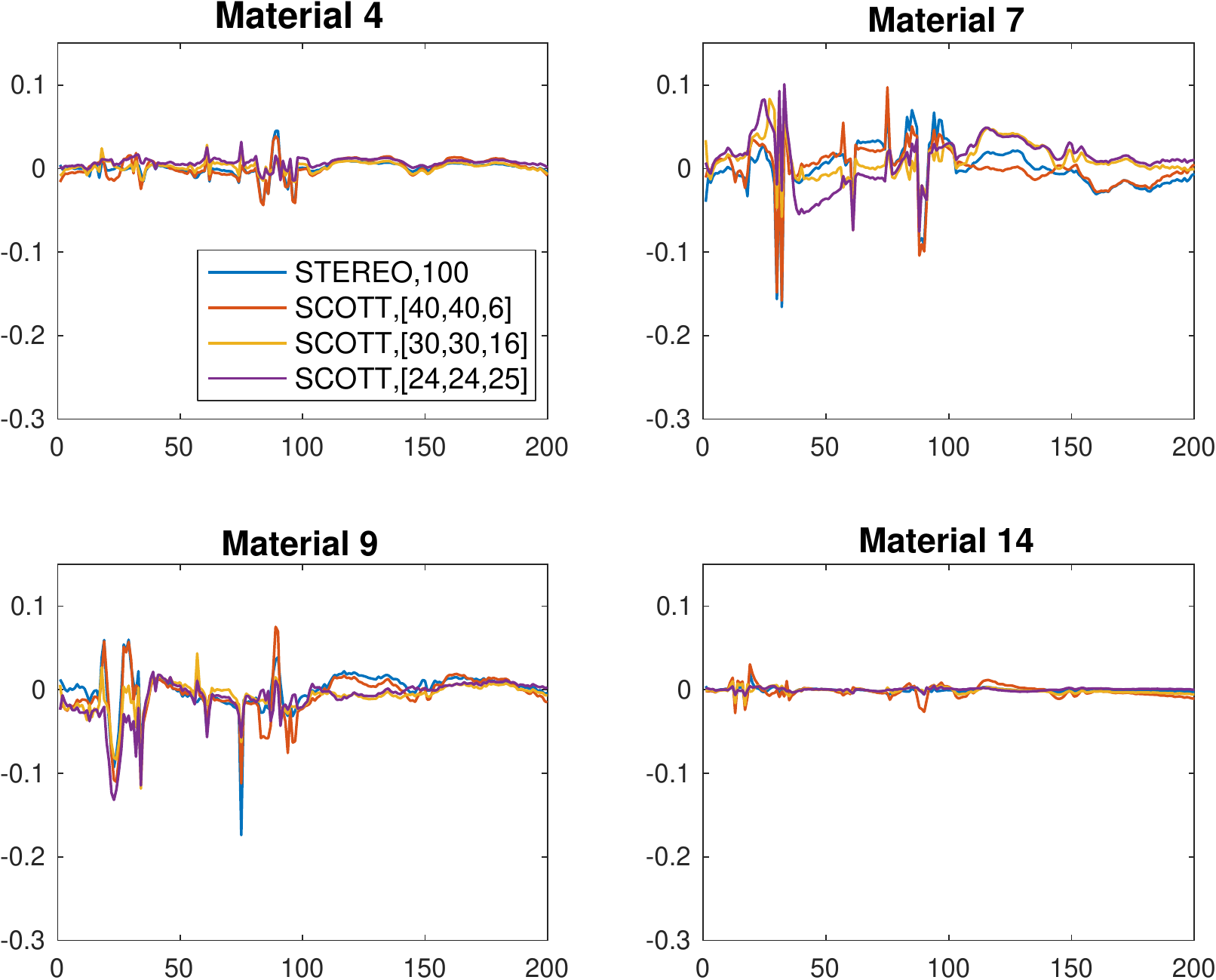}
   \caption{Residual errors for the three considered ranks and four materials}\medskip
     \label{fig:exp2_allres}
  \end{minipage}
\end{figure}

In Figure~\ref{fig:exp2_allspec}, we have a closer look at the spectra at spectral bins 80 to 100.
We can see that for abundant materials (4 and 14), all the algorithms estimate well the spectra.
For the scarce materials it is important to choose the rank large enough, in particular $R_3 = 16$ and $R_3 = 25$ yield better reconstruction than $R_3 =6$, and also than STEREO, even with $F=100$.

\begin{figure}[htb]
\begin{minipage}[b]{1.0\linewidth}
  \centering
  \centerline{\includegraphics[width=5.5cm]{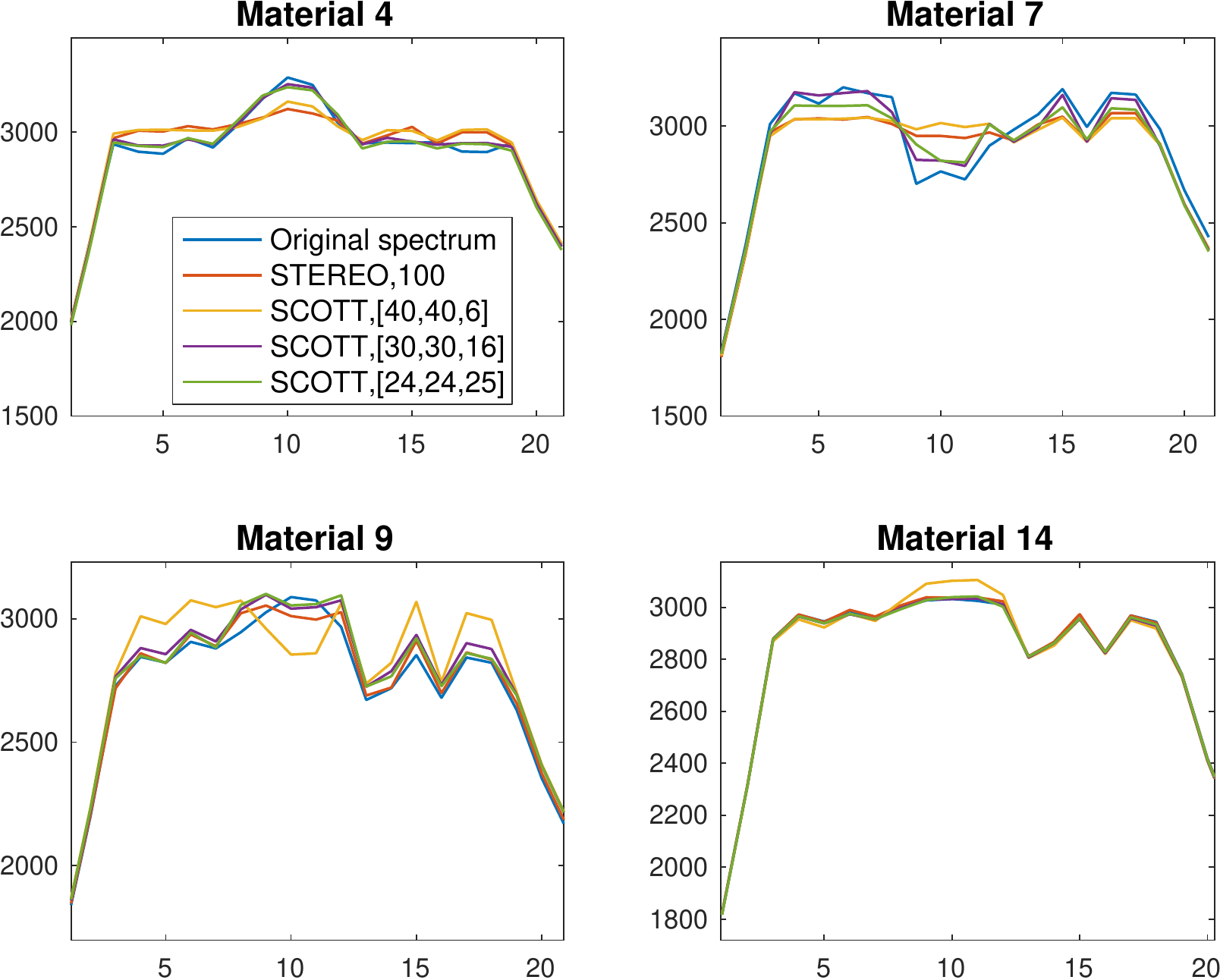}}
\end{minipage}
\caption{Materials at spectral bins  80 to  100.  Groundtruth (\emph{black}), \OurAlgo $(40,40,6)$ (\emph{red}), \OurAlgo $[30,30,16]$ (\emph{yellow}), \OurAlgo $(24,24,25)$ (\emph{purple}), STEREO $F=100$ (\emph{green}).}
\label{fig:exp2_allspec}
\end{figure}

\section{CONCLUSION}
In this paper, we proposed a novel coupled Tucker model for hyperspectral superresolution.
We showed that the model is \emph{recoverable}, that is, it allows for an unique recovery of the SRI for a wide range of multilinear ranks. We proposed two very simple SVD-based algorithms that can be used for the super-resolution problem,  for known and unknown degradation operators, and for the case of pansharpening.
The algorithms are very fast, but produce the results that are comparable with the CP-based approaches.
This work opens new perspectives on using various tensor factorizations for hyperspectral super-resolution.
Still several interesting questions remain,  for example,  how to enlarge the recoverability range for the multilinear rank.
Estimating the multilinear rank of the Tucker decomposition still remains an open problem; the question of optimal splitting the data into non-overlapping subtensors in B-SCOTT also needs to be further investigated.

\appendices
\section{Solving normal equations as generalized Sylvester equations} \label{app:1}

Equation~(\ref{eq:ls_core}) can be seen as a generalized Sylvester equation of the form 
\begin{equation}\label{eq:app}
\mA\widehat{\matr{G}}\mB + \mC\widehat{\matr{G}}\matr{D} = \matr{E},
\end{equation}
where $\matr{G}$ is an unfolding of $\widehat{\tens{G}}$.

We propose two options for converting~(\ref{eq:ls_core}) into~(\ref{eq:app}).
In the first case, $\widehat{\matr{G}} = \unfold{\tG}{3} \in\RR^{R_1R_2\times R_3}$,
\begin{align*}
& \matr{A} = \left(\matr{U}^{\T}\matr{P}_1^{\T}\matr{P}_1\mU\right)\kron\left( \matr{V}^{\T}\matr{P}_2^{\T}\matr{P}_2\mV\right), \quad \matr{B} = \matr{I}_{R_3}, \\
& \matr{C} = \matr{I}_{R_1R_2},  \quad \matr{D} = \lambda\left( \matr{W}^{\T}\matr{P}_M^{\T}\matr{P}_M\mW\right),
\end{align*}
and $\matr{E}\in\RR^{R_1R_2\times R_3}$ is a matricization of $\matr{X}^{\T}\vect{z}$.

In the second case, $\widehat{\matr{G}} = {\unfold{\tG}{1}}^{\T} \in\RR^{R_1\times R_2R_3}$,
\begin{align*}
& \matr{A} = \matr{U}^{\T}\matr{P}_1^{\T}\matr{P}_1\mU, \quad \matr{B} = \matr{I}_{R_3}\kron\left( \matr{V}^{\T}\matr{P}_2^{\T}\matr{P}_2\mV\right), \\
& \matr{C} = \matr{I}_{R_1}, \quad \matr{D} = \lambda\left( \matr{W}^{\T}\matr{P}_M^{\T}\matr{P}_M\mW\right)\kron \matr{I}_{R_2},
\end{align*}
and $\matr{E}\in\RR^{R_1\times R_2R_3}$ is a matricization of $\matr{X}^{\T}\vect{z}$.

The two options are equivalent and the fastest one is chosen according to the multilinear rank.
As a rule of thumb, we decide to choose the first option in subregion $(a)$ of Figure~\ref{fig:identifiability_region} and the second option in subregion $(b)$.
The complexity for solving the generalized Sylvester equation~(\ref{eq:app}) is thus $O(m^3 +  n^3)$ flops for $\widehat{\matr{G}} \in \RR^{m\times n}$ if fast solvers, such as Hessenberg-Schur or Bartels-Stewart methods \cite{BartelsS72:syl}, \cite{GolubNV79:syl},  \cite{Simo16:sirev},  are used.

\section{Degradation matrices} \label{app:2}

Here, we explain in details how the degradation matrices are constructed. 
For this appendix, we consider that $\matr{P}_1 = \matr{P}_2$.
As in \cite{KanatsoulisFSM:hsr}, $\matr{P}_1$ is constructed as $\matr{P}_1  = \matr{S}_1\matr{T}_1$, where $\matr{T}_1$ is a blurring matrix and $\matr{S}_1$ is a downsampling matrix.

The blurring matrix is constructed from a Gaussian blurring kernel $\matr{\phi} \in\RR^{q\times 1}$ (in our case, $q=9$) with a standard deviation $\sigma$. For $m=1,\ldots,q$ and $m' = m- \left \lceil{\frac{q}{2}}\right \rceil $, we have
\begin{equation*}
\phi (m) = \frac{1}{\sqrt{2\pi\sigma^2}}\exp\left( \frac{-m'^2}{2\sigma^2}\right).
\end{equation*}
Thus, $\matr{T}_1 \in \RR^{I\times I}$ can be seen as

\begin{equation*}
\matr{T}_1 = \left[\begin{smallmatrix} \phi (\left \lceil{\frac{q}{2}}\right \rceil ) & \ldots & \phi (q) & 0 & \ldots & 0 \\ 
\vdots & \ddots & & \ddots & \ddots & \vdots\\
\phi (1) & & \ddots & & \ddots & 0 \\
0 & \ddots & & \ddots & & \phi (q)\\
\vdots & \ddots & \ddots & &\ddots & \vdots \\
0 & \ldots  & 0 & \phi (1) & \ldots & \phi (\left \lceil{\frac{q}{2}}\right \rceil ) \end{smallmatrix}\right].
\end{equation*}

The downsampling matrix $\matr{S}_1 \in \RR^{I_H\times I}$, with downsampling ratio $d$, is made of $I_H$ independant rows such that for $i=1,\ldots, I_H$, $(\matr{S}_1)_{i,2+(i-1)d} = 1$ and the other coefficients are zeros.

The spectral degradation matrix $\matr{P}_M \in \RR^{K_M\times K}$ is a selection-averaging matrix, Each row represents a spectral range in the MSI; coefficients are set to ones for common bands with the SRI, and zeros elsewhere. 
The coefficients are averaged per-row.
Below, we give an example of a $2\times 6$ matrix:
\begin{equation*}
\begin{bmatrix} 0 & \frac{1}{3} & \frac{1}{3} & \frac{1}{3} & 0 & 0 \\
0 & 0 & 0 & 0 & \frac{1}{2} & \frac{1}{2} \end{bmatrix}.
\end{equation*}



\bibliographystyle{IEEEbib}
\bibliography{HSR}

\begin{thebibliography}{10}

\bibitem{YokoyaGC17:hsr}
N.~Yokoya, C.~Grohnfeldt, and J.~Chanussot,
\newblock ``Hyperspectral and multispectral data fusion: A comparative review
  of the recent literature,''
\newblock {\em {IEEE} Trans. Geosci. Remote Sens.}, vol. 5, no. 2, pp. 29--56,
  2017.

\bibitem{LoABB15:pansharpening}
L.~Loncan, L.~B. de~Almeida, J.~M. Bioucas-Dias, X.~Briottet, J.~Chanussot,
  N.~Dobigeon, S.~Fabre, W.~Liao, G.~A. Licciardi, M.~Simoes, J.~Tourneret,
  M.~A. Veganzones, G.~Vivone, Q.~Wei, and N.~Yokoya,
\newblock ``Hyperspectral pansharpening: A review,''
\newblock {\em IEEE Trans. Geosci. Remote Sens.}, vol. 3, no. 3, pp. 27--46,
  2015.

\bibitem{Aiazzi11:hsr}
B.~Aiazzi, L.~Alparone, S.~Baronti, A.~Garzelli, M.~Selva, and C.~Chen,
\newblock ``25 years of pansharpening: a critical review and new
  developments,''
\newblock {\em Signal and Image Process. for Remote Sens.}, pp. 533--548, 2011.

\bibitem{YokoYI12:cnmf}
N.~Yokoya, T.~Yairi, and A.~Iwasaki,
\newblock ``Coupled nonnegative matrix factorization unmixing for hyperspectral
  and multispectral data fusion,''
\newblock {\em {IEEE} Trans. Geosci. Remote Sens.}, vol. 50, no. 2, pp.
  528--537, 2012.

\bibitem{WeiBT15:hsr}
Q.~Wei, J.~Bioucas-Dias, N.~Dobigeon, and J.-Y. Tourneret,
\newblock ``Hyperspectral and multispectral image fusion based on a sparse
  representation,''
\newblock {\em {IEEE} Trans. Geosci. Remote Sens.}, vol. 53, no. 7, pp.
  3658--3668, 2015.

\bibitem{SimoesBAC15:hsr}
M.~Simo{\~e}s, J.~Bioucas-Dias, L.~B. Almeida, and J.~Chanussot,
\newblock ``A convex formulation for hyperspectral image superresolution via
  subspace-based regularization,''
\newblock {\em {IEEE} Trans. Geosci. Remote Sens.}, vol. 53, no. 6, pp.
  3373–3388, 2015.

\bibitem{WeiBT16:hsr}
Q.~Wei, J.~Bioucas-Dias, N.~Dobigeon, and J.-Y. Tourneret,
\newblock ``Multiband image fusion based on spectral unmixing,''
\newblock {\em {IEEE} Trans. Geosci. Remote Sens.}, vol. 54, no. 12, pp.
  7236--7249, 2016.

\bibitem{li18:hsr}
Q.~Li, W.-K. Ma, and Q.~Wu,
\newblock ``Hyperspectral super-resolution: Exact recovery in polynomial
  time,''
\newblock {\em 2018 IEEE Statistical Signal Processing Workshop (SSP)}, pp.
  378--382, 2018.

\bibitem{KanatsoulisFSM:hsr}
C.~I. Kanatsoulis, X.~Fu, N.~D. Sidiropoulos, and W.-K. Ma,
\newblock ``Hyperspectral super-resolution: A coupled tensor factorization
  approach,''
\newblock {\em {IEEE} Trans. Signal Process.}, vol. 66, no. 24, pp. 6503--6517,
  2018.

\bibitem{HatvaniBTGK18:cpd}
J.~Hatvani, A.~Basarab, J.-Y. Tourneret, M.~Gy{\"o}ngy, and D.~Kouam{\'e},
\newblock ``{A Tensor Factorization Method for 3D Super-Resolution with
  Application to Dental CT},''
\newblock {\em {IEEE} Trans. Med. Imag.}, vol. 38, no. 6, pp. 1524--1531, 2019.

\bibitem{KanaFSM18:scuba}
C.~I. Kanatsoulis, X.~Fu, N.~D. Sidiropoulos, and W.~Ma,
\newblock ``Hyperspectral super-resolution: Combining low rank tensor and
  matrix structure,''
\newblock in {\em 2018 25th IEEE International Conf. on Image Process. (ICIP)},
  Oct 2018, pp. 3318--3322.

\bibitem{DeLaDMV00:hosvd}
L.~De~Lathauwer, B.~De~Moor, and J.~Vandewalle,
\newblock ``A multilinear singular value decomposition,''
\newblock {\em SIAM J. Matrix Anal. Appl.}, vol. 21, no. 4, pp. 1253--1278,
  2000.

\bibitem{He14:pansharpening}
X.~{He}, L.~{Condat}, J.~M. {Bioucas-Dias}, J.~{Chanussot}, and J.~{Xia},
\newblock ``A new pansharpening method based on spatial and spectral sparsity
  priors,''
\newblock {\em IEEE Trans. on Image Process.}, vol. 23, no. 9, pp. 4160--4174,
  2014.

\bibitem{PrevostUCB:icassp}
C.~Pr\'evost, K.~Usevich, P.~Comon, and D.~Brie,
\newblock ``Coupled tensor low-rank multilinear approximation for hyperspectral
  super-resolution,''
\newblock in {\em 2019 IEEE Int. Conf. on Acoust., Speech and Sign. Process.
  (ICASSP)}, May 2019, pp. 5536--5540.

\bibitem{Comon14:spmag}
P.~Comon,
\newblock ``Tensors: A brief introduction,''
\newblock {\em IEEE Signal Process. Mag.}, vol. 31, no. 3, pp. 44--53, 2014.

\bibitem{KoldaB09:sirev}
T.G. Kolda and B.W. Bader,
\newblock ``Tensor decompositions and applications,''
\newblock {\em SIAM Review}, vol. 51, no. 3, pp. 455--500, 2009.

\bibitem{WaldRM97:hsr}
L.~Wald, T.~Ranchin, and M.~Mangolini,
\newblock ``Fusion of satellite images of different spatial resolutions:
  Assessing the quality of resulting images,''
\newblock {\em Photogramm. Eng. Rem. S.}, vol. 63, no. 6, pp. 691--699, 1997.

\bibitem{BartelsS72:syl}
R.H. Bartels and G.W. Stewart,
\newblock ``Solution of the matrix equation {AX+XB=C},''
\newblock {\em Commun. {ACM}}, vol. 15, no. 9, pp. 820--826, 1972.

\bibitem{GolubNV79:syl}
G.~Golub, S.~Nash, and C.~Van Loan,
\newblock ``A {H}essenberg-{S}chur method for the problem {AX+XB=C},''
\newblock {\em {IEEE} Trans. Autom. Control}, vol. 24, no. 6, pp. 909--913,
  1979.

\bibitem{QiCL16:semialgebraic}
Y.~Qi, P.~Comon, and Lim L-H.,
\newblock ``Semialgebraic geometry of nonnegative tensor rank,''
\newblock {\em SIAM J. Matrix Anal. Appl.}, vol. 37, no. 4, pp. 1556--1580,
  2016.

\bibitem{DianFL17:tuck}
R.~Dian, L.~Fang, and S.~Li,
\newblock ``Hyperspectral image super-resolution via non-local sparse tensor
  factorization,''
\newblock {\em {IEEE} Conf. on Comput. Vision and Pattern Recogn. (CVPR)}, p.
  5344– 5353, 2017.

\bibitem{Laub04:siam}
A.J. Laub,
\newblock {\em Matrix Analysis For Scientists And Engineers},
\newblock Society for Industrial and Applied Mathematics, Philadelphia, PA,
  USA, 2004.

\bibitem{GuoMBS12:tens}
X.~Guo, S.~Miron, D.~Brie, and A.~Stegeman,
\newblock ``Uni-mode and partial uniqueness conditions for {CANDECOMP/PARAFAC}
  of three-way arrays with linearly dependent loadings,''
\newblock {\em SIAM J. Matrix Anal. Appl.}, vol. 33, no. 1, pp. 111--129, 2012.

\bibitem{DL08:tens}
L.~De Lathauwer,
\newblock ``Decompositions of a higher-order tensor in block terms—part {II}:
  Definitions and uniqueness,''
\newblock {\em SIAM J. Matrix Anal. Appl.}, vol. 30, no. 3, pp. 1033--1066,
  2008.

\bibitem{Kres13:completion}
D.~Kressner, M.~Steinlechner, and B.~Vandereycken,
\newblock ``Low-rank tensor completion by {R}iemannian optimization,''
\newblock {\em BIT Numer. Math.}, vol. 54, no. 2, pp. 447--468, 2014.

\bibitem{Kana19:fmri}
C.~I. {Kanatsoulis}, X.~{Fu}, N.~D. {Sidiropoulos}, and M.~{Ak{\c{c}}akaya},
\newblock ``{Tensor Completion from Regular Sub-Nyquist Samples},''
\newblock {\em arXiv e-prints}, p. arXiv:1903.00435, Mar 2019.

\bibitem{tensorlab}
N.~Vervliet, O.~Debals, L.~Sorber, M.~Van~Barel, and L.~De~Lathauwer,
\newblock ``Tensorlab 3.0,'' Mar. 2016,
\newblock Available online.

\bibitem{Landsat18:data}
``Grupo de inteligencia computacional (gic),''
  \url{http://www.ehu.eus/ccwintco/index.php/Hyperspectral_Remote_Sensing_Scenes},
\newblock accessed: 2018-10-23.

\bibitem{SanchezK}
E.~Sanchez and B.R. Kowalski,
\newblock ``Tensorial resolution: a direct trilinear decomposition,''
\newblock {\em J. Chemom.}, vol. 4, no. 1, pp. 29--45, 1990.

\bibitem{Kanatsoulis:code}
``{HSR} via tensor decomposition (github repository),''
  \url{https://github.com/marhar19/HSR_via_tensor_decomposition},
\newblock accessed: 2019-04-11.

\bibitem{WeiDT15:hsr}
Q.~Wei, N.~Dobigeon, and J.-Y. Tourneret,
\newblock ``Fast fusion of multi-band images based on solving a {S}ylvester
  equation,''
\newblock {\em {IEEE} Trans. Image Process.}, vol. 24, no. 11, pp. 4109–4121,
  2015.

\bibitem{Comon09:tens}
P.~Comon,
\newblock ``Tensors versus matrices, usefulness and unexpected properties,''
\newblock {\em {IEEE} Workshop on Stat. Signal Process.}, pp. 780--788, Sep
  2009.

\bibitem{ComonGLM08:tens}
P.~Comon, G.~Golub, L-H. Lim, and B.~Mourrain,
\newblock ``Symmetric tensors and symmetric tensor rank,''
\newblock {\em {SIAM} J. Matrix Anal. Appl., Soc. Ind. Appl. Math.}, vol. 30,
  no. 3, pp. 1254--1279, 2008.

\bibitem{BrachatCMT09:tens}
J.~Brachat, P.~Comon, B.~Mourrain, and E.P. Tsigaridas,
\newblock ``Symmetric tensor decomposition,''
\newblock {\em 17th European Signal Process. Conf. 2009}, pp. 525--529, Aug
  2009.

\bibitem{Como02:oxford}
P.~Comon,
\newblock ``Tensor decompositions, state of the art and applications,''
\newblock in {\em Mathematics in Signal Processing {V}}, J.~G. McWhirter and
  I.~K. Proudler, Eds., pp. 1--24. Clarendon Press, Oxford, UK, 2002,
\newblock available at arxiv:0905.0454.

\bibitem{Simo16:sirev}
V.~Simoncini,
\newblock ``Computational methods for linear matrix equations,''
\newblock {\em SIAM Review}, vol. 58, no. 3, pp. 377--441, 2016.

\end{thebibliography}

\end{document}